\documentclass[noinfoline]{imsart}\setcounter{tocdepth}{2}

\RequirePackage[OT1]{fontenc}
\RequirePackage{amsthm,amsmath}
\RequirePackage[numbers]{natbib}
\RequirePackage[colorlinks,citecolor=blue,urlcolor=blue]{hyperref}

\usepackage{verbatim}
\usepackage{xcolor}

\allowdisplaybreaks

\usepackage{fullpage}
\usepackage{enumitem}
\usepackage{algorithm}
\usepackage{mathtools}
\usepackage{booktabs}

\newcommand {\R}{\mathbb{R}}

\renewcommand {\O}{\mathcal{O}}
\newcommand {\Prob}{\mathbb{P}}

\newcommand{\E}{\mathbb{E}}

\newcommand{\verti}[1]{{\left\vert\kern-0.25ex\left\vert\kern-0.25ex\left\vert #1 
    \right\vert\kern-0.25ex\right\vert\kern-0.25ex\right\vert}}

\newcommand{\argmin}{\operatornamewithlimits{arg\,min}}

\DeclareMathOperator{\diag}{diag}

\newcommand{\ip}[1]{{\left\langle\kern-0.25ex #1 \kern-0.25ex\right\rangle}}

\newcommand {\bigop}[1]{ \O_{\Prob}\left( #1 \right) }
\newcommand {\smallop}[1]{ o_{\Prob}\left( #1 \right) }
\newcommand {\Ez}[1]{ \E \left[ #1 \right] }

\newcommand {\Ltwosq}{ \mathcal{L}^2([0,1]^2) }
\DeclareMathOperator{\cov}{Cov}
\DeclareMathOperator{\var}{Var}
\DeclareMathOperator{\corr}{Corr}

\definecolor{tomas}{rgb}{0.8, 0.5, 0.2}
\definecolor{soham}{RGB}{0,167,159}
\definecolor{victor}{rgb}{0.8,0, 0}
\definecolor{rubin}{rgb}{0.5,0.3, 0.8}
\definecolor{lavender}{rgb}{0.45, 0.31, 0.59}

\usepackage{subfigure}
\usepackage{layout}

\usepackage{dsfont}
\usepackage[mathscr]{eucal}
\usepackage[toc,page]{appendix}
\usepackage{mathrsfs}
\usepackage{color}
\usepackage{pifont}
\usepackage{bm}
\usepackage{latexsym}
\usepackage{amsfonts}
\usepackage{amssymb}
\usepackage{epsfig}
\usepackage{graphicx}
\usepackage{multirow}
\usepackage{caption}
\captionsetup[figure]{font=small,labelfont={small,bf},textfont=it}
\captionsetup[algorithm]{font=small,labelfont={small,bf},textfont=it}

\newtheorem{theorem}{Theorem}
\newtheorem{proposition}{Proposition}
\newtheorem{corollary}{Corollary}
\newtheorem{lemma}{Lemma}

\newtheorem{remark}{Remark}
\newtheorem{example}{Example}

\newcommand{\transpose}{^{\top}}
\startlocaldefs
\numberwithin{equation}{section}
\theoremstyle{plain}
\endlocaldefs

\begin{document}

\begin{frontmatter}

\vspace*{-0.15cm}
\title{\LARGE Inference and Computation for Sparsely Sampled Random Surfaces}


\runtitle{Sparsely Observed Random Surfaces}

\begin{aug}
\author{\fnms{Tomas} \snm{Masak}},
\author{\fnms{Tomas} \snm{Rubin}}
\and
\author{\fnms{Victor M.} \snm{Panaretos*}}
\thankstext{t1}{Research supported by a Swiss National Science Foundation grant.}

\runauthor{T. Masak, T. Rubin \& V.M. Panaretos}

\affiliation{Ecole Polytechnique F\'ed\'erale de Lausanne}

\address{Institut de Math\'ematiques\\
Ecole Polytechnique F\'ed\'erale de Lausanne\\
e-mail: 
\href{mailto:tomas.masak@epfl.ch}{tomas.masak@epfl.ch}, 
\href{mailto:tomas.rubin@epfl.ch}{tomas.rubin@epfl.ch}, 
\href{mailto:victor.panaretos@epfl.ch}{victor.panaretos@epfl.ch}}

\end{aug}

\begin{abstract}
Non-parametric inference for functional data over two-dimensional domains entails additional computational and statistical challenges, compared to the one-dimensional case. Separability of the covariance is commonly assumed to address these issues in the densely observed regime. Instead, we consider the sparse regime, where the latent surfaces are observed only at few irregular locations with additive measurement error, and propose an estimator of covariance based on local linear smoothers. Consequently, the assumption of separability reduces the intrinsically four-dimensional smoothing problem into several two-dimensional smoothers and allows the proposed estimator to retain the classical minimax-optimal convergence rate for two-dimensional smoothers. Even when separability fails to hold, imposing it can be still advantageous as a form of regularization. A simulation study reveals a favorable bias-variance trade-off and massive speed-ups achieved by our approach. Finally, the proposed methodology is used for qualitative analysis of implied volatility surfaces corresponding to call options, and for prediction of the latent surfaces based on information from the entire data set, allowing for uncertainty quantification. Our cross-validated out-of-sample quantitative results show that the proposed methodology outperforms the common approach of pre-smoothing every implied volatility surface separately.

\end{abstract}

\begin{keyword}[class=AMS]
\kwd[Primary ]{62G05}
\kwd[; secondary ]{65P05}
\end{keyword}

\begin{keyword}
\kwd{Separability}
\kwd{implied volatility surfaces}
\kwd{functional data}
\end{keyword}

\end{frontmatter}

\vspace{-0.5cm}
\tableofcontents

\newpage
\section{Introduction}

The term \emph{random surfaces} refers to continuous data on a two-dimensional domain.
Such data sets consist of multiple independent replications of some underlying process $X = (X(t,s), t \in \mathcal{T}, s \in \mathcal{S})$, forming a random sample $X_1, \ldots, X_N$, and can be tackled in the context of functional data analysis (FDA), see \cite{ramsay2005,ramsay2007applied}.
In practice, surface-valued data cannot be observed as infinite-dimensional objects, one instead observes only a finite number of measurements per surface. The classical split in literature (see \cite{zhang2016} for a comprehensive overview) consists of two measurement regimes: the \textit{dense sampling} and the \textit{sparse sampling}. The former considers the setting where the surface data are recorder densely enough (usually on a grid) so they can be worked with as if they were truly infinite dimensional, possibly after a pre-smoothing step, while retaining the same asymptotic properties as truly infinite dimensional data \cite{hall2006properties}.
The sparse regime, however, requires a different approach. Here, the surfaces are observed only on a small (varying) number of irregular locations, which also vary across the surfaces and are corrupted with measurement errors. If such observations are gridded, which is common for computational reasons, the resulting matrix-valued samples contain many missing entries. Densely observed random surfaces arise frequently in medical imaging, linguistics \cite{pigoli2018}, or climate studies \cite{gneiting2006}.
Examples of sparsely observed random surfaces include longitudinal studies (where only a part of a functional profile is measured at each visit) \cite{lopez2020}, geolocalized data \cite{yarger2020,zhang2020, wang2020}, or financial data \cite{fengler2009arbitrage, kearney2018}.

Regardless of the sampling regime, any functional data analysis will begin by estimating two fundamental objects: the mean function $\mu(t,s)=\E X(t,s)$ and the covariance kernel $c(t,s,t',s')=\cov(X(t,s),X(t',s'))$. These ought to be estimated non-parametrically, since availability of replicated observations should allow so. While the mean has the same dimension as the underlying process itself, the covariance is an object of a higher dimension, and its estimation thus suffers from the curse of dimensionality. 
The size of a general covariance often poses both computational and statistical issues, e.g. restricting the resolution of the grid one can handle \cite{aston2017,masak2020}.

Separability of the covariance is a popular non-parametric assumption to reduce the statistical and computational burden, which arises when working on a multi-dimensional domain. Well-known and often coupled together with parametric models in the field of spatial statistics (see e.g. \cite{deIaco2002}, and references therein), separability has lately attracted a lot of attention also from the non-parametric viewpoint of FDA \cite{bagchi2017,constantinou2017,masak2019,dette2020}. Intuitively, it allows one to decouple the dimensions, regarding the random surfaces as an enlarged ensemble of random curves instead. Specifically, separability assumes that the full covariance can be written as a product of covariance kernels corresponding to individual dimensions, i.e.:
\begin{equation*}
    c(t,s,t',s') = a(t,t') b(s,s')
\end{equation*}
for some bivariate kernels $a=a(t,t')$ and $b=b(s,s')$. Thus separability decomposes the four-dimensional covariance kernel into two kernels, each of the same dimensionality as the mean function.

Though separability cannot always be justified in the context of applied problems \cite{rougier2017},  it is nevertheless frequently imposed because of the computational advantages it entails \cite{gneiting2006,genton2007,pigoli2018}. And, while wrongfully assuming separability introduces a bias, it may still lead to improved estimation and prediction due to an implicit bias-variance trade-off. What is certain is that separability greatly reduces computational costs of both the estimation and prediction tasks.

To the best of our knowledge, the aforementioned properties of separability have been mostly explored in the case of densely observed data. In the sparse regime, there is no procedure for the non-parametric estimation of a separable covariance. Methodology designed for the densely observed surfaces does not apply, and methods designed for sparsely observed curves cannot feasibly be extended to surfaces, as we will later argue. Since sparse data are generally associated with higher computational complexity (extra costs associated with smoothing) as well as higher statistical complexity (variance is inflated due to sparse measurements burdened by noise), it appears that the gains stemming from separability could be even larger in the sparse regime. 

The goal of this paper is to leverage the separability assumption to reduce complexity of covariance estimation down to that of mean estimation when working under the sparse regime. The method of choice for sparsely observed functional data on one-dimensional domains is the PACE approach \cite{yao2005}, which is based on kernel regression smoothers. A naive generalization of PACE to a two-dimensional domain would entail a computationally infeasible local linear smoothing step in four-dimensional space. Instead, we demonstrate how to make careful use of separability to collapse the four-dimensional smoother into several two-dimensional surface smoothers. Consequently, the estimation of the mean, covariance, and noise level all become of similar computational complexity. Our asymptotic theory shows that the estimators match the minimax optimal convergence rates for two dimensional smoothing problems. Moreover, our simulation study demonstrates the enormous computational gains offered by separability, the statistical gains when the ground-truth is separable, and a favorable bias-variance trade-off when the truth is in fact not separable.

Finally, we illustrate the benefits of our approach on a qualitative analysis of implied volatility surfaces corresponding to call options. Here, one surface corresponds to a fixed asset (e.g. a stock), for which the right to buy it for an agreed-upon \emph{strike} price (first dimension) at a future \emph{time to expiration} (second dimension) is traded. The value of implied volatility at given strike and time to expiration is derived directly from observed market data (the option prices), by the well-known and commonly used Black-Scholes formula \cite{black1973pricing,merton1973theory}. The implied volatilities are preferred over the option prices, since they are dimensionless quantities, allow for a direct comparison of different assets, and are well familiar to practitioners. Since the options are traded only for a finite number of strikes and times to expiration, which vary across different surfaces, the observed data consist of a sparse ensemble.
The interpolation of such an ensemble is a typical objective in financial mathematics as the latent implied volatility surfaces are of interest for other tasks such as prediction (forecasting) of option prices \cite{hull2006options}.
The common practice is to interpolate or smooth the measurements for every surface independently. For example, \citet{cont2002} utilized pre-smoothing by local polynomial regression, evaluating an individual smoother for every single surface. This approach may, however, pose issues when the available sparse measurements for a given surface are concentrated only on a subset of the domain, which is often the case. Once the predicted surfaces are fed into a subsequent predictive models, the naively extrapolated parts of the surface are given the same weight as the more reliable interpolated parts. Instead, we advocate for the idea of ``borrowing strength'' for the purpose of predicting the latent surfaces via best linear unbiased prediction using the information from the entire data set, which also allows for uncertainty quantification \cite{yao2005}. Under separability, we only need to use two-dimensional surface smoothing, which is the case of the pre-smoothing approach as well. At the same time, the proposed methodology outperforms the pre-smoothing approach in terms of prediction error.
 
\section{Methodology}
\label{sec:methodology}

\subsection{Model and Observation Scheme}\label{sec:model}

We assume the existence of i.i.d. latent surfaces $X_n \in \Ltwosq$, $n=1,\ldots,N$, which are mean-square continuous with continuous sample paths. We denote the mean function as $\mu = \mu(t,s)$, where
\[
\mu(t,s) = \E X_1(t,s) , \qquad t,s \in [0,1],
\]
and the covariance kernel as $c=c(t,s,t',s')$, where
\[
c(t,s,t',s') = \Ez{ \left(X_1(t,s)-\mu(t,s)\right)\left( X_1(t',s')-\mu(t',s')\right) }, \qquad t,s,t',s' \in [0,1] .
\]
We think of the first dimension as being temporal, denoted by variable $t$, and the second dimension as being spatial, denoted by variable $s$, though this convention is only made for the purposes of presentation.

The crucial assumption in this paper is that of separability of the covariance:
\begin{enumerate}[label=(A{\arabic*})]
\item \label{assumption:A1}
The covariance kernel of the random surfaces $X_1,\dots,X_N$ satisfies
\begin{equation}\label{eq:separability}
c(t,s,t',s') = a(t,t') b(s,s'), \qquad t,s,t',s' \in [0,1],
\end{equation}
for some purely temporal covariance $a=a(t,t')$ and some purely spatial covariance $b=b(s,s')$.
\end{enumerate}

The process $X \in \Ltwosq$ is separable if it is, for example, an outer product of two independent univariate processes (a purely temporal one and a purely spatial one). In that case, apart from the covariance, the mean function also separates into a product of a purely temporal and a purely spatial functions. However, a process can have a separable covariance even when it is not itself separable \cite{rougier2017}, for example the mean function may not be separable. We do not assume separability of the latent process itself, we only assume separability of its covariance in the sense of \eqref{eq:separability}.

We work under the sparse sampling regime, where every surface is observed only at a finite number of irregularly distributed locations, and those measurements are corrupted by independent additive errors. For the $n$-th latent surface $X_n$, the number of measurements $M_n$ as well as the locations of the measurements $\{ (t_{nm}, s_{nm}) \;|\; m=1,\ldots,M_n \} \subset [0,1]^2$ are considered random, and the observations are given by the errors-in-measurements model \cite{yao2005, li2010,zhang2016}:
\begin{equation}\label{eq:measurement_scheme}    
Y_{nm} = X_n(t_{nm}, s_{nm}) + \varepsilon_{nm}, \quad m=1,\ldots,M_n,\; n=1,\ldots,N,
\end{equation}
where $\varepsilon_{nm}$ are i.i.d. with $\E\, \varepsilon_{nm} = 0$ and $\var(\varepsilon_{nm}) = \sigma^2 > 0$ being the noise level.

\subsection{Motivation}\label{sec:motivation}

In this section, we assume for simplicity that the mean is zero, and provide a heuristic description of how one might estimate the separable covariance \eqref{eq:separability}. Note that we have
\[
\cov(Y_{nm},Y_{nm'}) = a(t_{nm},t_{nm'}) b(s_{nm},s_{nm'}) + \sigma^2 \mathds{1}_{[m=m']}, \quad n=1,\ldots,N, \;m,m'=1,\ldots,M_n.
\]

Consider the ``raw covariances" $G_{nmm'} := Y_{nm} Y_{nm'}$. Ignoring the assumption of separability, one could attempt to ``lift" the PACE approach \cite{yao2005} up to higher dimensions. This amounts to plotting the raw covariances as a scatter plot in a four-dimensional domain, discarding the diagonal covariances burdened by noise, and using a surface smoother to obtain an estimator of the covariance. We refer to this procedure as \textit{4D smoothing}, see Section \ref{sec:implementation_details} for details. However, there are two issues with 4D smoothing. The curse of dimensionality results in an estimator of poor quality, unless surfaces are observed relatively densely and many replications are available. And, especially when the latter is true, the computational costs of smoothing in a higher dimension can be excessive. We make the assumption of separability mainly to cope with these two issues, which is often the case in the literature already when working with fully observed data \cite{gneiting2006,pigoli2018}. We do not see separability as a critical modeling assumption, but rather as a regularization, which possibly introduces a bias. Separability reduces both the statistical and the computational complexity of the covariance estimation task. This is always important when working with random surfaces, whatever their mode of observation, but becomes particularly crucial when working with sparsely observed surfaces.

In the following, we provide a heuristic on how separability can be used to our advantage in the sparse observation regime. Assuming zero mean for now, we have
\[
\E G_{nmm'} = \E Y_{nm} Y_{nm'} = a(t_{nm},t_{nm'}) b(s_{nm}, s_{nm'}) + \sigma^2 \mathds{1}_{[m=m']}.
\]
Imagine for a moment that the spatial kernel $b=b(s,s')$ is known, and consider the set of values
\begin{equation}\label{eq:set_of_scatterpoints}
\left\{ \frac{Y_{nm} Y_{nm'}}{b(s_{nm}, s_{nm'})} \;\Bigg|\; m,m'=1,\ldots,M_n, m \neq m', n=1,\ldots,N  \right\} .
\end{equation}
The expectation of every point in this set is $a(t_{nm},t_{nm'})$, so we can chart these points in a scatter plot as
\[
\left(t_{nm},t_{nm'}, \frac{Y_{nm} Y_{nm'}}{b(s_{nm}, s_{nm'})} \right),
\]
and use a two-dimensional surface smoother to obtain an estimator of $a=a(t,t')$.

Correspondingly, if one knew the temporal kernel $a=a(t,t')$ instead, the set of values \eqref{eq:set_of_scatterpoints} (with $a$ in the denominator instead of $b$) could be arranged against $s_{nm}$ and $s_{nm'}$ to obtain an estimator of $b=b(s,s')$.
When neither the temporal kernel $a$ nor the spatial $b$ is known, one can start with a fixed $b$ and iterate between updates of $a$ and $b$, smoothing a scatterplot once per every single update. 

However, there are two issues with such an approach. Firstly, for small denominators, the corresponding points on the scatterplot are not reliable, and using them as they are can have a severe negative impact on estimation quality. Secondly, unless very few observations per surface are available, the procedure above is computationally very demanding, see Section \ref{sec:additional_simulations}.
    
To cope with these issues, we use weights for the surface smoother and utilize gridding, i.e. split the domain into disjoint intervals and work on a grid. While, gridding can significantly reduce computations already on a univariate domain \cite{yao2005}, the gains are much bigger in higher dimensions. In the following section, we introduce our methodology in full from the theoretical perspective, while computational aspects are deferred to Section \ref{sec:implementation_details}.

\subsection{Estimation of the Model Components}\label{sec:estimation}

We use local linear regression surface smoothers \cite{fan1996} to formalise the heuristic described in the previous section and estimate the components of the model from Section \ref{sec:model}, i.e. the mean $\mu=\mu(t,s)$, the temporal kernel $a=a(t,t')$, the spatial kernel $b=b(s,s')$, and the noise level $\sigma^2$. 

By applying a surface smoother to the set $\{ (x_k,y_k,z_k) \; | \; k = 1,\ldots,M\} \subset \R^3$ with given weights $\{ w_k \; | \; k=1,\ldots,M \}$, we understand calculating $\widehat{\gamma}_0 = \widehat{\gamma}_0(x,y)$ as the minimizer of the weighted sum of squares
\begin{equation}\label{eq:generic_smoother}
(\widehat{\gamma_0}, \widehat{\gamma_1}, \widehat{\gamma_2}) = \argmin_{\gamma_0, \gamma_1, \gamma_2} \sum_{k=1}^M K \left( \frac{x - x_k}{h_{1}} \right) K \left( \frac{y - y_k}{h_{2}} \right) w_k \Big[ z_k - \gamma_0 - \gamma_1 (x - x_{k}) - \gamma_2 (y - y_{k}) \Big]^2    
\end{equation}
for every fixed $(x,y) \in [0,1]^2$, where $K(\cdot)$ is a smoothing kernel function and $h_{1},h_{2}>0$ are bandwidths. Throughout the paper, we use the Epanechnikov kernel, utilize cross-validation to select the bandwidths, and mention weights only when they are not all equal.

First, we estimate the mean by applying the surface smoother to the set
\begin{equation}\label{eq:smoother_mu}
\{ (t_{nm},s_{nm},Y_{nm}) \; | \; m=1,\ldots,M_n, \; n=1,\ldots,N \}.    
\end{equation}
Denote the resulting estimator by $\widehat \mu = \widehat \mu (t,s)$.

Next, consider the ``raw'' covariances
$G_{nmm'} = \big[Y_{nm} - \widehat{\mu}(t_{nm},s_{nm}) \big] \big[Y_{nm'} - \widehat{\mu}(t_{nm'},s_{nm'})\big].$ We begin by applying the surface smoother to the set 
\[
\{ (t_{nm},t_{nm},G_{nmm'}) \; | \; m,m'=1,\ldots,M_n, \, m \neq m',\; n=1\ldots,N \}
\]
to obtain a preliminary estimator of $a=a(t,t')$, denoted by $\widehat{a}_0 = \widehat{a}_0(t,t')$. Then we use this preliminary estimator to calculate a proxy of $b = b(s,s')$, like we described in the previous section. Namely, we apply the surface smoother to the set
\begin{equation}\label{eq:set_for_B}
\left\{ \left(s_{nm},s_{nm'},\frac{G_{nmm'}}{\widehat{a}_0(t_{nm},t_{nm'})}\right) \; \Bigg| \; m,m'=1,\ldots,M_n, \, m \neq m',\; n=1\ldots,N \right\}
\end{equation}
using weights $\{ \widehat{a}_0^2(t_{nm},t_{nm'}) \}$ to obtain $\widehat{b}_0 = \widehat{b}_0(s,s')$. If the denominator in the set above is ever zero, we remove the corresponding point from the set. Note that since the weights are equal exactly to the denominators squared, it makes sense even formally that such a point is never considered for the surface smoother.
As the next step, we refine our estimator of $a$ by applying the surface smoother to the set
\begin{equation}\label{eq:set_for_A}
\left\{ \left(t_{nm},t_{nm'},\frac{G_{nmm'}}{\widehat{b}_0(s_{nm},s_{nm'})}\right) \; \Bigg| \; m,m'=1,\ldots,M_n, \, m \neq m',\; n=1\ldots,N \right\}
\end{equation}
using weights $\{ \widehat{b}_0^2(s_{nm},s_{nm'}) \}$ from which we obtain $\widehat{a} = \widehat{a}(t,t')$. Finally, we refine the estimator of $b$ by applying the surface smoother to set \eqref{eq:set_for_B} with $\widehat{a}_0$ replaced by $\widehat{a}$, resulting in the estimator $\widehat{b}=\widehat{b}(s,s')$.

Finally, once both the mean and the separable covariance have been estimated, it remains to estimate the noise level $\sigma^2$, which is of interest e.g. for the purposes of prediction. We begin by applying the surface smoother to the set
\[
\left\{ \left(t_{nm},s_{nm},G_{nmm} \right) \; | \; m=1,\ldots,M_n,\; n=1\ldots,N \right\}
\]
to obtain $\widehat{V} = \widehat{V}(t,s)$. Note that since $\E G_{nmm} \approx a(t_{nm},t_{nm}) b(s_{nm},s_{nm}) + \sigma^2$, we can estimate $\sigma^2$ by
\[
\widehat{\sigma}^2 = 4 \int_{1/4}^{3/4} \int_{1/4}^{3/4} \big[\widehat{V}(t,s) - \widehat{a}(t,t)\widehat{b}(s,s) \big] d t d s ,
\]
where (similarly to \cite{yao2005}) we integrate only along the middle part of the domain to mitigate boundary issues.


\begin{figure}[!t]
   \centering
   \includegraphics[width=0.8\textwidth]{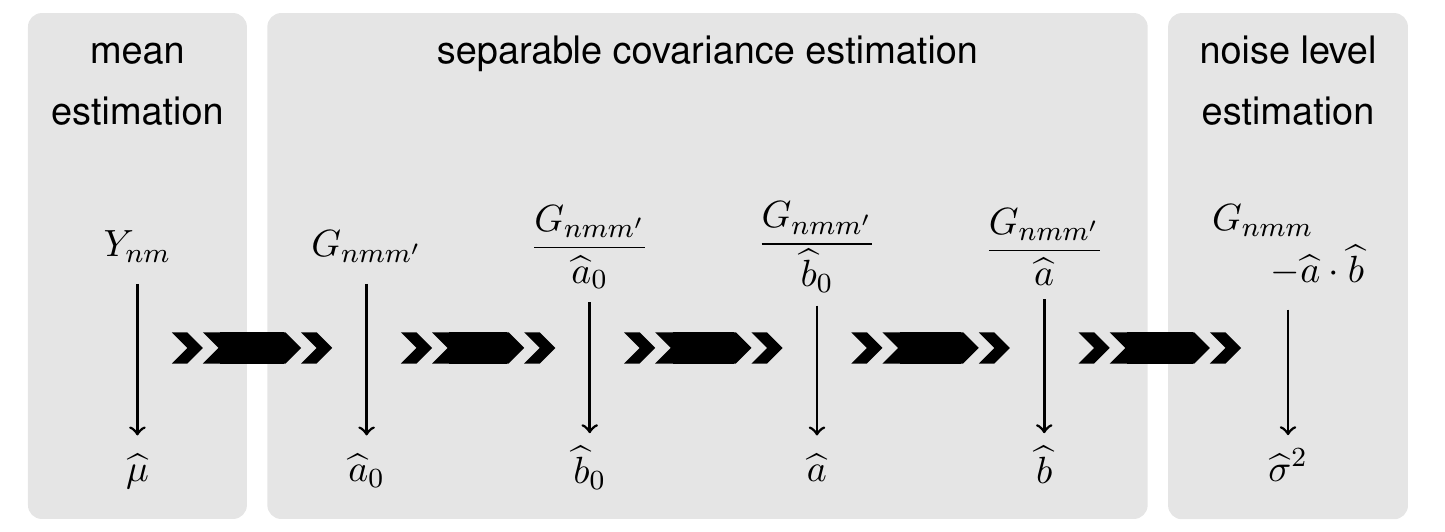}
   \caption{Workflow of the proposed estimation procedure. We estimate firstly the mean from the data $\{ Y_{nm} \}$, then the separable covariance (in several steps) from the raw covariances $\{G_{nmm'} \}$, and finally the noise level. A surface smoother over a 2D domain is utilized in every step (once per a single thin arrow).}
    \label{fig:estimation_procedure} 
\end{figure}

The workflow of the estimation scheme described above is visualised in Figure \ref{fig:estimation_procedure}. The main novelty of our approach lies in the part where the separable covariance is estimated. Separability allows us to reduce dimensionality of the problem. Hence only two-dimensional smoothing is required, while a straightforward multi-dimensional generalization of e.g. the PACE approach \cite{yao2005} not utilizing separability would require four-dimensional smoothing to estimate the covariance.

The estimation of the separable terms can be viewed as an iterative procedure, where either $a$ or $b$ is kept fixed while the other term is being updated. The initializing estimate $\widehat{a}_0$ is also obtained in this way, starting from $\widehat{b}_0 \equiv 1$. A natural question is whether one should iterate this process until convergence, or simply stop after a single step, and use e.g. $\widehat{a}_0$ as the estimator of $a$. The approach we advocate for uses exactly two steps (for both $a$ and $b$). The reason is the following. As will be shown in Section \ref{sec:asymptotics}, the asymptotic distribution of $\widehat{b}$ does not depend on $\widehat{a}_0$, and similarly for $\widehat{a}$. This fact follows from separability. However, given the motivation in the previous section, one can anticipate the finite sample performance of $\widehat{a}$ to be better than that of $\widehat{a}_0$, which is verified in our simulation study. One can think of the first step as estimating the optimal weights consistently, and the second step as using those consistently estimated weights to produce the estimators.

One of the distinctive features of our methodology is the use of this explicit weighing scheme, where the weights for each of the two covariance kernels depend on the other covariance kernel. 
We explain why this is done in the following example. A more precise justification for the specific (quadratic) form of the weights, which requires some additional background, is deferred to Section \ref{sec:connection_full_obs}.

\begin{example}
Assume we observe zero-mean surfaces sparsely, and one of these surfaces, say $X_1$, is observed at four locations only, as depicted in Figure \ref{fig:smoothing_comparison}. First, let us describe the 4D smoothing estimator at a fixed location $(t,s,t',s')$, when the bandwidths are also fixed. $X_1$ contributes to $\widehat{c}(t,s,t',s')$ only if there is a pair of two locations, where $X_1$ is observed, such that one of the locations is close to $(t,s)$, and the other is close to $(t',s')$. In this case, closeness in time, resp. space, is measured by $h_t$, resp. $h_s$, which control the bandwidth of the smoothing kernel. In Figure \ref{fig:smoothing_comparison} (left), only a single pair of locations contributes to estimation at $(t,s,t',s')$.

Now, let us contrast this to a single step in the proposed estimating procedure. Assume that $b=b(s,s')$ is fixed in the current step, and we are estimating $a=a(t,t')$. In this step, the spatial dimensions are not explicitly considered, we are performing smoothing only in the temporal dimensions. Hence the product of any two locations, where $X_1$ is observed, contributes to $\widehat{a}$, as long as the locations are close to $(t,s)$ and $(t',s')$ in the temporal domain. In the situation displayed in Figure \ref{fig:smoothing_comparison} (right), this leads to four contributing raw covariance pairs. In other words, when estimating the temporal part of the covariance at $(t,s,t',s')$, we can consider even raw covariances, \emph{which are spatially far}  from $(t,s,t',s')$. This is meaningful due to separability. The adopted weighting scheme then ensures that raw covariances arising from points which are spatially distant are appropriately weighted.

To sum up, 4D smoothing can be understood as averaging over information about $c(t,s,t',s')$ captured in raw covariances, whose locations are close to $(t,s,t',s')$. Under separability, however, the proposed methodology borrows information in a different manner, always allowing for more freedom in one dimension or the other, depending on which dimension is currently held fixed.
\end{example}

\begin{figure}[!t]
   \centering
   \begin{tabular}{cccc}
   \includegraphics[height=6cm]{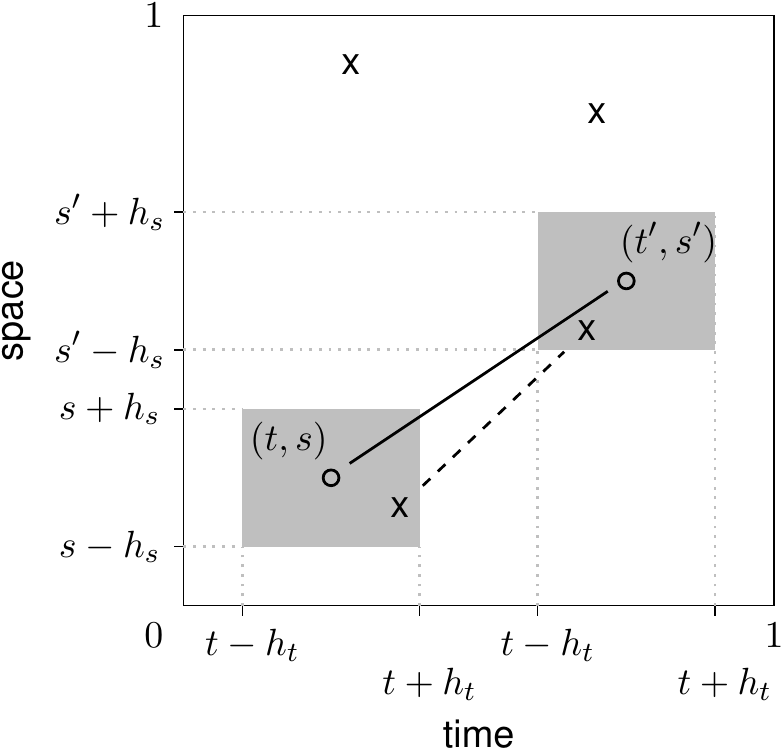} & & &
   \includegraphics[height=6cm]{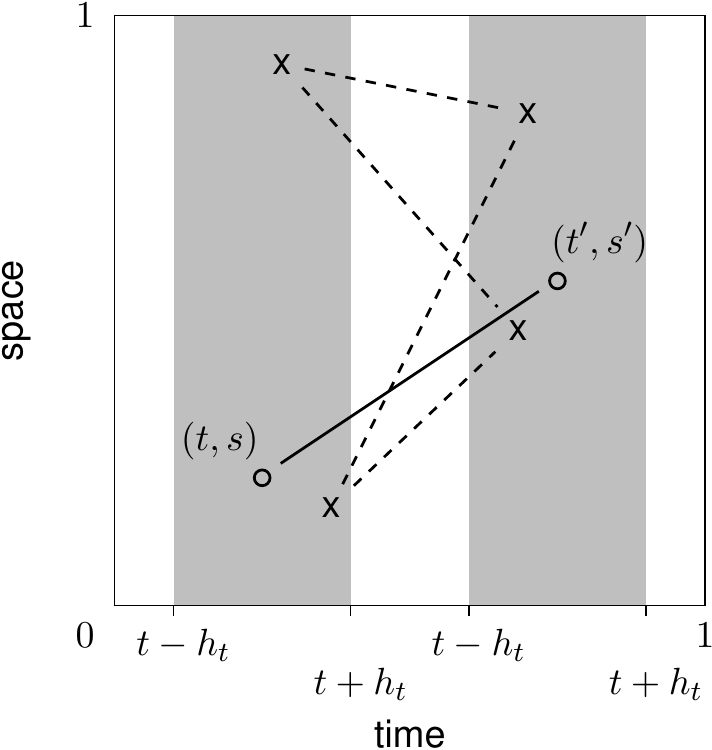}
   \end{tabular}  
   \caption{A single random surface is observed at four locations (depicted by ``$\mathsf{x}$''), and these observations contribute differently to estimation of the covariance at a fixed location $(t,s,t',s')$ in the case of 4D smoothing \emph{(\textbf{left})} and one step of the proposed approach leading to the estimator of $a(t,t')$ \emph{(\textbf{right})}. The gray areas depict active neighborhoods and the dashed lines depict the contributing raw covariances (products of the values at the connected locations).   }
    \label{fig:smoothing_comparison} 
\end{figure}

\subsection{Prediction}
\label{sec:prediction}

Another objective of our methodology is the recovery  of the latent surfaces based on the sparse and noisy observations thereon.
The prediction method we are going to present in this section follows the principle of \textit{borrowing strength} across the entire data set, an expression framed by \citet{yao2005}. Specifically, consider the training data set composed of the sparse observations made on random surfaces $X_1,\dots,X_N$, i.e. $\{Y_{nm} : m=1,\dots,M_n,\, n=1,\dots,N\}$, and a new random surface $X^{new}$ observed under the same sampling protocol as \eqref{eq:measurement_scheme}:
\begin{equation}\label{eq:prediction_Ynew_sparse_measurements}
   Y^{new}_m = X^{new}( t^{new}_m, s^{new}_m) + \varepsilon^{new}_{m},\qquad m=1,\dots,M^{new}. 
\end{equation}
We assume that $X^{new}$ comes from the same population as $X_1,\dots,X_N$. In fact, it may be set (together with its sparse measurements) to one of the training surfaces $X_1,\dots,X_N$ if the task is to predict one of those.

Our prediction method is calibrated on all the observations, and this information is used for the prediction of $X^{new}$. This contrasts to the pre-smoothing step often used in the functional data literature \citep{ramsay2005,ramsay2007applied}, typically in the dense regime, where the prediction (usually some kind of a smoother) of $X^{new}$ is based only on $\{Y^{new}_m : m=1,\dots,M^{new}\}$.

Let $X^{new}$ be a random surface with the mean $\mu(t,s),\,t,s\in[0,1],$ and the separable covariance kernel $a(t,t')b(s,s')$, $t,t',s,s'\in[0,1]$, observed through sparse measurements \eqref{eq:prediction_Ynew_sparse_measurements}.
Then the best linear unbiased predictor of the latent surface $X^{new}$ given the sparsely observed data $\mathbb{Y}^{new} = ( Y_1^{new}, \ldots, Y_{M^{new}}^{new}  )$, denoted as $\Pi( X^{new} \vert \mathbb{Y}^{new} )$, is given by (cf. \citep{henderson1975best}):
\begin{equation}\label{eq:blup}
\Pi( X^{new}(t,s) \vert \mathbb{Y}^{new} ) = \mu(t,s) + \cov( X^{new}(t,s), \mathbb{Y}^{new} ) \left[ \var( \mathbb{Y}^{new}) \right]^{-1}( \mathbb{Y}^{new} - \E \mathbb{Y}^{new} ),\qquad t,s\in[0,1],
\end{equation}
where
\begin{align}
    \label{eq:blup_cov}
    \cov( X^{new}(t,s), \mathbb{Y}^{new} ) &= \Big( a(t,t^{new}_m) b(s,s^{new}_m) \Big)_{m=1}^{M^{new}} \in\mathbb{R}^{M^{new}},
    \qquad t,s\in [0,1],\\
    \label{eq:blup_var}
    \var( \mathbb{Y}^{new} ) &= \Big( a(t^{new}_m,t^{new}_{m'}) b(s^{new}_m,s^{new}_{m'}) + \sigma^2 \mathds{1}_{[m=m']}  \Big)_{m,m'=1}^{M^{new}} \in\mathbb{R}^{M^{new}\times M^{new}},\\
    \nonumber
    \E \mathbb{Y}^{new} &= \Big( \mu( t^{new}_m, s^{new}_m ) \Big)_{m=1}^{M^{new}} \in\mathbb{R}^{M^{new}}.
\end{align}

Formula \eqref{eq:blup} contains the unknown mean surface $\mu(\cdot,\cdot)$, covariance kernels $a(\cdot,\cdot)$ and $b(\cdot,\cdot)$ as well as the measurement error variance $\sigma^2$.
In reality, we estimate these quantities from the training data set, say sparsely observed measurements on $X_1,\dots,X_N$, and plug-in the estimates $\widehat{\mu}, \widehat{a}, \widehat{b}$, and $\widehat{\sigma}^2$ into \eqref{eq:blup}.
We shall denote such predictor as $\widehat{\Pi}( X^{new} \vert \mathbb{Y}^{new} )$.
We show in the following section (Theorem~\ref{thm:prediction_bands}) that the predictor $\widehat{\Pi}( X^{new} \vert \mathbb{Y}^{new} )$ converges to its theoretical counterpart $\Pi( X^{new} \vert \mathbb{Y}^{new} )$ as the number of training samples $N$ grows to infinity.

In the rest of this section, we turn our attention to the construction of  confidence bands under a Gaussian assumption.
\begin{enumerate}[label=(A{\arabic*}),resume]
\item \label{assumption:A2}
The random surface $X^{new}$ is a Gaussian random element in $\Ltwosq$ and the measurement error ensemble $\{ \varepsilon_m^{new} \}_{m=1}^{M^{new}}$ is a Gaussian random vector.
\end{enumerate}
First note that under the assumption \ref{assumption:A2}, the best linear unbiased predictor \eqref{eq:blup} actually corresponds to the conditional expectation $\E [ X^{new}(t,s) \vert \mathbb{Y}^{new}]$. Furthermore, we may calculate the conditional covariance structure as
\begin{multline}\label{eq:blup_variance}
\cov\left( X^{new}(t,s),X^{new}(t',s') \vert \mathbb{Y}^{new} \right) \\=
a(t,t')b(s,s') -
\cov( X^{new}(t,s), \mathbb{Y}^{new} )
\left[\var( \mathbb{Y}^{new} )\right]^{-1}
\left[ \cov(  X^{new}(t',s') , \mathbb{Y}^{new} ) \right]^\top,
\qquad t,t',s,s'\in[0,1].
\end{multline}
Moreover, we denote $\widehat{\cov}\left( X^{new}(t,s),X^{new}(t',s') \vert \mathbb{Y}^{new} \right)$ the empirical counterpart to \eqref{eq:blup_variance}, where the unknown quantities $a$, $b$ and $\sigma^2$ are replaced by their estimators.

Fixing $(t,s)\in[0,1]^2$ and $\alpha\in(0,1)$, the $(1-\alpha)$-confidence interval for
$X^{new}(t,s)$ is given by
\begin{equation}\label{eq:confidence_band_pointwise}
\widehat{\Pi}(X^{new}(t,s) \vert \mathbb{Y}^{new}) \pm u_{1-\alpha/2}
\sqrt{ \widehat{\cov}\left( X^{new}(t,s),X^{new}(t,s) \vert \mathbb{Y}^{new} \right) }
\end{equation}
where $u_{1-\alpha/2}$ is the $(1-\alpha/2)$-quantile of the standard Gaussian law.
The point-wise confidence band is then constructed by connecting the intervals \eqref{eq:confidence_band_pointwise} for all $(t,s)\in[0,1]^2$.

The construction of the simultaneous confidence band is more involved, and we shall use the technique proposed by \citet{degras2011simultaneous}. Define the conditional correlation kernel
\begin{equation}\label{eq:corr_kernel_estimated_Z}
\widehat{\corr}\left( X^{new}(t,s),X^{new}(t',s') \vert \mathbb{Y}^{new} \right)
=
\frac{ \widehat{\cov}\left( X^{new}(t,s),X^{new}(t',s') \vert \mathbb{Y}^{new} \right) }{\sqrt{  \widehat{\var}\left( X^{new}(t,s) \vert \mathbb{Y}^{new} \right)
\widehat{\var}\left( X^{new}(t',s')\vert \mathbb{Y}^{new} \right)
}}        
\end{equation}
if the division on the right-hand side makes sense and zero otherwise, and
where
$\widehat{\var}\left( X^{new}(t,s) \vert \mathbb{Y}^{new} \right)
=
\widehat{\cov}\left( X^{new}(t,s),X^{new}(t,s) \vert \mathbb{Y}^{new} \right)$.

Then, construct the simultaneous confidence band by connecting the intervals
\begin{equation}\label{eq:confidence_band_simul}
    \widehat{\Pi}(X^{new}(t,s) \vert \mathbb{Y}^{new}) \pm \widehat{z}_{1-\alpha}
\sqrt{ \widehat{\cov}\left( X^{new}(t,s),X^{new}(t,s) \vert \mathbb{Y}^{new} \right) },
\end{equation}
where $\widehat{z}_{1-\alpha}$ is the $(1-\alpha)$-quantile of the law of
\begin{equation}\label{eq:prediction_law_hat_W}
\widehat{W} = \sup_{t,s\in[0,1]^2} \left| \widehat{Z}(t,s) \right|
\end{equation}
with $\widehat{Z}$ being a Gaussian random surface with $\cov( \widehat{Z}(t,s),\widehat{Z}(t',s') ) = \widehat{\corr}\left( X^{new}(t,s),X^{new}(t',s') \vert \mathbb{Y}^{new} \right)$ for $t,t',s,s'\in[0,1]$, where $\widehat{\corr}$ is the from \eqref{eq:corr_kernel_estimated_Z}.
Therefore, by the definition, $ \Prob ( \sup_{(t,s)\in[0,1]^2} |\widehat{Z}(t,s)| \leq \widehat{z}_{1-\alpha} ) = 1 - \alpha $.
Numerical calculation of this quantile is explained by \citet{degras2011simultaneous}, who also concludes that $\widehat{z}_{1-\alpha} < u_{1-\alpha/2}$. Therefore the point-wise confidence band is always enveloped by the simultaneous confidence band, as expected. 

The asymptotic coverage of the point-wise band \eqref{eq:confidence_band_pointwise} and the simultaneous band \eqref{eq:confidence_band_simul} is verified in Theorem~\ref{thm:prediction_bands} in the following section.


\section{Asymptotic Properties}\label{sec:asymptotics}

In this section, we establish consistency and convergence rates of the estimators $\widehat{\mu}=\widehat{\mu}(t,s)$, $\widehat{a}=\widehat{a}(t,t')$, and $\widehat{b}=\widehat{b}(s,s')$, as well as the measurement error variance $\sigma^2$.

The following assumptions refine the sparse observation scheme introduced in Section~\ref{sec:model}.

\begin{enumerate}[label=(B{\arabic*})]
\item \label{assumption:B1}
The counts of measurements per surface $M_n$ are independent identically distributed random variables with the law $M_n\sim\mathcal{M}>0$ such that $\Prob( \mathcal{M} > 1 )>0$ and $\mathcal{M} \leq M^{max}$ where $M^{max} \in \mathbb{N}$ is a constant.
\item \label{assumption:B2}
The measurement locations $(t_{nm},s_{nm}),\,m=1,\dots,M_n,\,n=1,\dots,N$, are independent identically distributed random variables generated from the density $f_{(t,s)}(\cdot,\cdot)$ on $[0,1]^2$.
The density $f_{(t,s)}(\cdot,\cdot)$ is assumed to be twice continuously differentiable and positive on $[0,1]^2$.
\item \label{assumption:B3}
The counts $(M_n)$, the locations $(t_{nm},s_{nm})$, and the latent surfaces $(X_n)$ are independent. 
\end{enumerate}

The following two assumptions are required for consistent estimation of the mean surface.

\begin{enumerate}[label=(B{\arabic*}),resume]
\item \label{assumption:B4}
The mean surface $\mu(\cdot,\cdot)$ is twice continuously differentiable on $[0,1]^2$.
\item \label{assumption:B5}
There exists $\rho>2$ such that the random surface $X_1$ and the measurement error $\varepsilon_{11}$ satisfy
$$ \sup_{(t,s)\in[0,1]^2} \Ez{ |X_1(t,s)|^\rho } < \infty, \qquad \Ez{ |\varepsilon_{11}|^\rho } < \infty.$$
\item \label{assumption:B6}
The bandwidths $h_{\mu,1},h_{\mu,2}$ for the mean estimator satisfy
$ (\log N) / (N h_{\mu,1} h_{\mu,2} ) = o(1) $ and, furthermore, we assume that they decay with the same rate: $ h_{\mu,1} \asymp h$ and $ h_{\mu,2} \asymp h$ as $N\to\infty$. The statement $x_n \asymp x'_n$ as $n\to\infty$ for two sequences $\{x_n\}$ and $\{x'_n\}$ is understood as $\lim_{n\to\infty} x_n / x'_n \in (0,\infty)$, i.e. $x_n$ and $x'_n$ differ asymptotically up to a multiplicative constant
\end{enumerate}
The common decay rate assumption in \ref{assumption:B6} is not really required for our asymptotic theory; it is imposed to simplify the statements on the convergence rates. The same argument applies to the other bandwidths below. Because all smoothing in our methodology is restricted to two dimensions, the bandwidths are indeed expected to decay with the same rate, and we may assume that they differ asymptotically up to a multiplicative constant.

In order to estimate the covariance kernels $a(\cdot,\cdot)$ and $b(\cdot,\cdot)$ we need the following assumptions:
\begin{enumerate}[label=(B{\arabic*}),resume]
\item\label{assumption:B7}
The covariance kernels $a(\cdot,\cdot)$ and $b(\cdot,\cdot)$ are twice continuously differentiable on $[0,1]^2$.
\item\label{assumption:B8}
There exists $\rho'>2$ such that the random surface $X_1$ and the measurement error $\varepsilon_{11}$ satisfy
$$ \sup_{(t,t',s,s')\in[0,1]^4} \Ez{ |X_1(t,s) X_1(t',s')|^{\rho'} } < \infty, \qquad \Ez{ |\varepsilon_{11}|^{2\rho'} } < \infty.$$
\item\label{assumption:B9}
The bandwidths $h_a$ and $h_b$ used for smoothing the covariance kernel $a(\cdot,\cdot)$ and $b(\cdot,\cdot)$ satisfy
$ (\log N) / (N h_a^2 ) = o(1) $ and $ (\log N) / (N h_b^2 ) = o(1) $ as $N\to\infty$ and, for simplicity of the convergence rates statements, we assume
$h_a \asymp h$ and $h_b \asymp h$ as $N\to\infty$ where $h$ is from assumption~\ref{assumption:B6}.
\item\label{assumption:B10}
The true value of the covariance kernel $b(\cdot,\cdot)$, of the separable model \eqref{eq:separability} satisfies
\begin{equation}\label{eq:assumption:B10_equation}
\Theta \stackrel{\mathrm{def}}{=} \int_0^1\int_0^1 b(s,s') f_s(s)f_s(s') dsds' \neq 0
\end{equation}
where $f_s(s) = \int_0^1 f_{(t,s)}(t,s) dt$ is the marginal density of the random location $s_{11}$.
\end{enumerate}

While the other assumptions are standard in the smoothing literature, assumption \ref{assumption:B10} might be surprising, especially considering it is not symmetric between $a(\cdot,\cdot)$ and $b(\cdot,\cdot)$. The reason behind this asymmetry is that our estimation methodology starts by smoothing the raw covariances $G_{nmm'}$ against $(t_{nm},t_{nm'})$ in order to produce the preliminary estimator $\widehat{a}_0(\cdot,\cdot)$. The condition \eqref{eq:assumption:B10_equation} ensures that the estimator $\widehat{a}_0(\cdot,\cdot)$ converges to a nonzero quantity, see the constant $\Theta$ in Theorem~\ref{thm:estim_a_b}. By contrast, this issue is not present in the follow-up steps. Due to positive semi-definitness of $b(\cdot,\cdot)$, the constant $\Theta$ can only be zero if all eigenfunctions of $b(\cdot,\cdot)$ are orthogonal to the marginal sampling density $f_s$. This cannot happen e.g. unless $B$ is exactly low-rank, and with all the eigenfunctions changing signs. From the practical perspective, the condition is merely a technicality.

The estimation of the noise level $\sigma^2$ furthermore requires the following assumption.
\begin{enumerate}[label=(B{\arabic*}),resume]
\item\label{assumption:B11}
The bandwidths $h_{V,1},h_{V,2}$ for the smoother $\widehat{V}(\cdot,\cdot)$ satisfy
$ (\log N) / (N h_{V,1} h_{V,2} ) = o(1) $
and, for simplicity of the convergence rates statements, we assume
$h_{V,1} \asymp h$ and $h_{V,2} \asymp h$ as $N\to\infty$ where $h$ is from assumption~\ref{assumption:B6}.
\end{enumerate}

The mean surface asymptotic theory is presented as the following proposition.
Note that the separability assumption \ref{assumption:A1} is not required here.
\begin{proposition}
\label{prop:estim_mu}
Under the assumptions \ref{assumption:B1} -- \ref{assumption:B6}:
$$ \sup_{(t,s)\in[0,1]^2} \left| \widehat{\mu}(t,s) - \mu(t,s) \right|
= \bigop{ \sqrt{\frac{\log N}{N h^2}} + h^2 }
\qquad\text{as}\quad N\to\infty.
$$
\end{proposition}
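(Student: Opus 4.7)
The plan is to follow the standard uniform convergence route for two-dimensional local linear smoothers, adapting arguments of \cite{li2010} (developed there for the one-dimensional sparse case) to our setup. Note that the separability assumption \ref{assumption:A1} plays no role here, and that assumption \ref{assumption:B1} bounds $M_n \leq M^{max}$, so the within-surface sums have a bounded (non-random, up to constants) number of summands; this keeps the two-dimensional sparse problem essentially equivalent to its one-dimensional analogue up to constants absorbing $M^{max}$.

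First, I would write $\widehat{\mu}(t,s)$ explicitly in the local-linear form, namely as the first coordinate of $(Z^\top W Z)^{-1} Z^\top W Y$ evaluated at $(t,s)$, and reduce it to smooth functionals of the normalized kernel moments
\[
S_{jk}(t,s) = \frac{1}{N} \sum_{n=1}^N \sum_{m=1}^{M_n} K\!\left(\tfrac{t-t_{nm}}{h_{\mu,1}}\right) K\!\left(\tfrac{s-s_{nm}}{h_{\mu,2}}\right)\left(\tfrac{t_{nm}-t}{h_{\mu,1}}\right)^j\left(\tfrac{s_{nm}-s}{h_{\mu,2}}\right)^k Y_{nm}^{\,\ell}
\]
for $j,k \in \{0,1,2\}$ and $\ell\in\{0,1\}$. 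Decomposing $Y_{nm}=\mu(t_{nm},s_{nm})+(X_n(t_{nm},s_{nm})-\mu(t_{nm},s_{nm}))+\varepsilon_{nm}$, the analysis splits into a bias and a stochastic fluctuation part.

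For the bias, I would Taylor expand $\mu$ to second order around $(t,s)$ using \ref{assumption:B4}, and use that the local linear smoother annihilates linear functions exactly and that $f_{(t,s)}$ is $C^2$ and bounded away from zero on $[0,1]^2$ by \ref{assumption:B2}. This gives a uniform bias of order $h^2$, also at the boundary (the classical advantage of local linear over Nadaraya--Watson). For the stochastic part, I need
\[
\sup_{(t,s)\in[0,1]^2}\left|S_{jk}(t,s)-\E S_{jk}(t,s)\right| = \bigop{\sqrt{\tfrac{\log N}{Nh^2}}}.
\]
The strategy is: (i) truncate each $Y_{nm}$ at a threshold $\tau_N = N^{1/\rho}$, using the moment bound \ref{assumption:B5} to show the untruncated contribution is negligible; (ii) apply Bernstein's inequality at a fixed $(t,s)$, exploiting that the kernel is compactly supported so the variance of each summand is $O(h^2)$ and the bound after truncation is $O(\tau_N)$, giving a single-point deviation of order $\sqrt{\log N/(Nh^2)}$; (iii) extend to a supremum over $[0,1]^2$ by a chaining/covering argument with a grid of mesh $N^{-c}$ plus a Lipschitz bound on $S_{jk}$ (the summands are Lipschitz in $(t,s)$ with constant $O(h^{-1})$, times an $O(\tau_N)$ factor, which is polynomial in $N$ and hence harmless after the union bound inflates variance by a $\log N$ factor). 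The $M^{max}$ bound is what keeps the effective sample size of order $N$ and prevents any extra factor from the inner sum in $m$.

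Having both pieces, the denominator $S_{00}$ converges uniformly to $f_{(t,s)}(t,s)\,\E\mathcal{M}$ which is bounded away from zero, so the $3\times 3$ design matrix inverse in the local linear formula is uniformly well-conditioned on an event of probability tending to one, and the overall ratio representation yields the claimed rate $\sqrt{\log N/(Nh^2)}+h^2$. The main obstacle, and the step requiring the most care, is the uniform stochastic bound under only a $\rho$-th moment ($\rho>2$) rather than sub-exponential tails: the truncation level, the chaining mesh, and the bandwidth condition $(\log N)/(Nh_{\mu,1}h_{\mu,2})=o(1)$ from \ref{assumption:B6} must be balanced so that both the truncation remainder and the discretization error are dominated by the Bernstein rate.
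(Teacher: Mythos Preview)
Your proposal is correct and follows the same high-level structure as the paper: write $\widehat{\mu}$ via the explicit local-linear ratio formula, reduce to uniform convergence of the kernel moments $S_{pq}$ and $Q_{pq}$, handle bias by a second-order Taylor expansion, and finish with a uniform Slutsky argument using that $f_{(t,s)}$ is bounded away from zero. The difference lies in how the uniform stochastic bound on the kernel averages is obtained. The paper does not carry out truncation, Bernstein, and chaining by hand; instead it arranges the triples $(t_{nm},s_{nm},Y_{nm})$ into a single strictly stationary $M^{max}$-dependent sequence (this is where \ref{assumption:B1} enters) and invokes Hansen's (2008) uniform convergence theorem for kernel averages of mixing data as a black box, which already encapsulates exactly the truncation/exponential-inequality/covering machinery you outline. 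Your direct route is sound and makes the role of the moment condition \ref{assumption:B5} and the bandwidth condition \ref{assumption:B6} more transparent, but the paper's shortcut via the $m$-dependent embedding is shorter and avoids reproving a known result.
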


Our main asymptotic result, the consistency and the convergence rates for the separable model components \eqref{eq:separability}, is presented in the following theorem.
\begin{theorem}\label{thm:estim_a_b}
Under the assumptions \ref{assumption:A1}, \ref{assumption:B1} -- \ref{assumption:B10}:
\begin{align}
    \label{eq:thm:estim_a_b:asymp_a}
    \sup_{(t,t')\in[0,1]^2} \left| \widehat{a}(t,t') - \Theta a(t,t') \right| &= \bigop{ \sqrt{\frac{\log N}{N h^2}} + h^2 }, \\
    \label{eq:thm:estim_a_b:asymp_b}
    \sup_{(s,s')\in[0,1]^2} \left| \widehat{b}(s,s') - \frac{1}{\Theta} b(s,s') \right| &= \bigop{ \sqrt{\frac{\log N}{N h^2}} + h^2 },
\end{align}
as $N\to\infty$, where $\Theta$ is defined in \eqref{eq:assumption:B10_equation}.
\end{theorem}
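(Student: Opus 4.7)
\textbf{Proof plan for Theorem \ref{thm:estim_a_b}.}

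The plan is to follow the sequential structure of the estimation procedure and propagate uniform rates through three stages: (i) the preliminary smoother $\widehat{a}_0$, (ii) the weighted smoother $\widehat{b}_0$, and (iii) the refined estimators $\widehat{a}$ and $\widehat{b}$. The backbone of each stage is the standard uniform convergence rate for a 2D local linear smoother with bandwidth of order $h$ and iid-like source, namely $O_{\Prob}(\sqrt{\log N/(Nh^2)}+h^2)$, which holds under the smoothness (\ref{assumption:B4}, \ref{assumption:B7}), moment (\ref{assumption:B5}, \ref{assumption:B8}) and bandwidth (\ref{assumption:B6}, \ref{assumption:B9}) assumptions. The per-surface dependence is not an obstacle because Assumption \ref{assumption:B1} uniformly bounds $M_n\leq M^{max}$, so the pair data $(t_{nm},t_{nm'},G_{nmm'})$ split into $N$ independent blocks of size at most $M^{max}(M^{max}-1)$, and standard Bernstein/Einmahl--Mason-type arguments used in Yao et al. \cite{yao2005} (extended from 1D to 2D) apply verbatim.

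\textbf{Stage 1 (analysis of $\widehat{a}_0$).} First I would compute, using the iid structure of locations (\ref{assumption:B2}) and independence across $m\neq m'$,
\begin{equation*}
\E\bigl[G_{nmm'}\,\big|\,t_{nm}=t,\,t_{nm'}=t'\bigr]
= a(t,t')\,\E\bigl[b(s_{nm},s_{nm'})\bigr]
= \Theta\,a(t,t'),
\end{equation*}
with $\Theta$ as in \eqref{eq:assumption:B10_equation}. The 2D local linear smoother defined in \eqref{eq:generic_smoother} is design-adaptive and estimates this conditional expectation, so the standard uniform rate yields $\sup_{(t,t')}|\widehat{a}_0(t,t')-\Theta a(t,t')|=O_{\Prob}(\sqrt{\log N/(Nh^2)}+h^2)$. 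The mean-correction term $\widehat{\mu}-\mu$ enters $G_{nmm'}$ quadratically and linearly in $X-\mu$, and Proposition~\ref{prop:estim_mu} shows that this contamination is $O_{\Prob}(\sqrt{\log N/(Nh^2)}+h^2)$ uniformly, hence negligible. Assumption \ref{assumption:B10} guarantees $\Theta\neq 0$, so $\widehat{a}_0$ is bounded away from zero on any set where $a$ is.

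\textbf{Stage 2 (analysis of $\widehat{b}_0$).} The weighted smoother applied to \eqref{eq:set_for_B} with weights $\widehat{a}_0^2$ can be written in closed form with intercept
\begin{equation*}
\widehat{b}_0(s,s')\;\approx\;
\frac{\sum K(\cdot)K(\cdot)\,\widehat{a}_0(t_{nm},t_{nm'})\,G_{nmm'}}
     {\sum K(\cdot)K(\cdot)\,\widehat{a}_0^{\,2}(t_{nm},t_{nm'})}
\end{equation*}
after removing the standard local linear centering corrections. Substituting the uniform expansion $\widehat{a}_0=\Theta a+r_N$ with $\sup|r_N|=O_{\Prob}(\sqrt{\log N/(Nh^2)}+h^2)$ into both numerator and denominator, the leading ratio reduces (by the law of large numbers on pair data) to $\E[a^2(t_{nm},t_{nm'})\,b(s,s')]\,/\,\E[\Theta^2 a^2(t_{nm},t_{nm'})] = b(s,s')/\Theta$, and the perturbation created by $r_N$ is at most of the same order as $\sup|r_N|$ because it appears linearly in both numerator and denominator with bounded multipliers (here \ref{assumption:B10} again keeps $\widehat{a}_0^2$ averaged over a $h\times h$ neighborhood bounded away from zero). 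This delivers $\sup_{(s,s')}|\widehat{b}_0(s,s')-b(s,s')/\Theta|=O_{\Prob}(\sqrt{\log N/(Nh^2)}+h^2)$.

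\textbf{Stage 3 (refined $\widehat{a}$ and $\widehat{b}$).} The exact same algebra applies with the roles of $a$ and $b$ swapped and $\Theta$ replaced by $1/\Theta$: one plugs $\widehat{b}_0=b/\Theta+O_{\Prob}(\cdot)$ into \eqref{eq:set_for_A} with weights $\widehat{b}_0^2$ and obtains $\widehat{a}\to\Theta a$ at the claimed rate, proving \eqref{eq:thm:estim_a_b:asymp_a}. Finally $\widehat{b}$ is analysed by repeating Stage 2 verbatim with $\widehat{a}_0$ replaced by $\widehat{a}$, which already satisfies the target rate, yielding \eqref{eq:thm:estim_a_b:asymp_b}.

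\textbf{Main obstacle.} The genuinely non-routine part is Stage 2: one must show that simultaneously perturbing \emph{both} the response $G/\widehat{a}_0$ and the weight $\widehat{a}_0^2$ by the same random function $\widehat{a}_0-\Theta a$ does not inflate the rate, despite the apparent danger of small denominators. The key is that weights squared exactly cancel the division, so after algebraic simplification the estimator depends on $\widehat{a}_0$ only linearly in the numerator and quadratically in the denominator, both averaged over an $h\times h$ window where Stage 1 gives uniform control and \ref{assumption:B10} bounds the limiting denominator from below. Once this is made quantitative on a covering of $[0,1]^2$ (combined with a Lipschitz argument controlling the oscillation of $\widehat{b}_0$ between grid points), the uniform supremum rate follows; the remaining steps are algebraically identical.
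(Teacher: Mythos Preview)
Your proposal is correct and follows essentially the same four-stage sequential argument as the paper; the paper merely packages the recurring step as a standalone lemma (first for a deterministic weight kernel $\beta$, then a corollary extending to a random $\widehat\beta$ that converges uniformly to $\beta$) and invokes Hansen's (2008) uniform kernel-average theorem applied to the pair data reordered into a single $m$-dependent sequence rather than block-wise Bernstein/Einmahl--Mason bounds. One small slip: in your Stage~2 display the numerator should carry a factor $\Theta$ (from $\widehat a_0\approx\Theta a$), otherwise the displayed ratio equals $b(s,s')/\Theta^2$ rather than the intended $b(s,s')/\Theta$.
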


The separable decomposition \eqref{eq:separability} is not identifiable, because a constant can multiply one component while dividing the other, i.e. $a(t,t')b(s,s') = \left[ \lambda a(t,t') \right] \left[ (1/\lambda) b(s,s') \right]$, $t,t',s,s'\in[0,1]$, for any $\lambda\in(0,\infty)$.
Therefore we can only aim to recover the covariance kernels $a(\cdot,\cdot)$ and $b(\cdot,\cdot)$ up to a multiplicative constant and its reciprocal, respectively. 
The number $\Theta$ in statements \eqref{eq:thm:estim_a_b:asymp_a} and \eqref{eq:thm:estim_a_b:asymp_b} plays the role of such a constant and depends on the initialization of the algorithm, in our case on the fact that the first estimator $\widehat{a}_0$ smooths the raw covariances $G_{nmm'}$ without any weighting.
Still, the product $\widehat{a}(t,t')\widehat{b}(s,s')$, $t,t',s,s'\in[0,1]$,, estimates consistently the covariance structure $c(t,s,t',s') = a(t,t')b(s,s')$, $t,t',s,s'\in[0,1]$,, which is summarised in the following corollary.

\begin{corollary}\label{corollary:estimation_complete_covariance_structure}
Under the assumptions  \ref{assumption:A1}, \ref{assumption:B1} -- \ref{assumption:B10}:
$$ \sup_{(t,s,t',s')\in[0,1]^4} \left| \widehat{a}(t,t')\widehat{b}(s,s') - a(t,t')b(s,s') \right| = \bigop{ \sqrt{\frac{\log N}{N h^2}} + h^2 } $$
as $N\to\infty$.
\end{corollary}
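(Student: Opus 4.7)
The plan is to derive Corollary 1 as a direct consequence of Theorem~\ref{thm:estim_a_b} via an elementary product decomposition. The key observation is that once $\widehat{a}$ and $\widehat{b}$ have been shown to converge uniformly to $\Theta\, a$ and $\Theta^{-1} b$ respectively, the multiplicative constants $\Theta$ and $1/\Theta$ cancel exactly in the product, so the rescaling ambiguity is harmless at the level of the full covariance.

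Concretely, I would start by writing
\[
\widehat{a}(t,t') = \Theta\, a(t,t') + R_a(t,t'), \qquad
\widehat{b}(s,s') = \Theta^{-1} b(s,s') + R_b(s,s'),
\]
where Theorem~\ref{thm:estim_a_b} gives $\sup_{(t,t')}|R_a(t,t')|$ and $\sup_{(s,s')}|R_b(s,s')|$ both of order $\bigop{\sqrt{\log N/(Nh^2)} + h^2}$. Expanding the product and subtracting $a(t,t')b(s,s')$ yields
\[
\widehat{a}(t,t')\widehat{b}(s,s') - a(t,t')b(s,s')
= \Theta\, a(t,t')\, R_b(s,s') + \Theta^{-1} b(s,s')\, R_a(t,t') + R_a(t,t')\, R_b(s,s').
\]
Taking the supremum over $(t,s,t',s')\in[0,1]^4$ and applying the triangle inequality, I just need that $\sup |a|$ and $\sup |b|$ are finite, which follows from assumption~\ref{assumption:B7} (twice continuous differentiability on the compact square implies boundedness). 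The first two terms are then $\bigop{\sqrt{\log N/(Nh^2)} + h^2}$, and the third is $\bigop{(\sqrt{\log N/(Nh^2)} + h^2)^2}$, which is of strictly smaller order and hence negligible.

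There is no genuine obstacle here; the argument is purely algebraic once Theorem~\ref{thm:estim_a_b} is in hand. The only minor points to verify are: (i) that $\Theta$ is a nonzero finite constant, which is ensured by assumption~\ref{assumption:B10} together with continuity of $b$ and $f_s$; and (ii) that the uniformity in $(t,t')$ from Theorem~\ref{thm:estim_a_b}\eqref{eq:thm:estim_a_b:asymp_a} and uniformity in $(s,s')$ from \eqref{eq:thm:estim_a_b:asymp_b} combine to uniformity over the full four-dimensional domain, which is automatic since the bounds on $R_a$ and $R_b$ are already uniform in their respective arguments and the decomposition above is a pointwise product. Summing the three terms gives the claimed rate, completing the proof.
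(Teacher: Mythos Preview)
Your proposal is correct and follows essentially the same approach as the paper's own proof: both reduce the corollary to Theorem~\ref{thm:estim_a_b} via an elementary product decomposition exploiting the cancellation of $\Theta$ and $\Theta^{-1}$. The only cosmetic difference is that the paper uses the two-term telescoping $\widehat a\widehat b - ab = (\widehat a - \Theta a)\widehat b + \Theta a(\widehat b - \Theta^{-1}b)$, which requires bounding $|\widehat b|$ (immediate from the theorem), whereas you expand into three terms with an explicit cross term $R_aR_b$; these are equivalent standard manipulations.
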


Finally, the asymptotic behaviour of the noise level $\sigma^2$ is given as the following proposition.
\begin{proposition}\label{prop:estim_sigma}
Under the assumptions  \ref{assumption:A1}, \ref{assumption:B1} -- \ref{assumption:B11}:
$$
\widehat{\sigma}^2 = \sigma^2 + \bigop{ \sqrt{\frac{\log N}{N h^2}} + h^2 }\qquad\text{as}\quad N\to\infty.
$$
\end{proposition}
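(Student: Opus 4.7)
The plan is to write the integrand $\widehat{V}(t,s) - \widehat{a}(t,t)\widehat{b}(s,s)$ as $\sigma^2$ plus a uniformly negligible remainder, and then exchange the uniform rate with the integral. Since $4\int_{1/4}^{3/4}\int_{1/4}^{3/4}\sigma^2\,dt\,ds = \sigma^2$, the constant part reproduces $\sigma^2$ exactly, and the stated rate is inherited from the sup-norm bound on the remainder.

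First I would establish
$$\sup_{(t,s)\in[1/4,3/4]^2}\left|\widehat{V}(t,s) - \bigl[a(t,t)b(s,s)+\sigma^2\bigr]\right| = \bigop{\sqrt{\tfrac{\log N}{Nh^2}} + h^2}.$$
Writing $Y_{nm} - \widehat{\mu}(t_{nm},s_{nm}) = [X_n(t_{nm},s_{nm}) - \mu(t_{nm},s_{nm})] + \varepsilon_{nm} + [\mu(t_{nm},s_{nm}) - \widehat{\mu}(t_{nm},s_{nm})]$ and squaring, one obtains
$G_{nmm} = [X_n(t_{nm},s_{nm})-\mu(t_{nm},s_{nm})]^2 + 2[X_n(t_{nm},s_{nm})-\mu(t_{nm},s_{nm})]\varepsilon_{nm} + \varepsilon_{nm}^2 + R_{nm}$,
where the remainder $R_{nm}$ collects the cross terms involving $\widehat{\mu}-\mu$. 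By Proposition~\ref{prop:estim_mu} these cross terms are uniformly $\bigop{\sqrt{\log N/(Nh^2)} + h^2}$, and hence contribute at most this order to the local linear smoother $\widehat{V}$. The remaining ``oracle'' pieces have conditional mean $a(t_{nm},t_{nm})b(s_{nm},s_{nm}) + \sigma^2$, which is twice continuously differentiable in $(t,s)$ by \ref{assumption:B7}. A standard two-dimensional local linear argument---identical in form to the one used for $\widehat{\mu}$ in Proposition~\ref{prop:estim_mu} and for the initial smoother $\widehat{a}_0$ in Theorem~\ref{thm:estim_a_b}---then yields the displayed rate, invoking \ref{assumption:B11} for the bandwidths and \ref{assumption:B8} for the moment conditions needed for the uniform stochastic bound.

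Second, restricting Corollary~\ref{corollary:estimation_complete_covariance_structure} to the diagonal $t'=t$, $s'=s$ gives
$$\sup_{(t,s)\in[0,1]^2}\left|\widehat{a}(t,t)\widehat{b}(s,s) - a(t,t)b(s,s)\right| = \bigop{\sqrt{\tfrac{\log N}{Nh^2}} + h^2},$$
and subtracting this from the previous display shows that $\widehat{V}(t,s) - \widehat{a}(t,t)\widehat{b}(s,s) = \sigma^2 + \bigop{\sqrt{\log N/(Nh^2)} + h^2}$ uniformly on $[1/4,3/4]^2$. Integrating over that square and multiplying by $4$ preserves the stochastic order while cancelling the area factor, producing the claim.

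The main obstacle is the first step: carefully bounding how the plug-in error $\widehat{\mu}-\mu$ propagates through the diagonal raw covariances into $\widehat{V}$, and pushing the sup-norm analysis of the local linear smoother through despite the fact that the scatter points $\{(t_{nm},s_{nm},G_{nmm})\}$ are dependent within each surface and involve squared observations (so the relevant moment condition is $2\rho'$ on $\varepsilon_{11}$, exactly as provided in \ref{assumption:B8}). Once this is in place, the remaining steps are essentially algebraic and follow from results already stated.
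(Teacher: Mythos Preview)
Your proposal is correct and follows essentially the same route as the paper: establish the uniform rate for $\widehat{V}(t,s)$ targeting $a(t,t)b(s,s)+\sigma^2$ by separating out the plug-in error from $\widehat{\mu}$, invoke Corollary~\ref{corollary:estimation_complete_covariance_structure} on the diagonal, subtract, and integrate. The paper packages the first step slightly more tersely by reusing the already-proved identity $G_{nmm'}=\tilde G_{nmm'}+\bigop{\sqrt{\log N/(Nh^2)}+h^2}$ and appealing to its $m$-dependent kernel-average framework for the oracle smoother, but the substance of the argument is the same as yours.
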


This completes the asymptotic theory for our estimsators, and we now turn to prediction.
The following theorem shows that the predictor $\widehat{\Pi}( X^{new} \vert \mathbb{Y}^{new} )$ defined in Section~\ref{sec:prediction} converges -- as the sample size grows to infinity -- to its oracle counterpart \eqref{eq:blup}, which assumes the knowledge of the true distribution of the data. Moreover, the theorem also proves the asymptotic coverage of the point-wise and simultaneous confidence bands \eqref{eq:confidence_band_pointwise} and \eqref{eq:confidence_band_simul}.
\begin{theorem}
\label{thm:prediction_bands}
Under the assumptions  \ref{assumption:A1}, \ref{assumption:B1} -- \ref{assumption:B11}:
\begin{equation}\label{eq:thm:prediction_bands_statement_1}
\sup_{(t,s)\in[0,1]^2} \left| \widehat{\Pi}( X^{new}(t,s) \vert \mathbb{Y}^{new} ) - \Pi( X^{new}(t,s) \vert \mathbb{Y}^{new} ) \right| = \smallop{1}, \qquad\text{as}\quad N\to\infty,
\end{equation}
conditionally on $\mathbb{Y}^{new}$. 

Assuming further \ref{assumption:A2} and fixing $\alpha\in(0,1)$:
\begin{align*}
\forall (t,s)\in[0,1]^2 \lim_{N\to\infty} \Prob\left( 
\left| \widehat{\Pi}(X^{new}(t,s) \vert \mathbb{Y}^{new} ) - X^{new}(t,s) \right|  
\leq u_{1-\alpha}
\sqrt{ \widehat{\var}\left( X^{new}(t,s) \vert \mathbb{Y}^{new} \right) }
\,\middle|\, \mathbb{Y}^{new} \right)
 &= 1-\alpha,\\
\lim_{N\to\infty}\Prob\left(
\sup_{(t,s)\in[0,1]^2}
\left[ \widehat{\var}\left( X^{new}(t,s)\vert \mathbb{Y}^{new} \right) \right]^{-1/2}
\left| \widehat{\Pi}(X^{new}(t,s) \vert \mathbb{Y}^{new} ) - X^{new}(t,s) \right|  
\leq \widehat{z}_{1-\alpha}
\,\middle|\, \mathbb{Y}^{new} \right)
&= 1-\alpha.
\end{align*}

\end{theorem}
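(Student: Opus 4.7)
The plan is to exploit the consistency results from Proposition~\ref{prop:estim_mu}, Theorem~\ref{thm:estim_a_b}, Corollary~\ref{corollary:estimation_complete_covariance_structure}, and Proposition~\ref{prop:estim_sigma}, together with the fact that the BLUP \eqref{eq:blup} and its conditional covariance \eqref{eq:blup_variance} are smooth functionals of the underlying model components. The crucial observation is that even though $\widehat a$ and $\widehat b$ individually converge only up to the multiplicative constants $\Theta$ and $1/\Theta$, every entry of $\cov(X^{new}(t,s),\mathbb{Y}^{new})$ and $\var(\mathbb{Y}^{new})$ depends only on the product $a(\cdot,\cdot)b(\cdot,\cdot)$, which is estimated consistently by $\widehat a\,\widehat b$ in the uniform norm. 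Since the training-based estimators $\widehat\mu,\widehat a,\widehat b,\widehat\sigma^2$ are independent of $\mathbb{Y}^{new}$, working conditionally on $\mathbb{Y}^{new}$ (and hence on the locations and on the count $M^{new}\le M^{max}$) reduces everything to finite-dimensional matrix algebra of bounded dimension.

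For the first assertion \eqref{eq:thm:prediction_bands_statement_1}, I would plug the consistency rates into the BLUP formula. Proposition~\ref{prop:estim_mu} gives uniform convergence of $\widehat\mu$, and Corollary~\ref{corollary:estimation_complete_covariance_structure} gives
\[
\sup_{(t,s)\in[0,1]^2}\bigl\|\widehat{\cov}(X^{new}(t,s),\mathbb{Y}^{new})-\cov(X^{new}(t,s),\mathbb{Y}^{new})\bigr\|=\smallop{1},
\]
together with the analogous fact for $\widehat{\var}(\mathbb{Y}^{new})$ and for $\E\mathbb{Y}^{new}$. Because $\var(\mathbb{Y}^{new})\succeq\sigma^2 I$ with $\sigma^2>0$, its smallest eigenvalue is bounded away from zero, and since $\widehat\sigma^2\to\sigma^2$ the estimated variance matrix is invertible with probability tending to one. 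The identity $A^{-1}-B^{-1}=A^{-1}(B-A)B^{-1}$ then transfers the rate to the inverse, and the product rule combined with the uniformity of the rates in $(t,s)$ yields \eqref{eq:thm:prediction_bands_statement_1}.

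For the confidence band statements under \ref{assumption:A2}, the BLUP coincides with the conditional expectation $\E[X^{new}(t,s)\vert\mathbb{Y}^{new}]$, so the residual process $R(t,s):=X^{new}(t,s)-\Pi(X^{new}(t,s)\vert\mathbb{Y}^{new})$ is, conditionally on $\mathbb{Y}^{new}$, a centered Gaussian process with covariance \eqref{eq:blup_variance}. Pointwise, $R(t,s)/\sqrt{\var(X^{new}(t,s)\vert\mathbb{Y}^{new})}\sim\mathcal{N}(0,1)$, so the exact coverage at each fixed $(t,s)$ is immediate; Slutsky's lemma combined with the consistency of $\widehat{\Pi}$ and $\widehat{\var}$ promotes this to the stated asymptotic coverage once the population quantities are replaced by their plug-in estimators. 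For the simultaneous band, I would standardize to obtain the Gaussian process $Z(t,s)$ with correlation kernel $\corr(X^{new}(t,s),X^{new}(t',s')\vert\mathbb{Y}^{new})$ and sup-quantile $z_{1-\alpha}$. Uniform convergence $\widehat{\corr}\to\corr$ on $[0,1]^4$ follows from the preceding results and from the fact that the conditional variance is bounded below by a positive constant (again because $\sigma^2>0$), so that the denominator in \eqref{eq:corr_kernel_estimated_Z} is non-degenerate.

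The main obstacle is translating the uniform convergence of $\widehat{\corr}$ into convergence of the quantile $\widehat z_{1-\alpha}\to z_{1-\alpha}$. This requires a stability result for suprema of Gaussian processes under uniform perturbations of the covariance: uniform convergence of covariance functions on the compact $[0,1]^2$ yields weak convergence of $\widehat W$ to $W$ (by finite-dimensional convergence together with tightness, since the modulus of continuity of a Gaussian process is controlled by the covariance), and the absolute continuity of the law of $W$ (the supremum of a nondegenerate Gaussian process is atomless, by the classical Tsirelson/Lifshits result) allows one to conclude convergence at the quantile level. A final Slutsky step combined with the Portmanteau theorem yields the stated asymptotic coverage. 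Subsidiary care is only needed to verify that all $\bigop{\cdot}$ rates from Propositions~\ref{prop:estim_mu}--\ref{prop:estim_sigma} hold uniformly in $(t,s)$, which they do by construction.
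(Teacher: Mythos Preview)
Your proposal is correct and follows essentially the same approach as the paper: plug the uniform consistency results for $\widehat\mu$, $\widehat a\widehat b$, and $\widehat\sigma^2$ into the BLUP formula, use positive definiteness of $\var(\mathbb{Y}^{new})$ to control the inverse, and then invoke Slutsky-type arguments for the pointwise and simultaneous bands. If anything, you are more explicit than the paper about two technical points---the identity $A^{-1}-B^{-1}=A^{-1}(B-A)B^{-1}$ for the matrix-inverse perturbation, and the need for absolute continuity of the law of $W$ (via a Tsirelson-type result) to pass from $\widehat W\stackrel{d}{\to}W$ to quantile convergence $\widehat z_{1-\alpha}\to z_{1-\alpha}$---both of which the paper uses implicitly.
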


The rates established in this section manifest the statistical consequences of separability. Corollary~\ref{corollary:estimation_complete_covariance_structure} shows that the complete covariance structure is estimated with the rate $O_\Prob(\sqrt{(\log N) / (N h^2)} + h^2)$, which is the known optimal minimax convergence rate \citep{fan1996} for two dimensional non-parametric regression. 
By steps similar to our proofs presented in Section~\ref{sec:proofs}, it could be shown that the empirical covariance smoother yields the convergence rate $O_\Prob(\sqrt{(\log N) / (N h^4)} + h^2)$.
The empirical covariance smoother's convergence rate is thus slower than the one found in Corollary~\ref{corollary:estimation_complete_covariance_structure}, achieved via the separable model.

\section{Simulation Study}\label{sec:simulation_study}

We explore the finite sample performance of the proposed methodology by means of a moderate simulation study (total runtime of about one thousand CPU hours). Computational efficiency (relatively small runtimes) is achieved by working on a $20 \times 20$ grid, like described in Section \ref{sec:marginalization}. 
Every surface $X_1,\ldots,X_{100}$ is first sampled fully on this grid as a zero-mean matrix-variate Gaussian with covariance $C$ (to be specified), superposed with noise (zero-mean i.i.d. Gaussian entries with variance $\sigma^2$), and then sub-sampled in a way that only a fraction of the entries, selected at random, is retained. The covariance $C$ is always standardized to have trace one, and $\sigma^2$ is chosen such that the gridded white noise process is also trace one.

\textbf{Methods compared.} We compare the proposed separable estimator $\widehat{c} = \widehat{a} \cdot \widehat{b}$ against the non-separable empirical estimator obtained by local linear smoothing in four dimensions (4D smoothing), and also against the best separable approximation (BSA) \cite{genton2007,masak2020} obtained from the fully observed and noise-free surfaces. We also compare the proposed estimator against its \textit{one-step} version ($\widehat{c} = \widehat{a}_0 \cdot \widehat{b}_0$, cf. Figure \ref{fig:estimation_procedure}), while it is shown in Section \ref{sec:additional_simulations} that more than two steps lead to a similar performance as the proposed two-step estimator.

\textbf{Covariance choices.} We consider four specific choices for the covariance: (a) a separable covariance with Fourier basis eigenfunctions and power decay of the eigenvalues; (b) Brownian sheet covariance, which is also separable; (c) the parametric covariance introduced by \citet{gneiting2002}, which is non-separable; and (d) a superposition of the Fourier covariance from (a) with another separable covariance, having shifted Legendre polynomial eigenfunctions and power decay of the eigenvalues. We discuss these choices in detail in Section \ref{sec:setup_simulations}. However, for the purpose of presenting the simulation results, it is only important to point out the following. Firstly, while the Fourier setup (a) and the Brownian setup (b) are separable, the Gneiting setup (c) and the Fourier-Legendre setup (d) are non-separable. Secondly, while the Brownian setup (b) and Gneiting setup (c) lead to rather flat covariances, the Fourier setup (a) and the Fourier-Legendre setup (d) lead to quite wiggly (though infinitely smooth) covariances.

\textbf{Sparsity.} In all simulation setups, we consider different percentages of the entries observed $p \in \{2, 5, 10, 20, 40, 70\}$. Since the grid size is $20 \times 20$, this means for example that for $p=2$ we have $2/100 \cdot 20^2 = 8$ observations per surface. For all the setups and percentages, we report relative estimation errors $\|\widehat{C} - C\|_2/\| C\|_2$, where $\widehat{C}$ is an estimator obtained by one of the four methods above (4D smoothing, one-step, proposed, or BSA). The results are shown in Figure \ref{fig:simulation_results}. Every reported error was calculated as an average of 100 Monte Carlo runs.

\begin{figure}[!t]
   \centering
   \begin{tabular}{ccc}
   (a) Fourier -- separable, wiggly && (b) Brownian -- separable, flat  \\
   \includegraphics[width=0.45\textwidth]{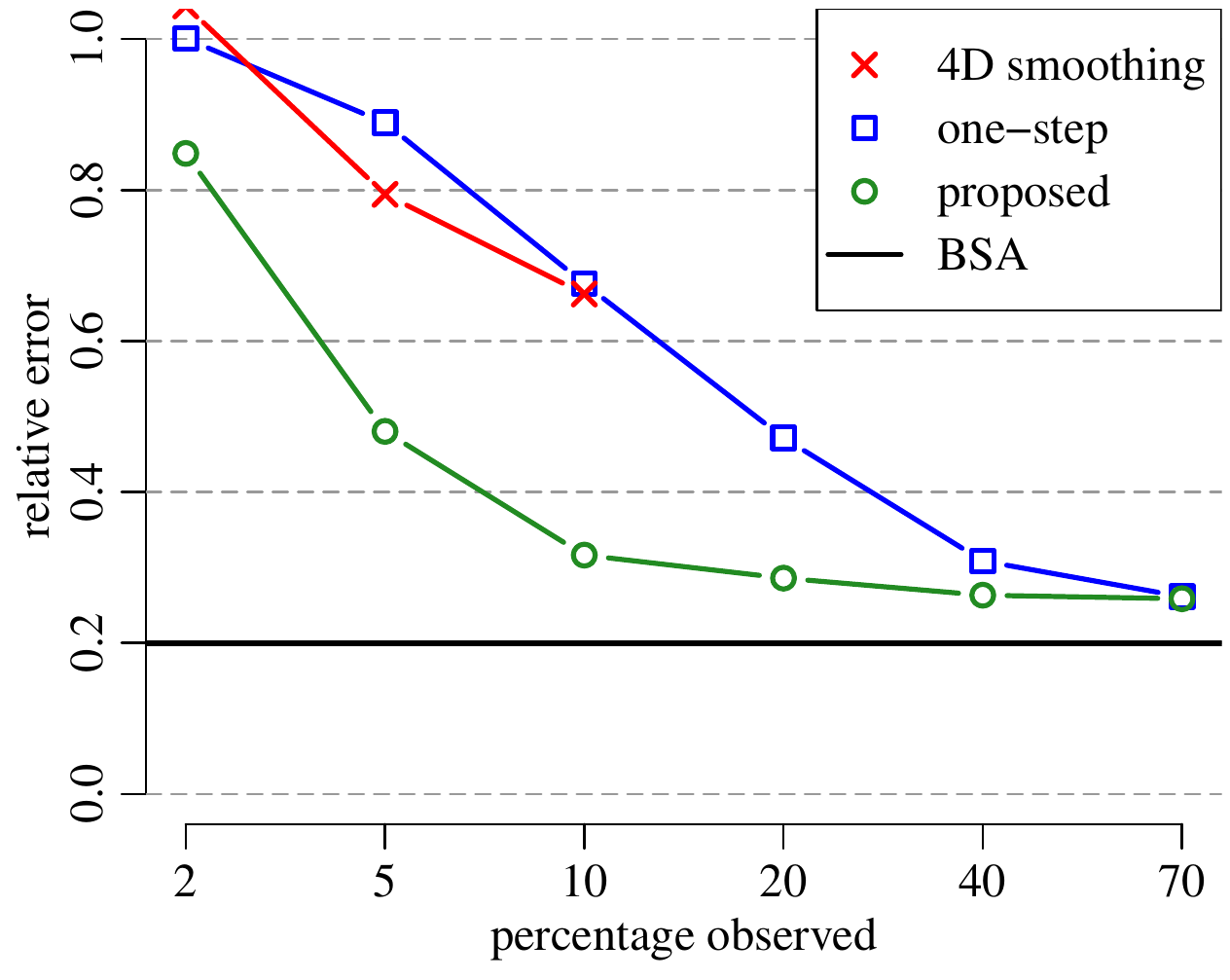} &&
   \includegraphics[width=0.45\textwidth]{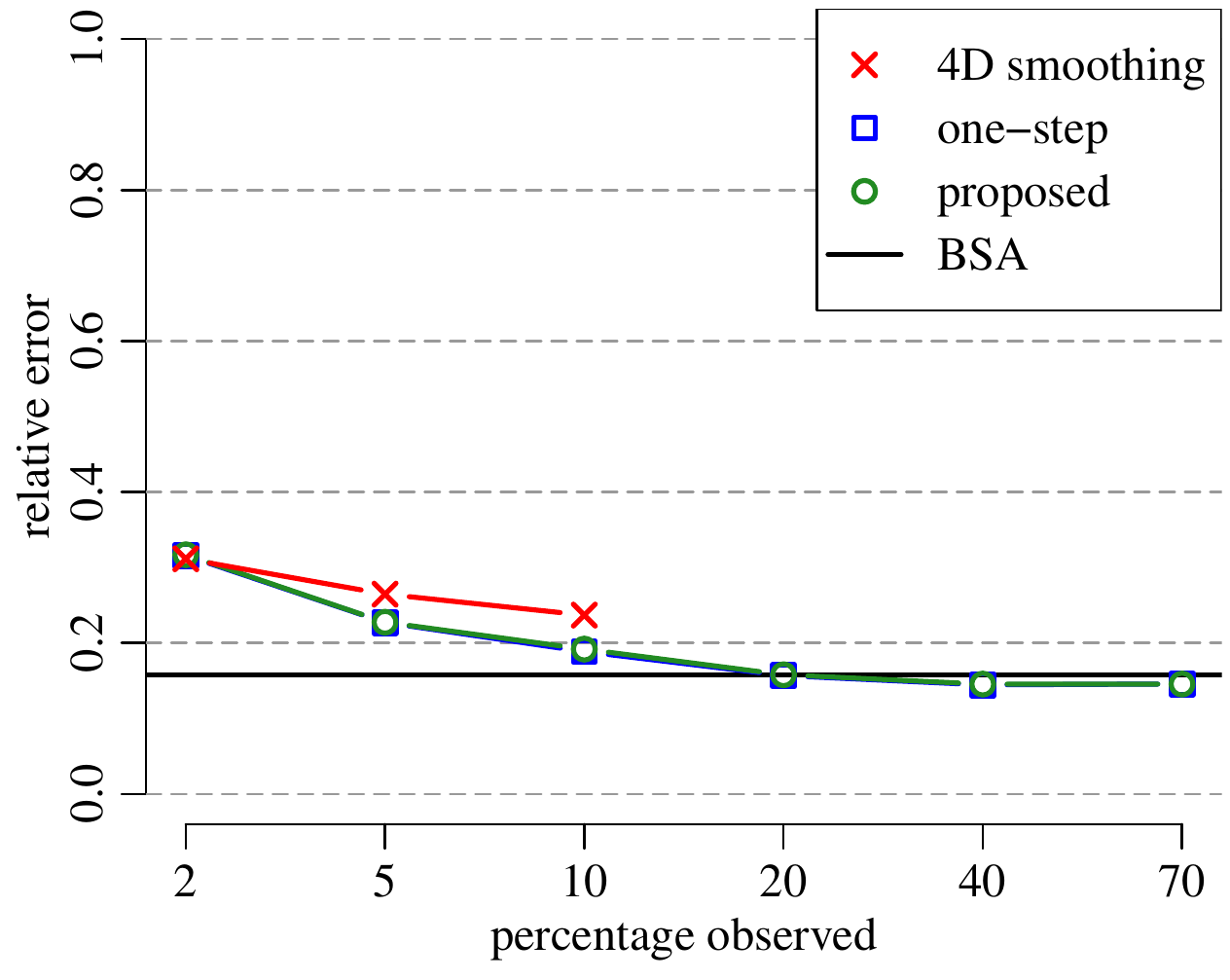} \\\\
   (c) Gneiting -- non-separable, flat && (d) Fourier-Legendre -- non-separable, wiggly \\
   \includegraphics[width=0.45\textwidth]{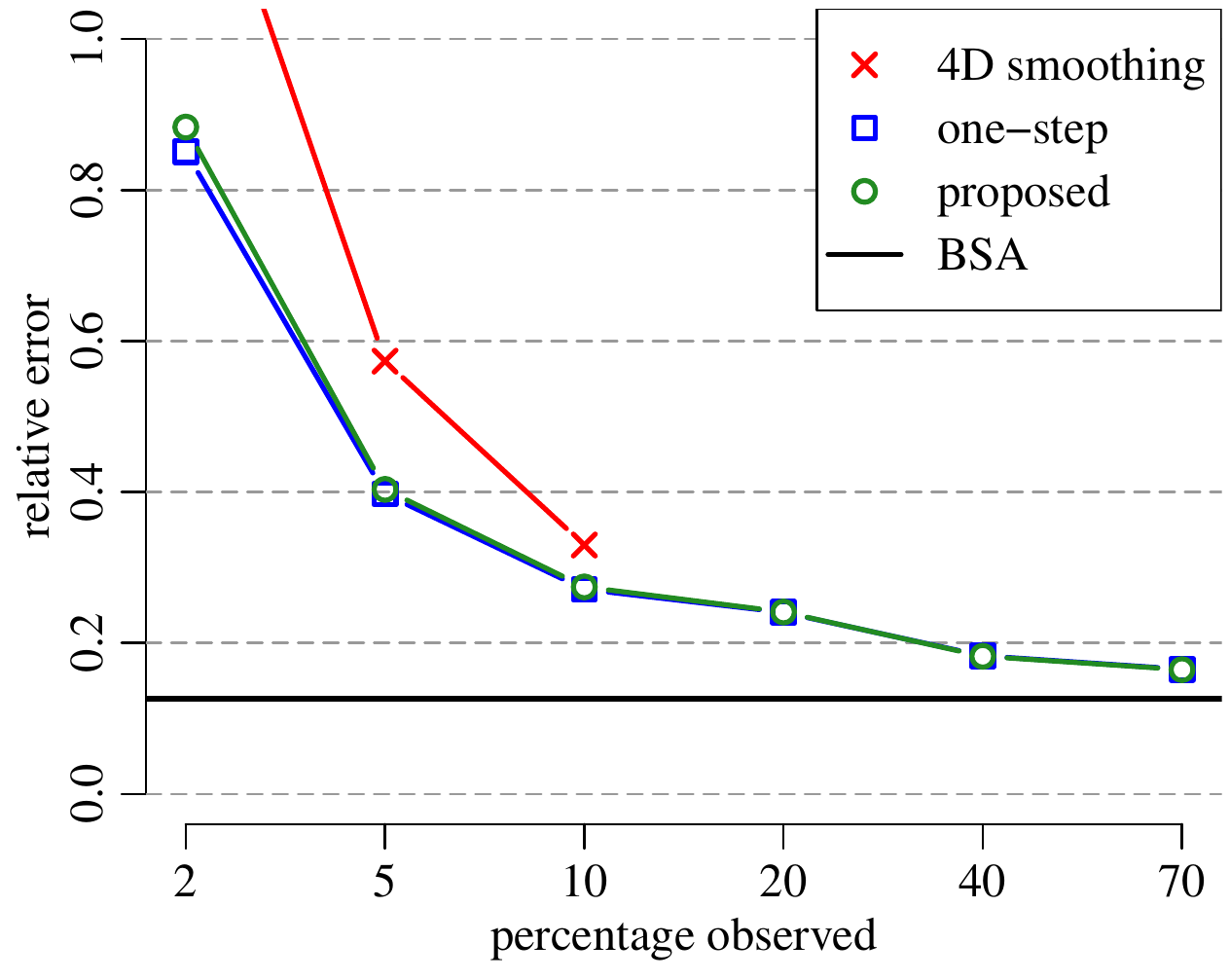} &&
   \includegraphics[width=0.45\textwidth]{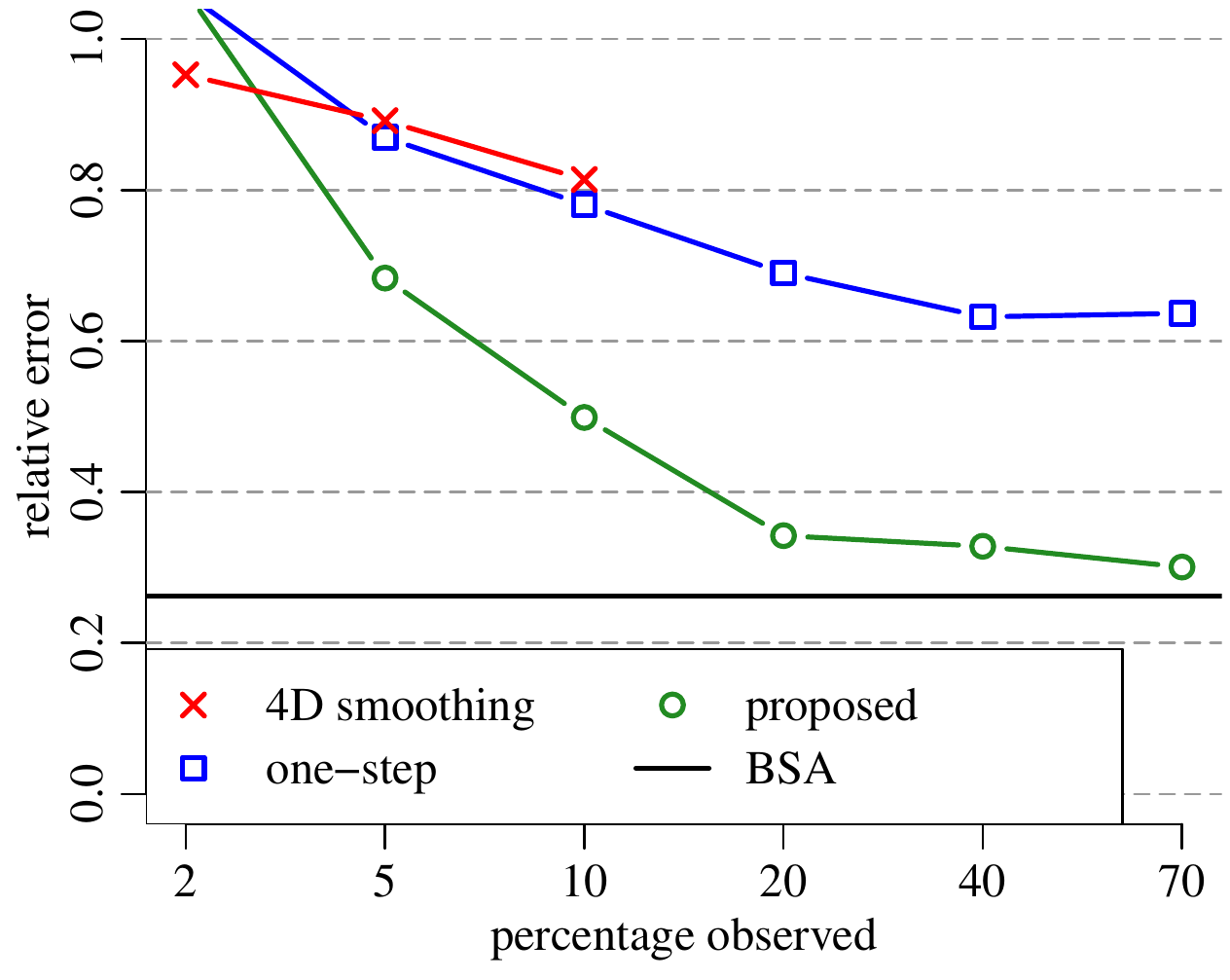}
   \end{tabular}  
   \caption{Relative estimation errors depending on percentages of the surfaces observed $p$ for 4 ground truth covariance choices (a)-(d) and 4 methods compared. BSA provides a baseline, having access to full surfaces and hence not depending on $p$. For 4D smoothing, only results for small $p$ are reported.}
    \label{fig:simulation_results} 
\end{figure}

\textbf{Error components.} The reported estimation errors can be thought of having four components: (i) asymptotic bias, which is zero if the true covariance is separable, i.e. in cases (a) and (b); (ii) error due to finite number of samples ($N=100$); (iii) error due to sparse observations (i.e. not observing the full surfaces); and (iv) noise contribution. BSA errors are always free of the latter two, providing a baseline. The effect of not observing the full surfaces is displayed for different values of $p$. Finally, the noise contamination prevents smoothing approaches to reach the performance of BSA even for $p$ large. Although our methodology explicitly handles noise, the finite sample performance is better with noise-free data, which only BSA has access to.

\textbf{Results.} There is a number of comments to be made about the results in Figure \ref{fig:simulation_results}:
\begin{enumerate}
\item In the setups where the covariance is flat, i.e. (b) and (c), the one-step and the proposed approaches work the same, and 4D smoothing also works relatively well. These two setups are simple in a sense, because information can be borrowed quite efficiently via smoothing, regardless of whether the truth is separable or not. Still, the proposed approach utilizing separability does not perform worse than 4D smoothing even in the non-separable case (c), having the advantage of being much faster to obtain. For $p=10$ the proposed estimator takes only a couple of seconds while 4D smoothing takes about 40 minutes even at this relatively small size of data.
\item When the true covariance is wiggly, the proposed methodology clearly outperforms 4D smoothing, both for the separable truth (a) and the non-separable truth (d). The reason is that smoothing procedures are not very efficient in this case, and borrowing strength via separability is imperative.
\item The reason why error curves for 4D smoothing are only calculated up to $p=10$ is the computational cost of smoothing in higher dimensions. In fact, performing 4D smoothing when $p=10$ took more time than calculationg of all the remaining results combined (cross-validations included). The runtimes are reported in Section \ref{sec:additional_simulations}.
\end{enumerate}

\begin{remark}\label{rem:cross-validation}
We used a standard cross-validation strategy to choose bandwidths for the separable model. However, for 4D smoothing, cross validation is not feasible -- using it would increase the total runtime of our simulation study to over one thousand CPU \emph{days}. Therefore, we chose bandwidths for 4D smoothing based on those cross-validated for a separable model. This approach is taken throughout the paper. See Section \ref{sec:implementation_details} for details and Section \ref{sec:additional_simulations} for an empirical demonstration of the effectiveness of this choice.
\end{remark}

\section{Data Analysis: Implied Volatility Surfaces}\label{sec:data_analysis}

A \textit{European call option} is a contract granting its holder the right, but not the obligation, to buy an underlying asset, for example a stock, for an agreed-upon strike price at a defined expiration time.
Finding a model and deriving a pricing formula for the fair price of a European call option was a milestone problem in quantitative finance and stochastic calculus.  \cite{black1973pricing,merton1973theory} solved this problem and under the so-called the \textit{Black-Scholes-Merton} model they shown that the fair price of the European call option on a non-divident paying asset is given by the \textit{Black-Scholes formula} \citep{hull2006options}:
\begin{equation}\label{eq:Black_Scholes}
C_t^{BS}(m,\tau,\sigma_S) = S_t F_{N(0,1)}(d_1) - \kappa e^{-rt} F_{N(0,1)}(d_2),   
\end{equation}
$$ d_1 = \frac{-\log m + \tau ( r + \sigma_S^2/2 )}{\sigma_S \sqrt{t}}, \qquad
d_2 = \frac{-\log m + \tau(r-\sigma_S^2/2) }{\sigma_S\sqrt{t}},$$
where $m=\kappa/S_t$ is the \textit{moneyness} defined as the ratio of the strike  $\kappa$ and the current underlying asset price $S_t$ at the current time $t$,
$\tau=T-t$ denotes the time to expiration,
$\sigma_S$ is the volatility parameter in the Black-Scholes-Merton model,
$r$ is the risk-free interest rate,
and $F_{N(0,1)}(\cdot)$ denotes the cumulative distribution function of the standard normal distribution.
The only unknown quantity among the inputs in \eqref{eq:Black_Scholes} is the volatility $\sigma_S$.

Besides calculating the fair price of an option given the (estimated/realised) volatility, the Black-Scholes formula \eqref{eq:Black_Scholes}
can be used  in reverse: having observed the market price of the option, denoted as $C_t^*(m,\tau)$, find the value of $\sigma_t^{BS}(m,\tau)$ that solves the equation
$$ C_t^{BS}( m,\tau, \sigma_t^{IV}(m,\tau) ) = C_t^*(m,\tau) .$$
It can be shown that such value $\sigma_t^{IV}(m,\tau)>0$, called the \textit{implied volatility}, exists uniquely for each triplet of $m>0$, $\tau>0$, and $C_t^*(m,\tau)>0$.
Now, if the market indeed followed the Black-Scholes-Merton model and the investors were rational, the implied volatility $\sigma_t^{IV}(m,\tau)$ for various $m$ and $\tau$ would be constant.
However, this is not true for real market data pointing to the shortcomings of the Black-Scholes-Merton model.
Despite these shortcomings, the Black-Scholes formula \eqref{eq:Black_Scholes} is widely used for \textit{transforming} the observed option prices into an ensemble of implied volatilities in a bijective manner.
The advantage of considering the implied volatilities as opposed to the market prices of the options is that the implied volatility surfaces tend to be smoother and comparable across assets. Thus we can take advantage of the functional data analysis framework.

In contrast to the European options, an \textit{American  call option} grants the right to buy the underlying asset anytime until the expiration time $T$. The pricing of American options on possibly dividend paying stocks is more complicated because the pricing involves the optimal stopping problem. In general, no closed form solution exists and numerical algorithms are required \citep{cox1979option}. Likewise, the observed market option prices can be transformed into implied volatilities.

In this section we consider the options data offered by DeltaNeutral \citep{DeltaNeutral}.
This free data set contains the end-of-day prices as well as the calculated implied volatilities for options on U.S. Equities markets. The data covers the period from January 2003 until April 2019 but limits each month to contain the daily options data on only one symbol (a stock or an index). The currently included symbol changes every month and the options on some of the symbols are American while some are European. For each month we pick randomly only one trading day with the data on the currently available symbol and discard the other trading days. Therefore the sample we analyse contains 196 snapshots with option prices and implied volatilities. We discard the non-liquid options and consider the contracts with the log-moneyness $\log m=\log(K/S_t) \in [-0.5, 0.5]$ and the time to expiration $\tau=T-t \in [14, 365]$ (in days). Moreover, we take the logarithm of the implied volatilities to transform them from the domain $(0,\infty)$ onto the real line.
To reduce computational costs we round  the log-moneyness and the time to expiration to fall on a common $50\times 50$ grid, c.f. Section~\ref{sec:computational_details}.

Figure~\ref{fig:iv_snapshots_mean} shows two observations in our samples. The snapshot of Qualcomm Inc (QCOM) features a cummulation  of observation at the short expiration. Here, it happens five times that two raw observations fall in the same pixel on the common $50\times 50$ grid. In these few cases we calculate the average of the two observations in each pair. Due to smoothness, the option prices (and hence the implied volatilities) attain very similar values and thus this rounding and averaging does not change the conclusions of our analysis. We have observed this fact by fitting the model on finer grids while the estimates remained similar.

\begin{figure}[!t]
   \centering
   \begin{tabular}{ccc}
   (a) Dell on 01/19/2006 & (b) Qualcomm Inc on 02/07/2018 & (c) mean surface  \\\\
   \includegraphics[width=0.3\textwidth]{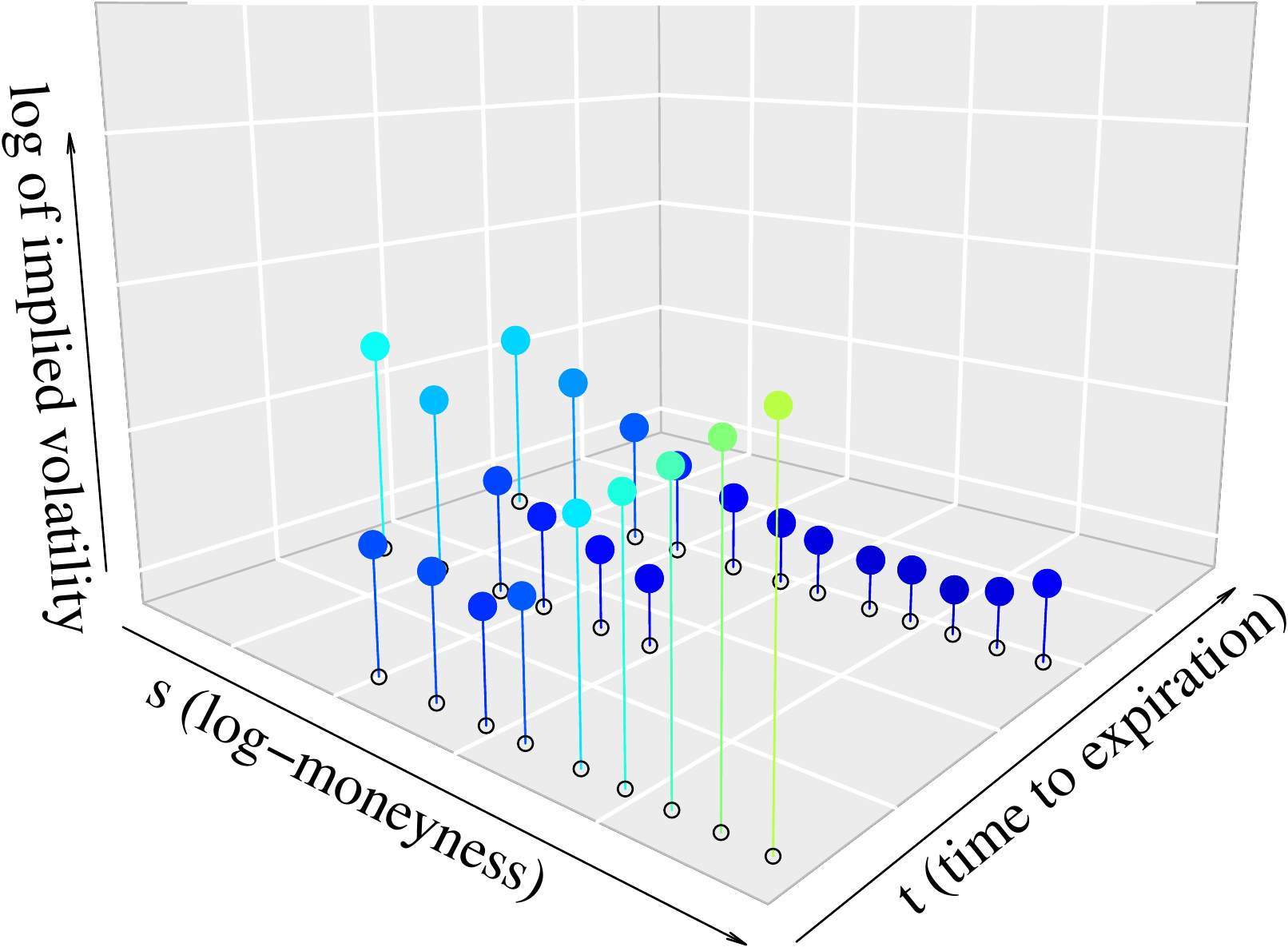}&
   \includegraphics[width=0.3\textwidth]{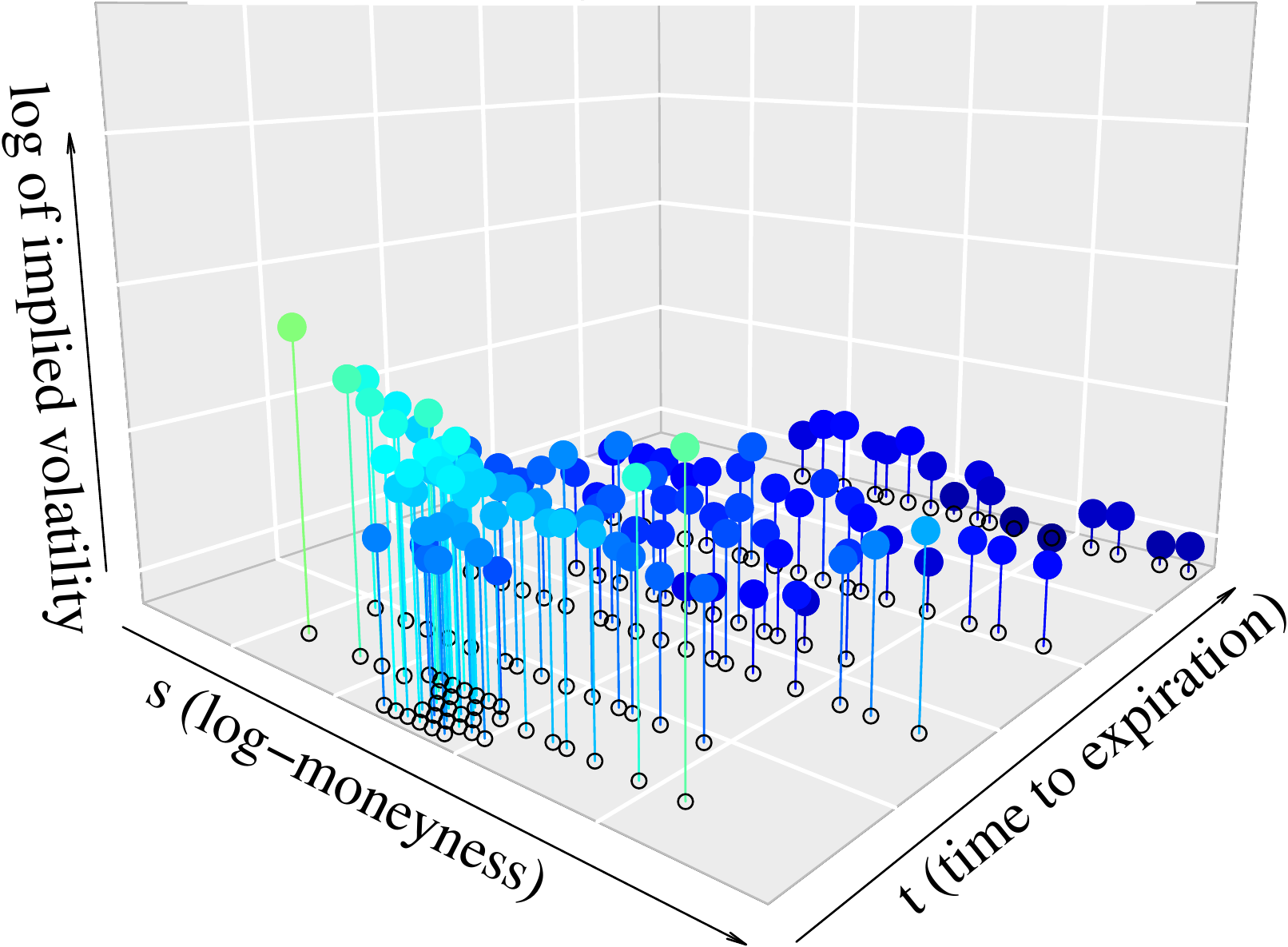}&
   \includegraphics[width=0.3\textwidth]{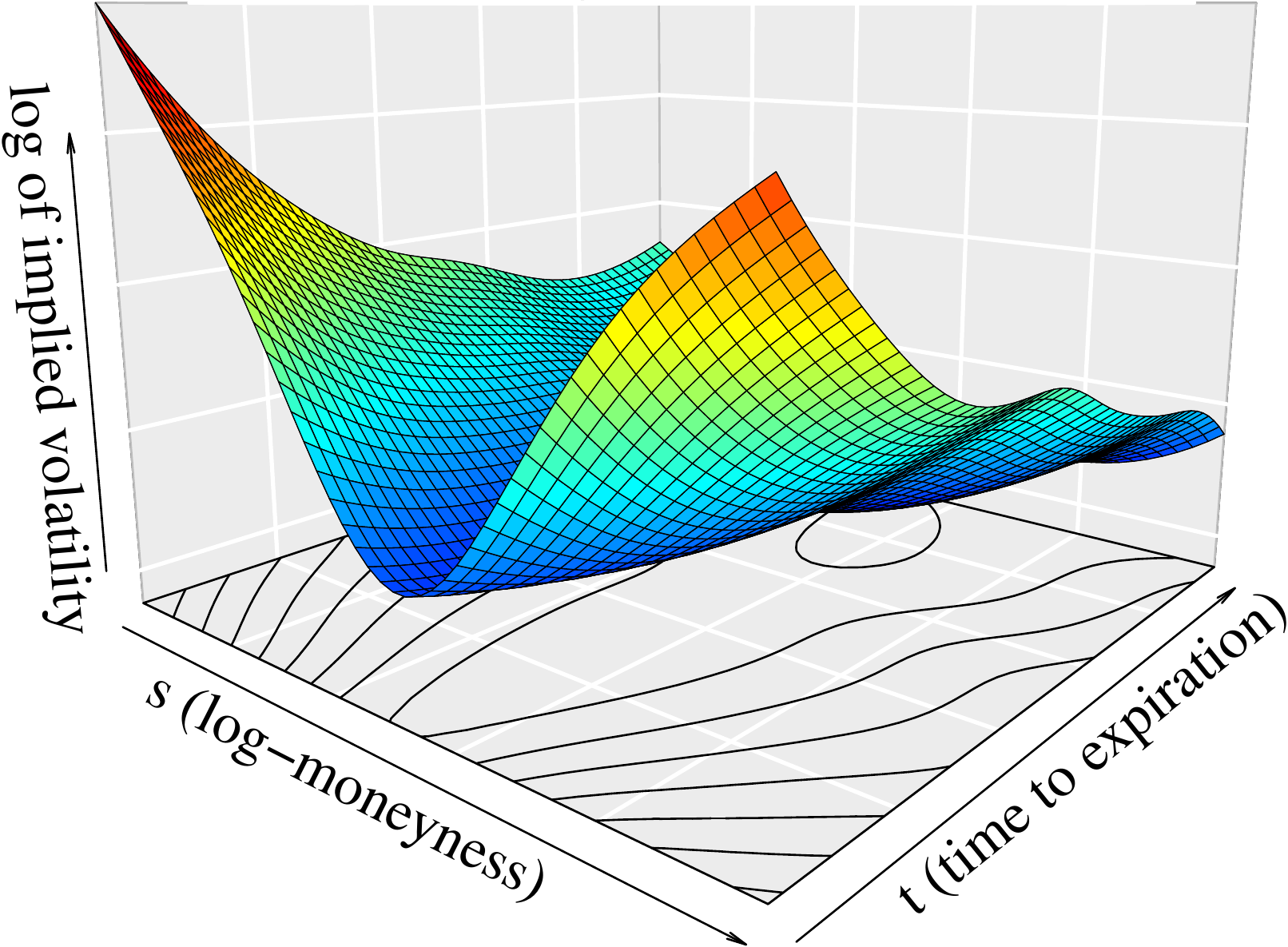} \\\\
   \multicolumn{3}{c}{
   \includegraphics[width=0.9\textwidth]{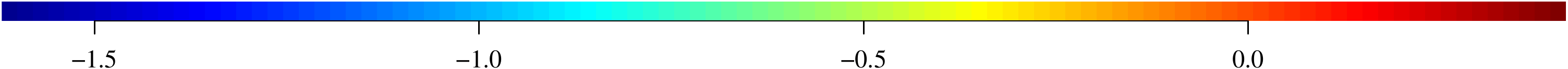}}
   \end{tabular}  
   \caption{Two sample snapshots of the considered log implied volatility surfaces corresponding to the call options on the stocks of Dell Technologies Inc on 01/19/2006 \emph{(\textbf{a})} and Qualcomm Inc on 02/07/2018 \emph{(\textbf{b})}, and the mean surface of the implied volatility gained from pooling all the data together \emph{(\textbf{c})}. }
    \label{fig:iv_snapshots_mean} 
\end{figure}

The estimated mean surface is displayed on the right-hand side of Figure~\ref{fig:iv_snapshots_mean}. The mean surface captures the typical feature of the implied volatility surfaces: the volatility smile \citep{hull2006options}. The implied volatility is typically greater for the options with moneyness away from 1, while this aspect is more significant for shorter times to expiration. 

Figure~\ref{fig:iv_a_b} displays the estimates of the separable covariance components by our methodology presented in Section~\ref{sec:estimation}. The moneyness component demonstrates the highest marginal variability at the center of the covariance surface, meaning that the log implied volatility oscillates the most for the options with the log-moneyness around 0 (i.e. moneyness 1). The marginal variance is lower as the log-moneyness departs from 0.
The eigendecomposition plot of the moneyness covariance kernel shows that the most of variability is explained by a nearly constant function with a small bump at the log-moneyness 0. The second leading eigenfunction adjusts the peak at the log-moneyness around 0 to a greater extent than the first eigenfunction.
The covariance kernel corresponding to the time to expiration variable is smoother and demonstrates slightly higher marginal variability at shorter expiration. This phenomenon is well known for implied volatility \citep{hull2006options}. The eigendecomposition of this covariance kernel indicates that the log implied volatility variation is mostly driven by the constant function while the second leading eigenfunction adjusts the slope of the surface for varying time to expiration.

\begin{figure}[!t]
   \centering
   \begin{tabular}{cc}
   \includegraphics[width=0.47\textwidth]{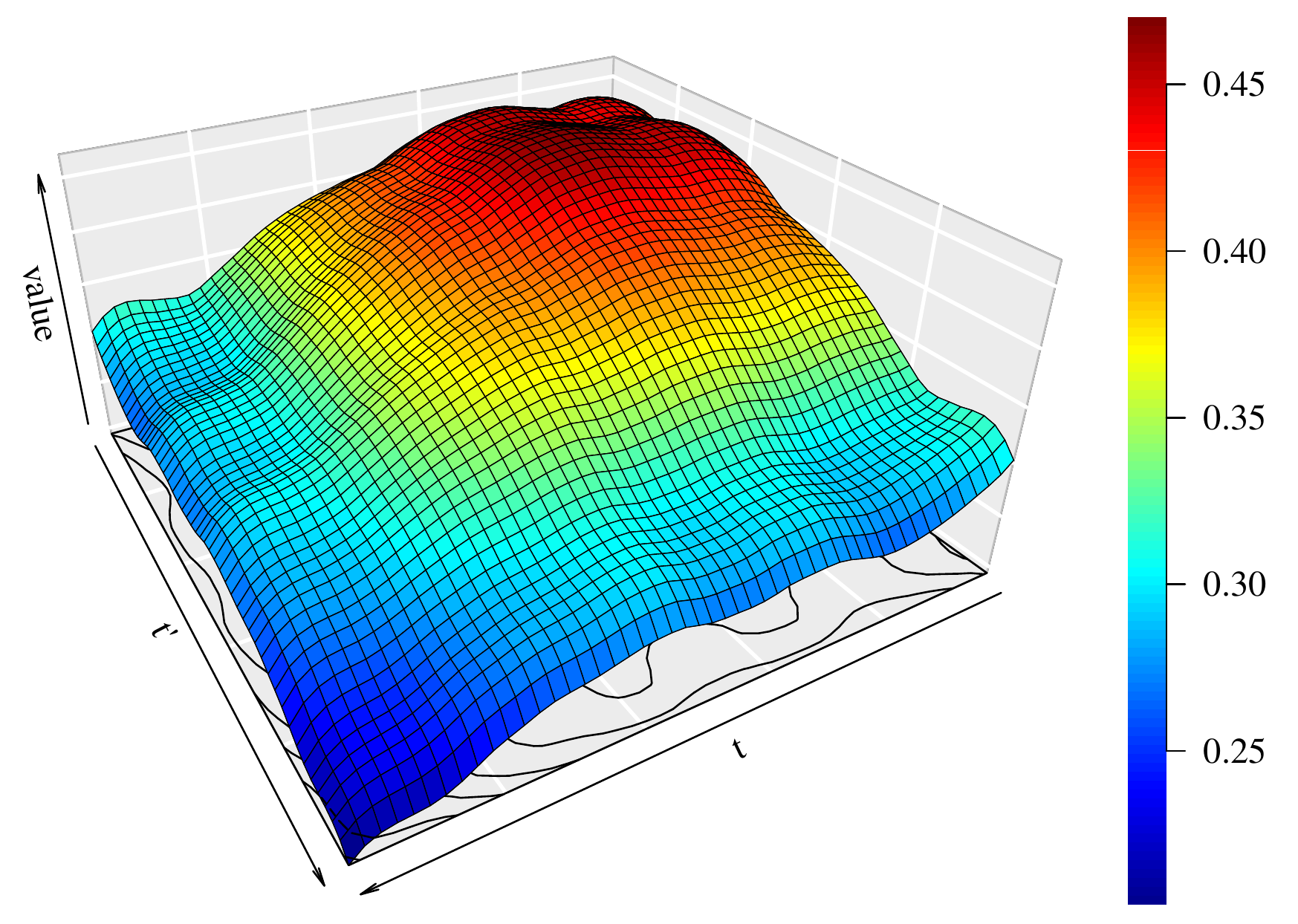}&
   \includegraphics[width=0.47\textwidth]{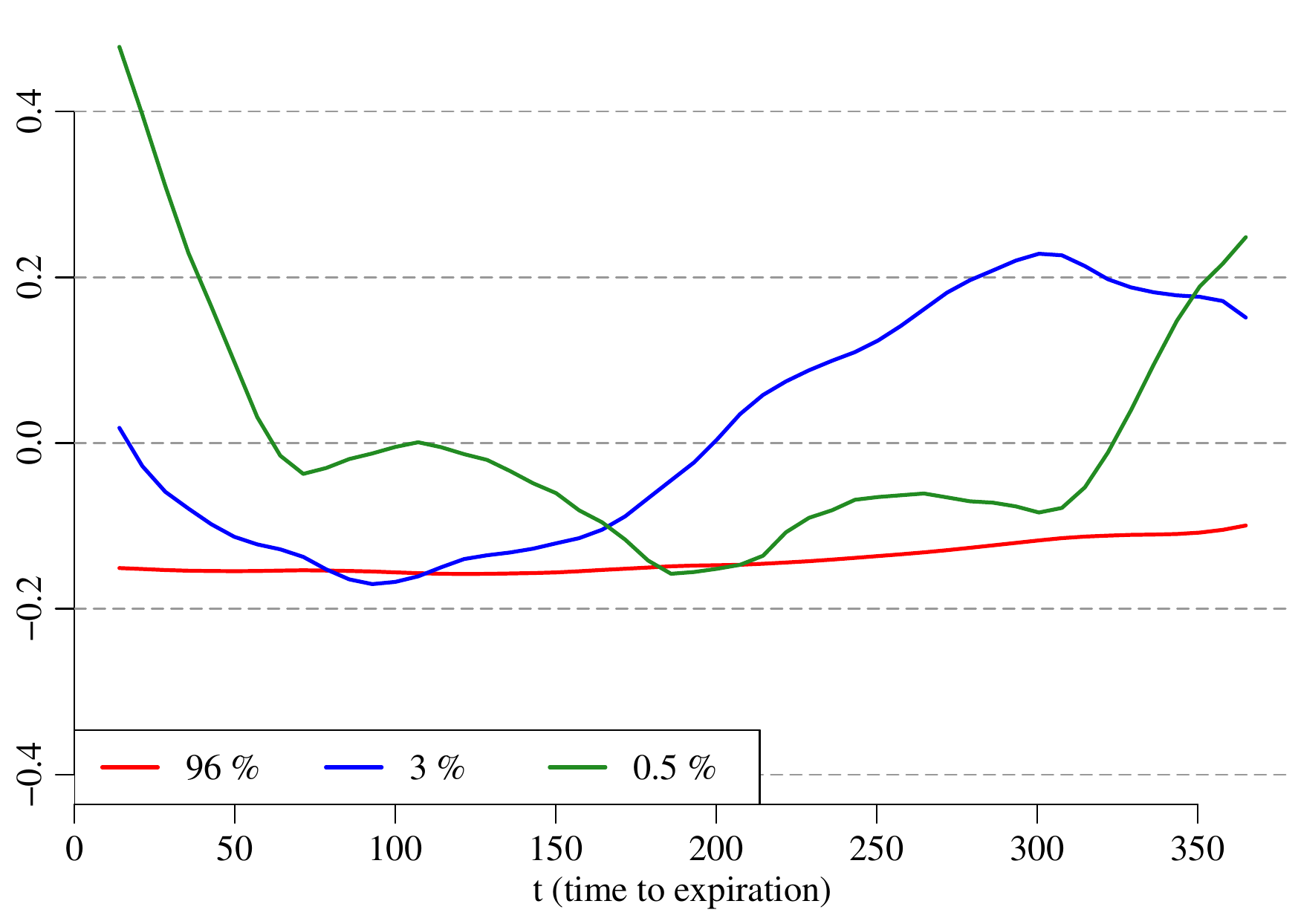}\\\\
   \includegraphics[width=0.47\textwidth]{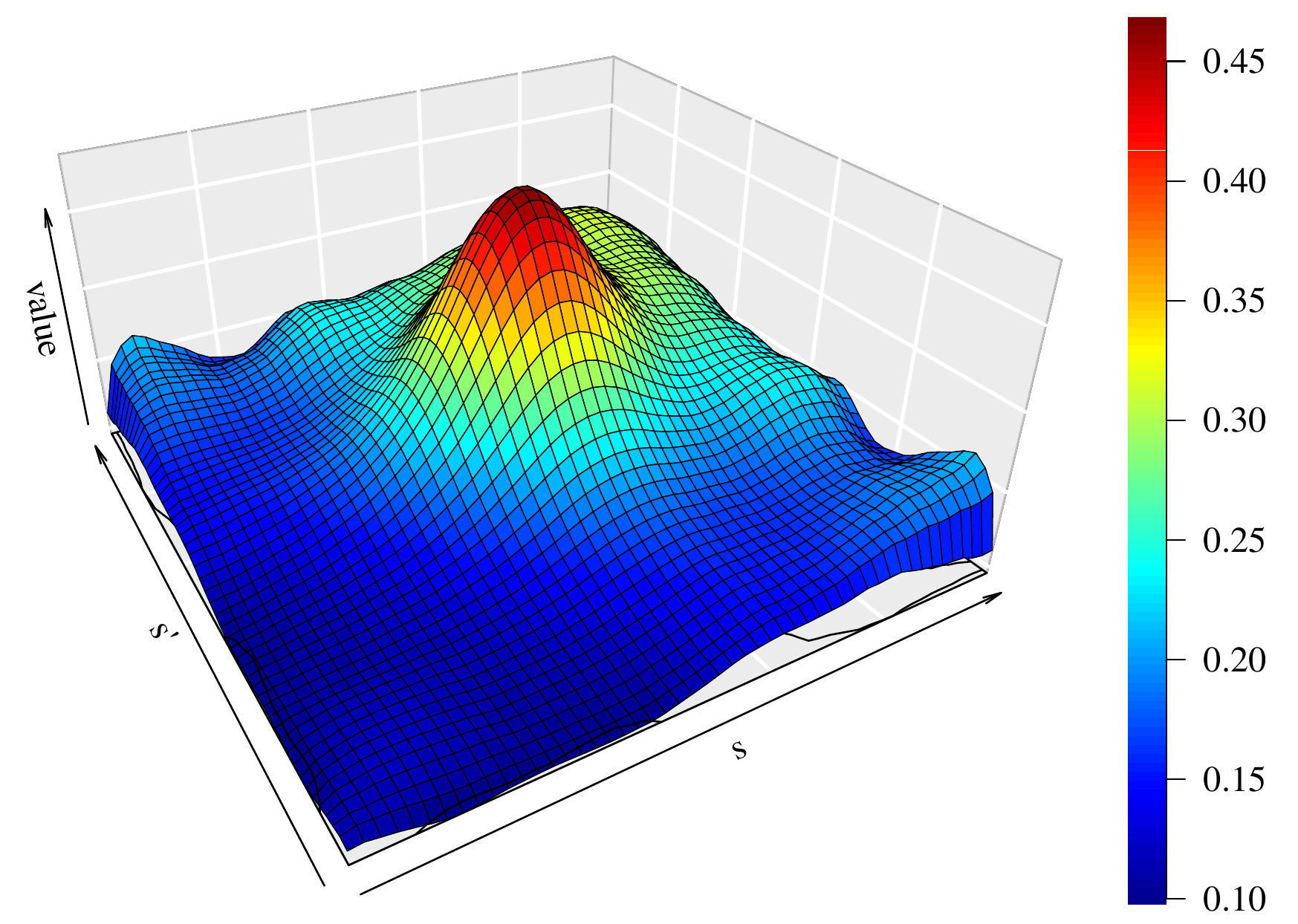}&
   \includegraphics[width=0.47\textwidth]{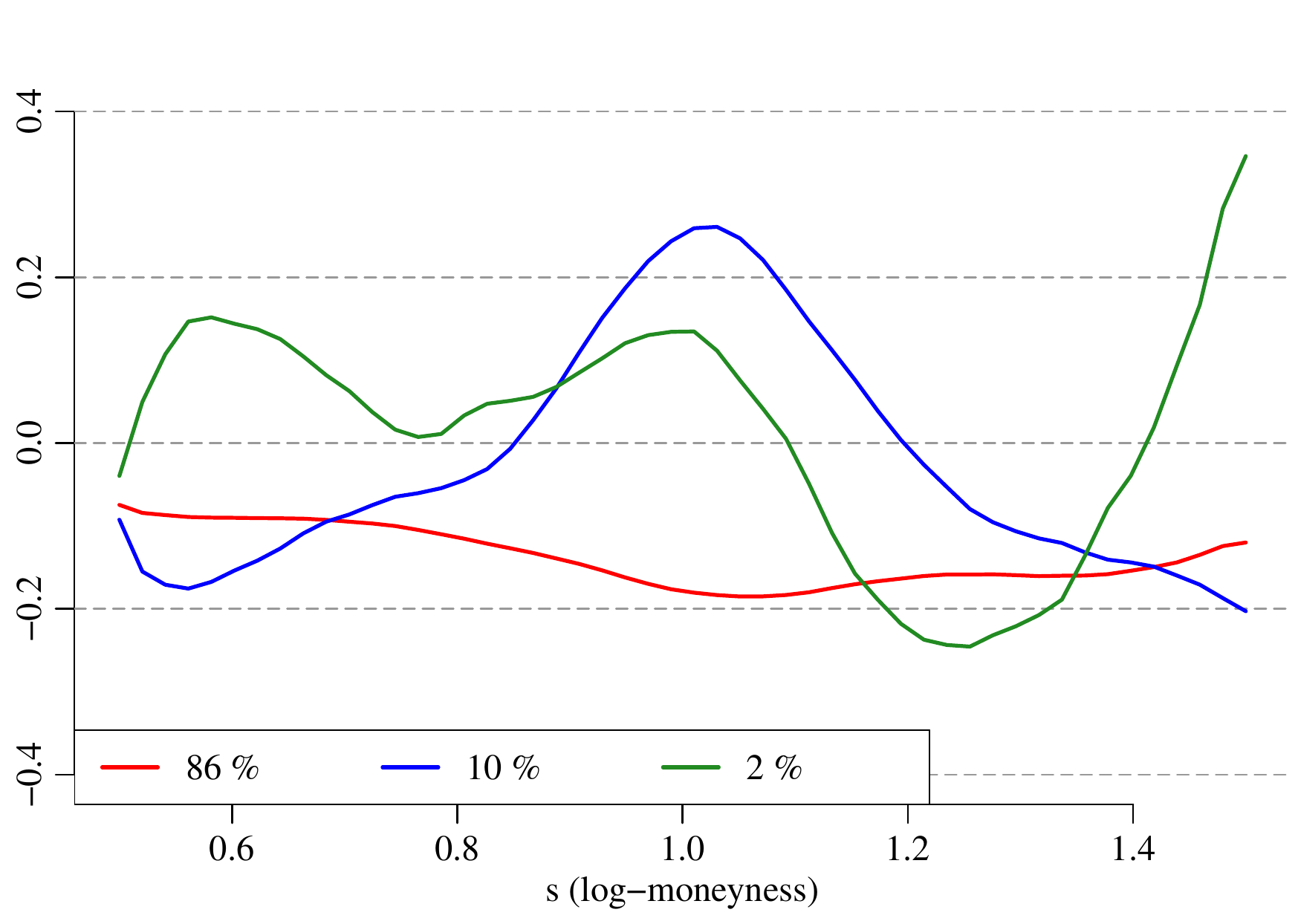}
   \end{tabular}  
   \caption{\textbf{Top-left:} The estimated covariance kernel $\widehat{a}=\widehat{a}(t,t')$ corresponding to the time to expiration variable. \textbf{Top-right:} The three leading eigenfunctions of the spectral decomposition of the covariance kernel $\widehat{a}=\widehat{a}(t,t')$. \textbf{Bottom-left and botton-right:} The same as above but for the estimated covariance kernel $\widehat{b}=\widehat{b}(s,s')$ corresponding to the log-moneyness variable.}
    \label{fig:iv_a_b} 
\end{figure}

Figure~\ref{fig:iv_prediction_bands} demonstrates our prediction techniques presented in Section~\ref{sec:prediction} together with the 95 \% simultaneous confidence band. We recall that the confidence band aims to capture the latent smooth random surface itself, while our raw observations are modelled by adding an error term. Therefore, the raw data are not guaranteed to be covered in the confidence band.

\begin{figure}[!t]
   \centering
   \begin{tabular}{cc}
   \includegraphics[height=0.375\textwidth]{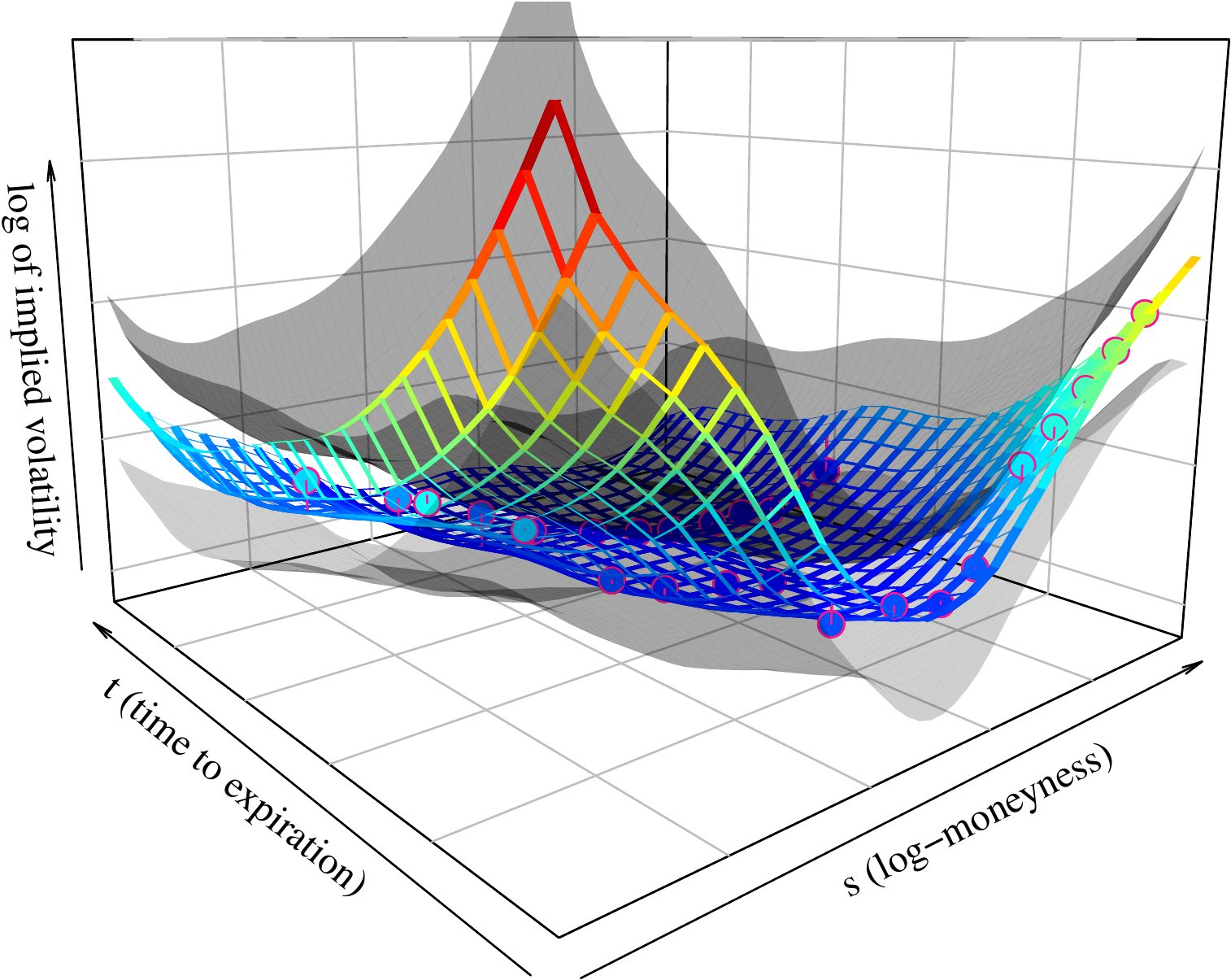}&
   \includegraphics[height=0.375\textwidth]{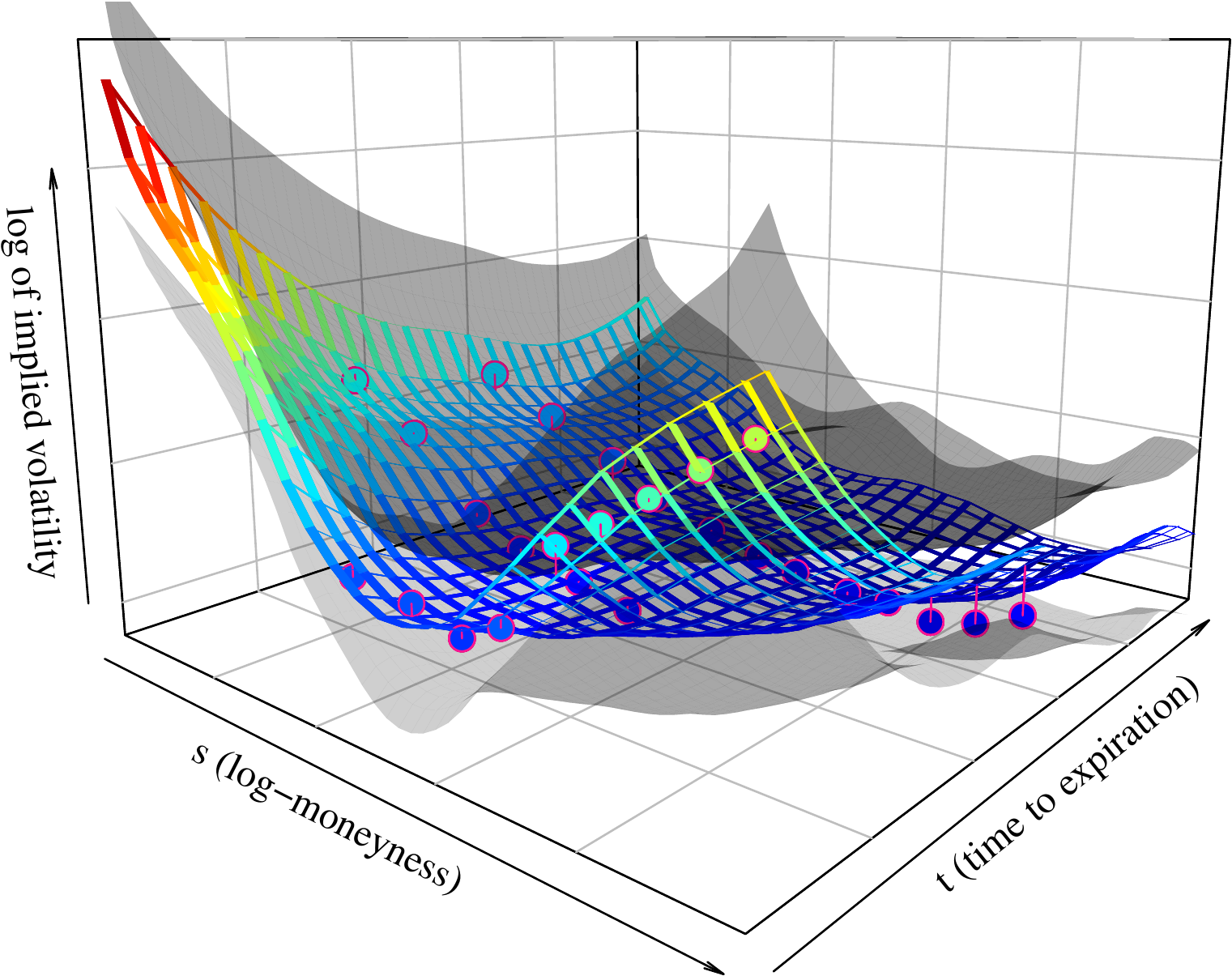} \\\\
   \multicolumn{2}{l}{\hspace{0.5cm}
   \includegraphics[width=0.9\textwidth]{plots/colkey1.pdf}}
   \end{tabular}  
  \caption{Two views on prediction based on the call options written on the stock of Dell Technologies Inc on 01/19/2006. The circles depict the available sparse observations, the ribbons depict the predicted latent surface by the method of Section~\ref{sec:prediction}, where the covariance structure was assumed separable, and finally the transparent gray surfaces depict the 95 \% simultaneous confidence band for the latent log implied volatility surface.
  }
    \label{fig:iv_prediction_bands}  
\end{figure}

\subsection{Quantitative Comparison}
\label{subsec:quantitative_comparison}

The prediction method outline in Section~\ref{sec:prediction} requires as an input the pairwise covariances regardless whether they have been estimated by the separable estimator $\widehat{a}(t,t')\widehat{b}(s,s')$ or the 4D smoother $\widehat{c}(t,s,t',s')$.
As the benchmark for our comparison, we choose the locally linear kernel smoother \citep{fan1996} applied individually for each surface as such smoothers constitute a usual pre-processing step \citep{cont2002}. We will refer to this predictor as \textit{pre-smoothing}.
In this section, we demonstrate that the predictive performance is comparable for  both covariance estimator strategies (the separable and the 4D smoother), and that both of these approaches are superior to pre-smoothing. Moreover, the separable smoother is substantially faster than the 4D smoother.

We compare the prediction error by performing a 10-fold cross-validation, where the covariance structure is fitted always on varying 90 \% of the surfaces, with the remaining 10 \% used for out-of-sample prediction. In the set that is held out for prediction, we select some of the sparse observations and predict them based on the remaining observations on that surface. We use the following hold-out patterns:
\begin{enumerate}[label=({\alph*})]
    \item \textit{Leave one chain out.} Since the options are quotes always for a range of strikes, they constitute features known as option chains (c.f. Figure~\ref{fig:iv_snapshots_mean}) where multiple option prices (or equivalently implied volatilities) are available for a fixed time to expiration. For those surfaces that include at least two such chains, we remove gradually each chain and predict it based on the other chains. Therefore the number of prediction tasks performed on a single surface is equal to the number of chains observed per that surface.
    \item \textit{Predict in-the-money.} Predict implied volatilities for below-average moneyness (i.e. moneyness $m \leq 1$)  based on the out-of-the-money observations (moneyness $m>1$).
    \item \textit{Predict out-of-the-money.} Predict implied volatilities for above-average moneyness (i.e. moneyness $m \geq 1$) based on the in-the-money observations (moneyness $m<1$).
    \item \textit{Predict short maturities.} Predict the implied volatility for options with the time to maturity $\tau < 183 $ [days] based on the implied volatility of the options with the time to maturity $\tau \geq 183 $ [days].
    \item \textit{Predict long maturities.} Predict the implied volatility for options with the time to maturity $\tau > 183 $ [days] based on the implied volatility of the options with the time to maturity $\tau \leq 183 $ [days].
\end{enumerate}
All the prediction strategies are performed only for those surfaces where both the discarded part and the kept part are non-empty.
We measure the prediction error on surface with the index $n$ (in the test partition within the $K$-fold cross-validation) by the following root mean square error criterion, relative to the pre-smoothing benchmark:
\begin{equation}
\label{eq:quantitative_comparison_RMSE}
RMSE^{\text{method}}(n) = 
\sqrt{
\frac{
\sum_{m \in M_n^{\text{discarded}} } \left( (\widehat\Pi^{\text{method}}( X(t_{nm},s_{nm}) | \mathbb{Y}_n^{\text{kept}})) -  Y_{nm}  \right)^2
}{
\sum_{m \in M_n^{\text{discarded}} } \left( (\widehat\Pi^{\text{pre-smooth}}( X(t_{nm},s_{nm}) | \mathbb{Y}_n^{\text{kept}})) -  Y_{nm}  \right)^2
}}
\end{equation}
where $Y_{nm}, m=1,\dots,M_n$ are the implied volatility observations on the $n$-th surface,
$M_n^{\text{discarded}} \subset \{1,\dots,M_n\}$ denotes the set of observations' indexes discarded for the $n$-th surface,
$\mathbb{Y}_n^{\text{kept}}$ are the vectorized implied volatility observations that were kept to be conditioned on.
The predictor $\Pi^{\text{pre-smooth}}$ denotes the pre-smoothing based on the observations $\mathbb{Y}_n^{\text{kept}})$.
The predictors $\Pi^{\text{method}}$ for method being either the separable smoother or the 4D smoother constitute the proposed predictors in this article with the covariance structure estimated by either of the two smoothers. These predictors are always trained only on the training partition (90 \% of the surfaces) within the $10$-fold cross-validation scheme. Note that this out-of-sample comparison is adversarial for the proposed approach, because when predicting a fixed surface, the measurements on that surface (and another 10\% of measurements total) are not used for the mean and covariance estimation. In practice, we naturally utilize all available information. However, for the hold-out comparison study here, that would require frequent re-fitting of the covariance, which would not be computationally feasible, in particular for 4D smoothing.

\begin{figure}[!t]
   \centering
   \includegraphics[width=0.9\textwidth]{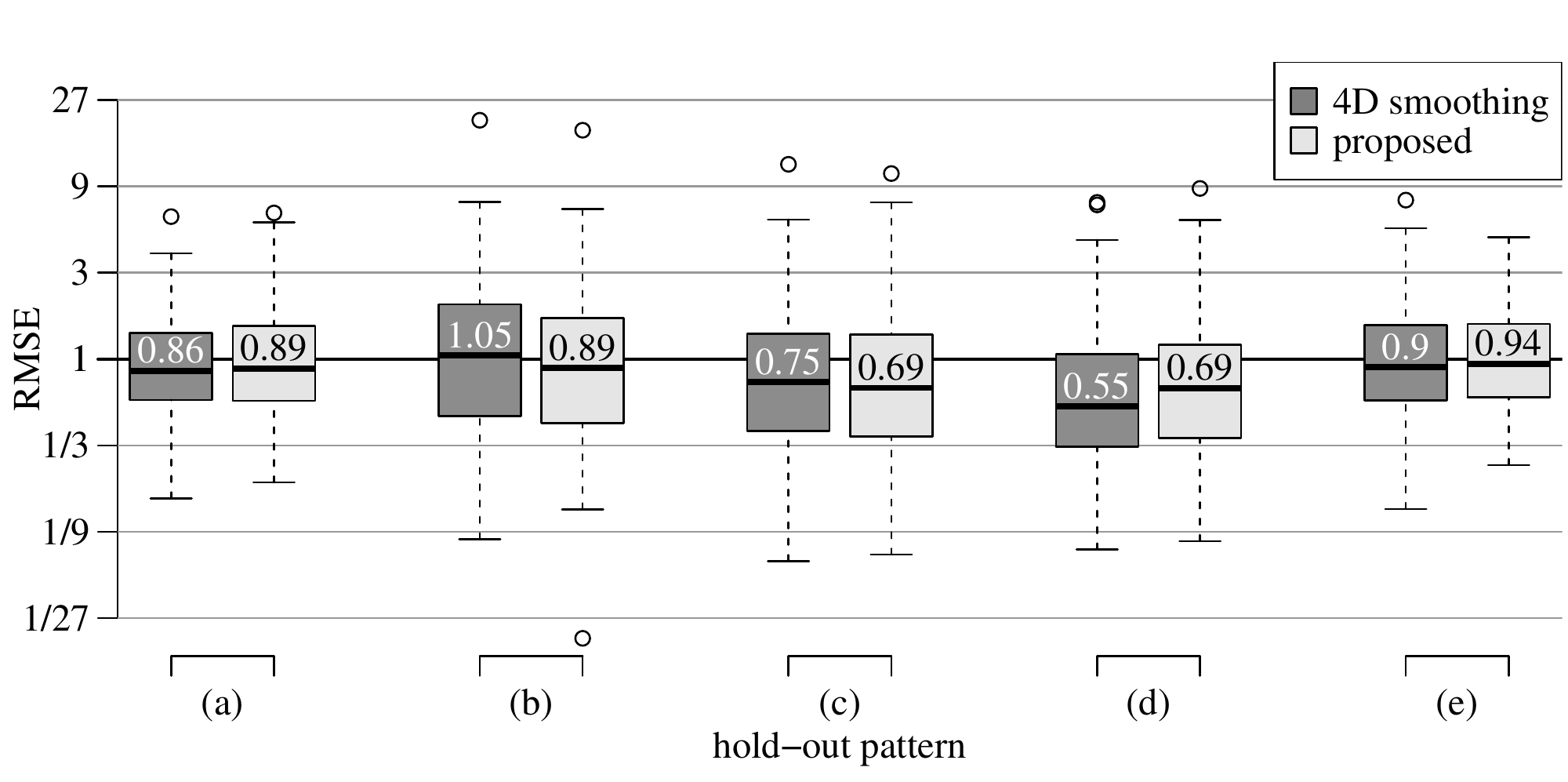}
   \caption{Boxplots of root mean square errors relative to the pre-smoothing benchmark \eqref{eq:quantitative_comparison_RMSE} for prediction method of Section \ref{sec:prediction} with the covariance estimated by 4D smoothing or the proposed separable approach, and different hold-out patterns: (a) leave one chain out; (b) predict in-the-money; (c) predict out-of-the-money; (d) predict short maturities; and (e) predict long-maturities. Numbers inside the boxes provide numerical values of the median. For a given method, RMSE value 3 means that the given method is 3-times worse than the benchmark, while RMSE value 1/3 corresponds to 3-fold improvement.}
    \label{fig:comparison} 
\end{figure}

Figure~\ref{fig:comparison} presents the results under the five hold-out patterns in form of boxplots created from the relative errors \eqref{eq:quantitative_comparison_RMSE}. We see that the prediction errors based on estimated covariances, be it the separable smoother or the 4D smoother, are typically smaller than the pre-smothing benchmark with the only exception of the 4D smoother in the hold-out pattern (b). The predictive performances of the separable and the 4D smoother are comparable, but they differ a lot in terms of runtime.
It typically takes 30 seconds to calculate the separable smoother (including a cross-validation based selection of the smoothing bandwidths), while the 4D smoother takes around 3 hours. The latter runtime is moreover without considering any automatic selection of the bandwidths, because such would be computationally infeasible. Hence we use the bandwidths selected by the separable model, as described in Remark \ref{rem:cross-validation}. 
The calculations are performed on a quite coarse grid of size $20\times 20$. The calculations on a dense grid, such as $50 \times 50$ used in the qualitative analysis in Section~\ref{sec:data_analysis} are not feasible for the 4D smoother.

Therefore we conclude that the separable smoother approach enjoys a better predictive performance than the pre-smoothing benchmark and -- while having having similar predictive performance as the predictor based on 4D smoothing -- is computationally much faster than the said competitor. In fact, it requires two-dimensional smoothers only, just as the pre-smoothing benchmark.

\section{Conclusions and Future Directions}

In practice, covariances are often non-separable \cite{gneiting2006,aston2017,bagchi2017,dette2020,pigoli2018,rougier2017}, and assuming separability induces a bias. The variance stemming from sparse measurements and noise contamination is, however, often of a of larger magnitude, thus sanctioning separability as a means to achieve a better bias-variance trade-off. Moreover, separability entails faster computation and lower storage requirements. As demonstrated above, these advantages are pronounced in the sparse regime.


When data are observed fully and no smoothing is used (i.e. simple averages are calculated at the grid points instead), calculation of the estimator takes about 0.03 seconds in case of separability and 0.12 seconds in general, with our grid size $20$ and sample size $N=100$. With larger grid size, the acceleration gained by assuming separability naturally magnifies. But still, separability is assumed mostly to save memory rather than time, in the fully observed case \cite{aston2017}. On the other hand, with sparsely observed data and kernel smoothing deployed, the speed-up factor stemming from separability can be of the order of hundreds already with rather small grid sizes. The point we advocate is that -- with sparse observations, kernel smoothing deployed, and compared to observations on a grid -- the computational savings offered by separability are the same in terms of memory, but are much more profound in terms of runtime. This is due to the extra costs associated with smoothing.

Our analysis of implied volatility surfaces provides some insights into the statistical dependencies of such sparsely observed data, and our quantitative comparison demonstrated the prediction performance of our prediction method over the pre-smoothing benchmark.
We have formed our data set collecting options across various symbols and timestamps with the reasoning that the volatility across these come also from one population, which is debatable. This study should be seen rather as a proof-of-concept on how to ``borrow strength'' across the data set offered by DeltaNeutral \citep{DeltaNeutral}, and how this approach can be used to decrease the prediction error.

Having said that, we believe that our methodology can provide even better results when considering a more homogeneous population such as the time series of the implied volatility surfaces related to a single fixed symbol/asset. In this case, pre-smoothing is usually required \cite{cont2002,kearney2018} for forecasting of such time series. Our methodology could avoid the pre-smoothing step in this case by predicting the surfaces while borrowing the information across the entire data set. Furthermore, our methodology could be easily tailored to predicting principal components scores by conditional expectation (similarly to \cite{yao2005}), which could be used for forecasting by a vector autoregression.

Combining separable covariances and time series of sparsely observed surfaces hints at other directions of future work. For example, \citet{rubin2020} showed how to estimate the spectral density operator non-parametrically from sparsely observed functional time series. Estimating separable spectral density operator for sparsely observed surface-valued time series seems to be within reach and is likely to provide predictions that benefit from the information across time, and thus reducing the prediction error even more.

\appendix

\section{Computational Details}\label{sec:computational_details}

\subsection{Weighting Scheme}\label{sec:connection_full_obs}

In this section, we provide a heuristic justification for the quadratic choice of weights in the smoothers for the estimation of the covariance kernels $a(\cdot,\cdot)$ and $b(\cdot,\cdot)$, such as \eqref{eq:set_for_B} or \eqref{eq:set_for_A}. Then, we provide a more precise justification, showing that the quadratic choice of weights corresponds to the optimal choice, when data are observed densely.

We begin with the heuristic justification. The quadratic weights can be motivated by the connection to \textit{weighted least squares}.
We recall that weighted least squares are used for  linear regression models where the model errors are not necessarily i.i.d. Their covariance matrix is assumed to be a diagonal matrix known up to a multiplicative constant:
\begin{equation}\label{eq:generalized_least_squares}
\mathbf{y} = \mathbf{X} \beta + \varepsilon, \qquad \E [ \varepsilon \vert \mathbf{X} ] = 0, \qquad \var( \varepsilon \vert \mathbf{X} ) = \sigma_\varepsilon^2 \diag(\mathbf{v}),\qquad \sigma_\varepsilon^2>0,     
\end{equation}
where $\mathbf{y} = (y_1,\dots,y_I)\transpose$ is the response, $\mathbf{X} = (\mathbf{x}_1\transpose,\dots,\mathbf{x}_I\transpose)\transpose$ is the model matrix, and $\diag(\mathbf{v})$ denotes the diagonal matrix with the known vector $\mathbf{v}=(v_1,\dots,v_I)\transpose$ on its diagonal.
The regression coefficients $\beta$ in the model \eqref{eq:generalized_least_squares} are estimated by the weighted least squares:
\begin{equation}\label{eq:weighted_least_squares}
\widehat\beta = \argmin_{\beta} \sum_{i=1}^I w_i \left( y_i - \mathbf{x}_i \beta \right)^2,\qquad\text{where}\quad w_i=\frac{1}{v_i},\,\,i=1,\dots,I.    
\end{equation}

Consider the surface smoother of
\[
\left\{ \left(s_{nm},s_{nm'},\frac{G_{nmm'}}{\alpha(t_{nm},t_{nm'})}\right) \; \Bigg| \; m,m'=1,\ldots,M_n, \, m \neq m',\; n=1\ldots,N \right\}
\]
where $\alpha(t,t'),\,t,t'\in[0,1]$, is a fixed deterministic kernel.
The kernel smoothing technique we deploy is based on fitting a linear regression locally. In view of model \eqref{eq:generalized_least_squares} we want to assess the variance of the response $G_{nmm'} / \alpha(t_{nm},t_{nm'})$ to improve the estimation procedure:
\begin{equation}\label{eq:variance_G_nmm}
\var\left( \frac{G_{nmm'}}{\alpha(t_{nm},t_{nm'})} \right) =
\frac{1}{\alpha^2(t_{nm},t_{nm'})} \var\left( G_{nmm'} \right).
\end{equation}
The variance of $G_{nmm'}$ is unknown and therefore cannot be used to improve the estimation. Still, we observe in  equation~\eqref{eq:variance_G_nmm} that the variance is multiplied by the reciprocal of $\alpha^2(t_{nm},t_{nm'})$. Therefore, we would define the weights for the weighted least squares \eqref{eq:weighted_least_squares} as $w_i = \alpha^2(t_{nm},t_{nm'})$, to utilize the knowledge we actually have.

Let us now describe the connection between the quadratic weighting scheme and the case of fully observed surfaces. With fully observed surfaces, the separable model can be estimated via the generalized power iteration method, where a single step is given by the partial inner product between the empirical covariance and the previous step \cite{masak2020}. For example, when $b=b(s,s')$ is fixed, one step of the power iteration method is given by \cite[Proposition 1]{masak2020}
\begin{equation}\label{eq:PIP_continuous}
\widehat{a}(t,t') = \int_0^1 \int_0^1 b(s,s') \widehat{c}_N(t,s,t',s') d s d s' \bigg/  \int_0^1 \int_0^1 b^2(s,s') d s d s' ,
\end{equation}
where $\widehat{c}_N$ is the empirical covariance estimator.
In this section, we demonstrate that, with fully observed data and with no smoothing conducted, the estimation methodology of Section \ref{sec:estimation} corresponds to the power iteration step \eqref{eq:PIP_continuous}.

Firstly, assume that $b$ is fixed, and we are using the surface smoother on the set of points \eqref{eq:set_for_A} to obtain $\widehat{a}$. Assume that the $n$-th surface is observed twice at the temporal location $t$, i.e. at two locations $(t,s_1)$ and $(t,s_2)$, and once more in a general location $(t', s')$. Let us denote the raw covariance corresponding to the $n$-th surface and locations $(t,s)$ and $(t',s')$ explicitly by $G_n(t,s,t',s')$. Then, two values are available for the location $(t,t')$ in set \eqref{eq:set_for_A}:
\[
\frac{G_n(t,s_1,t',s')}{b(s_1,s')} \quad \& \quad \frac{G_n(t,s_2,t',s')}{b(s_2,s')}.
\]
The corresponding weights are $b^2(s_1,s')$ and $b^2(s_2,s')$, respectively. If the bandwidth is small enough, and no other observations are available for this location, $\widehat{a}(t,t')$ is calculated as a weighted average:
\[
\left[ b^2(s_1,s') \frac{G_n(t,s_1,t',s')}{b(s_1,s')} + b^2(s_2,s') \frac{G_n(t,s_2,t',s')}{b(s_2,s')} \right] \Big/ \big[ b^2(s_1,s') + b^2(s_2,s') \big]
\]
Also, for the purposes of the surface smoother, using the two points separately with their separate quadratic weights is equivalent to using the weighted average with the weight $b^2(s_1,s') + b^2(s_2,s')$.

When the temporal slice $t$ of the $n$-th surface is observed fully, the weighted averaging can be done continuously:
\[
\widehat{a}(t,t') = \int_0^1 b(s,s') G_n(t,s,t',s') d s \bigg/ \int_0^1 b^2(s,s') d s .
\]
When the temporal slice $t'$ of the $n$-th surface is also observed fully, the weighted average becomes
\[
\widehat{a}(t,t') = \int_0^1\int_0^1 b(s,s') G_n(t,s,t',s') d s d s' \bigg/ \int_0^1\int_0^1 b^2(s,s') d s d s' .
\]
When this is true for all $N$ surfaces, the result is averaged over all the independent realizations, and we arrive directly to \eqref{eq:PIP_continuous}, since $\widehat{c}_N(t,s,t',s') = \frac{1}{N} \sum_{n=1}^N G_n(t,s,t',s')$.

Altogether, our estimation procedure can be thought of (due to the specific weighting scheme used) as a sparse version of the generalized power iteration method of \cite{masak2020}. This link has important computational implications, which are discussed in the following section.

\subsection{Marginalization on a Grid}\label{sec:marginalization}

By marginalization, we mean preparation of the raw covariances for the 2D smoothing step, i.e. charting the raw covariances either in time or in space and weighting them as in formulas \eqref{eq:set_for_A} and \eqref{eq:set_for_B}. In this section, we will show that scatter points can be pooled together during the marginalization process to save computations during the subsequent smoothing step, when the data are observed (or rounded to) a common grid. The associated computational advantages are discussed in the following section.

Assume for the remainder of this section that 
data from the measurement scheme \eqref{eq:measurement_scheme} arrive as matrices $\mathbf{Y}_1, \ldots, \mathbf{Y}_N \in \R^{d_1 \times d_2}$ with only some of their entries known, i.e. most of the entries are missing. The marginal covariance kernels $a=a(t,t')$ and $b=b(s,s')$ are replaced by matrices $\mathbf A \in \R^{d_1 \times d_1}$ and $\mathbf B \in \R^{d_2 \times d_2}$, respectively. We assume again for simplicity that the mean $\mu = \mu(t,s)$ is zero. The raw covariances then form a tensor $\mathbf{G}_n = \mathbf Y_n \otimes \mathbf Y_n \in \R^{d_1 \times d_2 \times d_1 \times d_2}$ with entries $\mathbf{G}_n[i,j,i',j'] = \mathbf Y_n[i,j] \mathbf Y_n[i',j']$.

Again, like in the previous section, assume that $\mathbf B$ is fixed, and we are using the surface smoother on the discrete equivalent to set \eqref{eq:set_for_A}, i.e.
\begin{equation}\label{eq:set_discrete}
\bigg\{ \left(i,i', \frac{\mathbf G_n[i,j,i',j']}{\mathbf B[j,j']} \right) \bigg| \; \mathbf Y_n \text{ observed at } (i,j) \text{ and } (i',j'), \; (i,j) \neq (i',j'), \; n=1,\ldots,N \bigg\} 
\end{equation}
to obtain $\widehat{A}$. Like in the previous section, assume $\mathbf{Y}_n$ was observed at locations at $[i,j_1]$, $[i,j_2]$ and $[i',j']$ where no two locations are the same. As explained in the previous section, it is equivalent for the surface smoother to replace the corresponding two values from \eqref{eq:set_discrete}, i.e.
\[
\frac{\mathbf G_n[i,j_1,i',j']}{\mathbf B[j_1,j']} \quad \& \quad \frac{\mathbf G_n[i,j_2,i',j']}{\mathbf B[j_2,j']}
\]
with weights $\mathbf B[j_1,j']$ and $\mathbf B[j_2,j']$, by a single value
\begin{equation}\label{eq:pooled_point}
\big( \mathbf B[j_1,j'] \mathbf G_n[i,j_1,i',j'] + \mathbf B[j_2,j'] \mathbf G_n[i,j_2,i',j']\big) \big/ \big( \mathbf B^2[j_1,j']  + \mathbf B^2[j_2,j'] \big)
\end{equation}
with the aggregated weight $\mathbf B^2[j_1,j']  + \mathbf B^2[j_2,j']$.

Let $\mathbf y_{n,i}$ (resp. $\mathbf y_{n,i'}$) denote the $i$-th (resp. $i'$-th) column of $\mathbf Y_n$. Let $\mathbf q_i$ denote the identifier of whether the entries of $\mathbf y_{n,i}$ were observed (and similarly $\mathbf q_{i'}$), i.e.
\[
\mathbf q_i[l] = \begin{cases}
1, \; l \in \{ j_1, j_2 \} \\
0, \; \text{otherwise}
\end{cases} \quad \& \quad 
\mathbf q_{i'}[l] = \begin{cases}
1, \; l = j' \\
0, \; \text{otherwise}.
\end{cases}
\]
Then, value \eqref{eq:pooled_point} can be calculated as $\mathbf y_{n,i}^\top \mathbf B \mathbf y_{n,i} / \mathbf q_i^\top \mathbf B_2 \mathbf q_{i'}$ with the aggregated weight given by $\mathbf q_i^\top \mathbf B_2 \mathbf q_{i'}$, where $\mathbf B_2$ is the entry-wise square of $\mathbf B$. Naturally, this can be generalized to the case when arbitrary number of entries in the $i$-th and $i'$-th columns of $\mathbf Y_n$ are observed. But more importantly, it can be also generalized to account for different pairs of columns of $\mathbf Y_n$ at the same time.

Let $\mathbf Q_n$. Then the contribution of the $n$-th surface $\mathbf Y_n$ into set \eqref{eq:set_discrete} can be calculated at once as
\begin{equation}\label{eq:pooling_formula}
    \mathbf Y_{n}^\top \widetilde{\mathbf B} \mathbf Y_{n} / \mathbf Q^\top \widetilde{\mathbf B}_2 \mathbf Q
\end{equation}
where $\widetilde{\mathbf B}$ is $\mathbf{B}$ with the diagonal values replaces by zeros and $\widetilde{\mathbf B}_2$ is the entry-wise square of $\widetilde{\mathbf B}$. The diagonal values of $\mathbf B$ are replaced by zeros as described in order to discard products of the type $\mathbf Y^2[i,j]$, which are burdened with noise.

The situation is analogous in the other step, when $\mathbf{A}$ is fixed and $\mathbf{B}$ is calculated. The whole procedure of estimating the separable covariance based on gridded sparse measurements is outlined in Algorithm \ref{alg:sparse_separable}.

\begin{algorithm}[t]
\caption{Estimation of separable model from sparsely observed zero-mean surfaces.}\label{alg:sparse_separable}
\begin{description}
\item[Input] $\mathbf{Y}_1,\ldots,\mathbf{Y}_N \in (\R \cup \{ \diamond \})^{d_1 \times d_2}$, where $\diamond$ represents a missing value
\item[] $ \mathbf Q_n := \mathds{1}_{[\mathbf Y_n \neq \diamond]} \in \{ 0, 1\}^{d_1 \times d_2}$, for $n=1,\ldots,N$
\item[] replace all $\diamond$ entries in $\mathbf{Y}_1,\ldots,\mathbf{Y}_N$ by zeros
\item[] $\mathbf A = \big(1 \big)_{i,j=1}^{d_1 \times d_1}$
\item[repeat]\mbox{}
\begin{description}
\item[for] $n=1,\ldots,N$
  \begin{description}
  \item[] $ \widetilde{\mathbf B} := \mathbf B$ with diagonal entries replaced by zeros
  \item[] $ \widetilde{\mathbf B}_2 :=$ entry-wise square of $\widetilde{\mathbf B}$
  \item[] $\mathbf W_n := \mathbf Q_n \widetilde{\mathbf B}_2 \mathbf Q_n^\top$
  \item[] $\mathbf Z_n := \mathbf Y_n \widetilde{\mathbf B} \mathbf Y_n^\top$ entry-wise divided by $\mathbf W_n$
  \end{description}
\item[end for]
\item[] $\mathbf A :=$ surface smoother of $\{ \mathbf Z_1, \ldots, \mathbf Z_N \}$ with $\{ \mathbf W_1, \ldots, \mathbf W_N \}$ as the smoothing weights
\item[for] $n=1,\ldots,N$
  \begin{description}
  \item[] $ \widetilde{\mathbf A} := \mathbf A$ with diagonal entries replaced by zeros
  \item[] $ \widetilde{\mathbf A}_2 :=$ entry-wise square of $\widetilde{\mathbf A}$
  \item[] $\mathbf W_n := \mathbf Q_n^\top \widetilde{\mathbf A}_2 \mathbf Q_n$
  \item[] $\mathbf Z_n := \mathbf Y_n^\top \widetilde{\mathbf A} \mathbf Y_n$ entry-wise divided by $\mathbf W_n$
  \end{description}
\item[end for]
\item[] $\mathbf B :=$ surface smoother of $\{ \mathbf Z_1, \ldots, \mathbf Z_N \}$ with $\{ \mathbf W_1, \ldots, \mathbf W_N \}$ as the smoothing weights
\end{description}
\item[until convergence (or only twice)]
\item[Output] $\mathbf{A}, \mathbf{B}$
\end{description}
\end{algorithm}

\subsection{Implementation Details}\label{sec:implementation_details}

Separability offers reductions in both time and memory complexities already when data are observed fully \cite{aston2017,masak2020}. In this section, we argue that computational gains of separability are even greater, when data are observed sparsely and kernel smoothing is used.

Kernel smoothers are known to be computationally demanding. To directly evaluate a kernel smoother in $d_1$ locations using $d_2$ observations takes $\O(d_1 d_2)$ operations. Table \ref{tab:complexities} shows these quadratic complexities in our situation, explained below. Assume that $N$ surfaces were observed on a grid of size $d \times d$ relatively densely (i.e. a fixed percentage of the grid was observed -- this is not unrealistic since one often chooses the grid size in such a way), and an unbounded kernel was used. The quadratic complexity of kernel smoother translates into estimating a general covariance by a surface smoother in $\O(N d^8)$ operations, because all $\O(N d^4)$ raw covariances have to be accessed at every single one of $\O(d^4)$ grid points. When we consider $N$ fixed, the resulting complexity in $d$, i.e. $\O(d^8)$, is huge. Under separability, not using the marginalization procedure, the complexity is $\O(d^6)$, because $\O(d^4)$ raw covariances have to be accessed at $\O(d^2)$ grid points. With marginalization, i.e. using formula \eqref{eq:pooling_formula}, the time complexity drops down to $\O(d^4)$, because the number of raw covariances that has to be accessed at every grid point decreases to $\O(d^2)$.

In practice, the quadratic complexity of kernel smoothers becomes intractable and there exist many computational approaches to reduce the burden. Most notably, the fast Fourier transform can be used on equispaced domains to reduce quadratic complexity to log-linear \cite{silverman1982}, effectively cutting down the powers of $d$ in the first row of Table \ref{tab:complexities}.
Many other accelerating approaches exist, see e.g. \cite{raykar2010,langrene2019}, and references therein. However, software availability utilizing these computationally efficient approaches is rather limited, and this is particularly true for multi-dimensional problems.

We do not provide our own implementation of kernel smoothing. For the proposed approach, we implement Algorithm 1, which uses a ``surface smoother''. To this end, we utilize local linear smoothers provided in the \textsf{fdapace} package \cite{yao2005,fdapace2020}. We also utilize internal functions from \textsf{fdapace} to perform cross-validation for the choice of bandwidths.

\begin{table}[]
    \centering
    \caption{Complexities for covariance estimation of a random surface observed on a $d \times d$ grid.}
    \label{tab:complexities}
    \begin{tabular}{l c c c}
    \toprule
Complexity & Separability & Separability w/o & 4D smoothing \\
 & & Marginalization \\
 \midrule
 Time &    $\O(d^4)$ & $\O(d^6)$ & $\O(d^8)$ \\
 Memory &  $\O(d^2)$ & $\O(d^4)$ & $\O(d^4)$ \\
 \bottomrule
    \end{tabular}
\end{table}

For 4D smoothing, which we consider only for comparison, we use the \textsf{np} package \cite{racine2008}, which is to the best of our knowledge the only \textsf{R} \cite{R} package able to perform local linear polynomial regression surface smoothing in more than two dimensions. The 4D smoothing estimator requires smoothing in four dimensions. We provide the formula here for completeness. For the set of points $\{ (x_k, y_k, x_k',y_k') | k=1,\ldots,M \}$ and weights $\{ w_k | k=1,\ldots,M \}$, the smoothed surface $\widehat{\gamma}_0 = \widehat{\gamma}_0(x,y,x',y')$ is calculated as
\begin{equation}\label{eq:4d_smoothing}
\begin{split}
    (\widehat{\gamma_0}, \widehat{\gamma_1}, \widehat{\gamma_2}, \widehat{\gamma_3}, \widehat{\gamma_4}) = \argmin_{\gamma_0, \gamma_1, \gamma_2, \gamma_3, \gamma_4} \sum_{k=1}^M &K \left( \frac{x - x_k}{h_{1}} \right) K \left( \frac{y - y_k}{h_{2}} \right) K \left( \frac{x' - x_k'}{h_{3}} \right) K \left( \frac{y' - y_k'}{h_{4}} \right) w_k \cdot \\
    &\cdot \Big[ z_k - \gamma_0 - \gamma_1 (x - x_{k}) - \gamma_2 (y - y_{k}) - \gamma_3 (x' - x_{k}') - \gamma_4 (y' - y_{k}') \Big]^2  
\end{split}
\end{equation}
for every fixed $(x,y,x',y') \in [0,1]^4$, where $K(\cdot)$ is a smoothing kernel function and $h_1,h_2,h_3,h_4 > 0$ are bandwidths. Even though the \textsf{np} package implements cross-validation to choose the bandwidths, we found the computational burden to be huge and the performance rather poor in our setups, see Section \ref{sec:additional_simulations}. Hence, whenever we use 4D smoothing, we fix the unknown bandwidths as chosen by cross-validation for the proposed separable model. While this intuitively leads to smaller than optimal bandwidths, we found out in our simulation study that bandwidths are governed mainly by smoothness of the underlying covariance rather than the number of points per surface available. Since the smoothness of a four-dimensional covariance and its separable proxy is similar, optimal bandwidths chosen for the proposed (separable) approach seem to be reasonable for 4D smoothing as well, and this was verified in our experiments. Regardless, we can hardly afford other strategy for choosing the four bandwidths for 4D smoothing. Even the sophisticated combination of cross-validation and optimization provided in the \textsf{np} package leads huge runtimes in our setups, see Section \ref{sec:additional_simulations}.

\subsection{Setup of the Simulation Study}\label{sec:setup_simulations}

Here, we fully specify the four covariances chosen for our simulation study (Section \ref{sec:simulation_study}).
\begin{description}
\item[(a)] In the Fourier scenario, $a=a(t,t')$ and $b=b(s,s')$ are chosen to be the same, such that they have the trigonometric basis as their eigenfunctions and power decay of their eigenvalues, resulting in a rather wiggly univariate covariance displayed in Figure \ref{fig:covariance_choices} (left). The covariance is then set as $c(t,s,t',s') = a(t,t') \, b(s,s')$, resulting in a separable covariance.
\item[(b)] In the Brownian scenario, $a=a(t,t')$ and $b=b(s,s')$ are both chosen as the covariance of the Wiener process, i.e. $a(t,t') = \min(t,t')$ and $b(s,s') = \min(s,s')$, resulting in a rather flat covariance displayed in Figure \ref{fig:covariance_choices} (center). The covariance is then set as $c(t,s,t',s') = a(t,t') \, b(s,s')$, i.e. it is separable again.
\item[(c)] In the Gneiting scenario, the covariance has the following parametric form:
\begin{equation}\label{eq:gneiting_formula}
c(t,s,t',s') = \frac{\sigma^2}{(a^2|t-t'|^{2 \alpha}+1)^\tau} \exp\left( \frac{b^2 |s -s' |^{2 \gamma}}{ (a^2|t-t'|^{2 \alpha}+1)^{\beta\gamma} } \right),
\end{equation}
where $a=b=\tau=\alpha=\gamma=\sigma^2=1$ and $\beta=0.7$. This covariance is non-separable \cite{gneiting2002}, but it is rather flat.
\item[(d)] In the Fourier-Legendre scenario, we choose $a_1 = a_1(t,t')$ and $b_1=b_1(s,s')$ as the Fourier univariate covariances specified above. Furthermore, $a_2 = a_2(t,t')$ and $b_2=b_2(s,s')$ are both chosen as rank-4 covariances with shifted Legendre basis as their eigenfunctions, resulting in rather wiggly univariate covariances (see Figure \ref{fig:covariance_choices}, right). The covariance is then chosen as $c(t,s,t',s') = a_1(t,t') \, b_1(s,s') + a_2(t,t') \, b_2(s,s')$, resulting in a non-separable covariance.
\end{description}

\begin{figure}[!t]
  \centering
  \begin{tabular}{ccc}
  Fourier & Brownian & Legendre \\
  \includegraphics[width=0.31\textwidth]{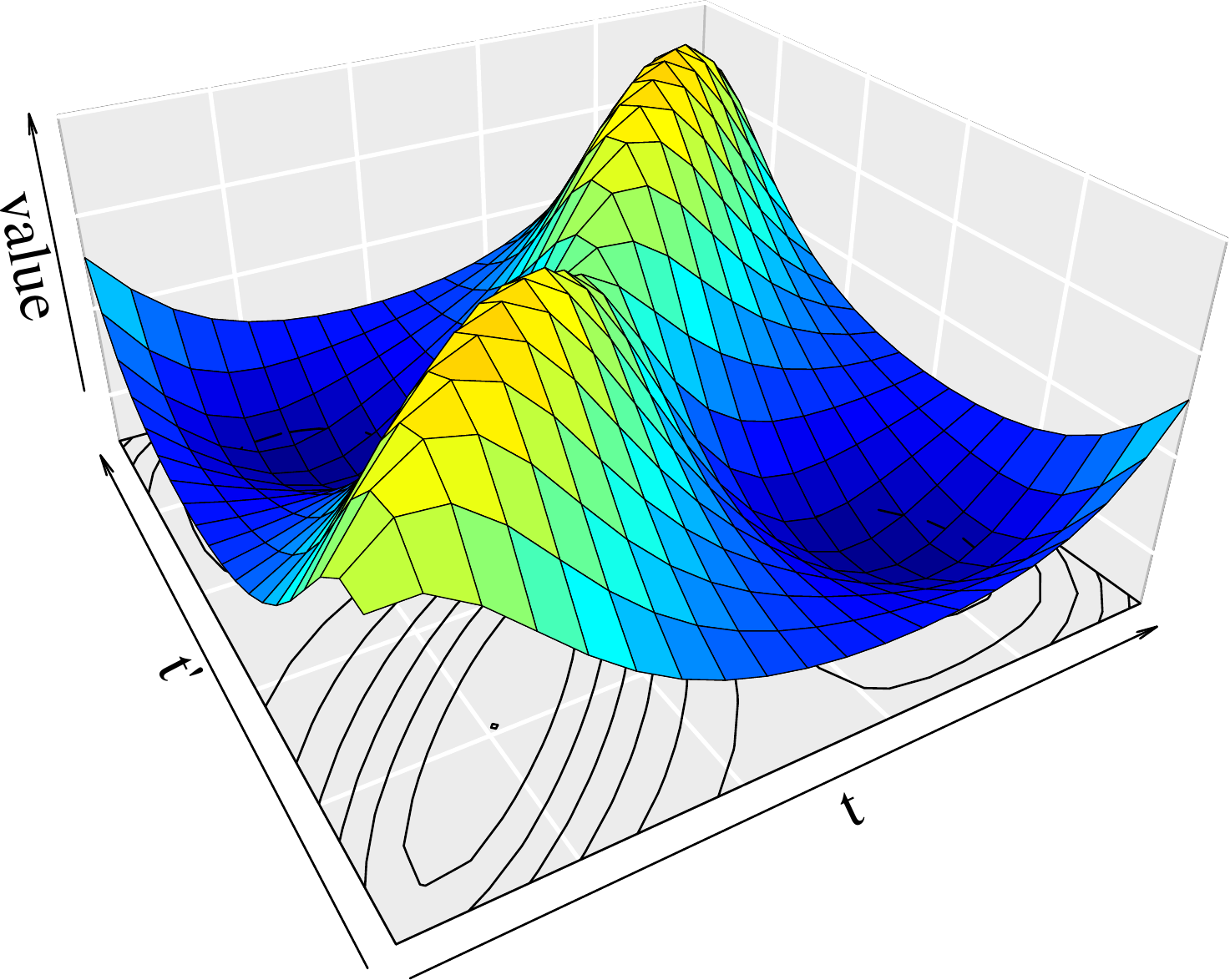} &
  \includegraphics[width=0.31\textwidth]{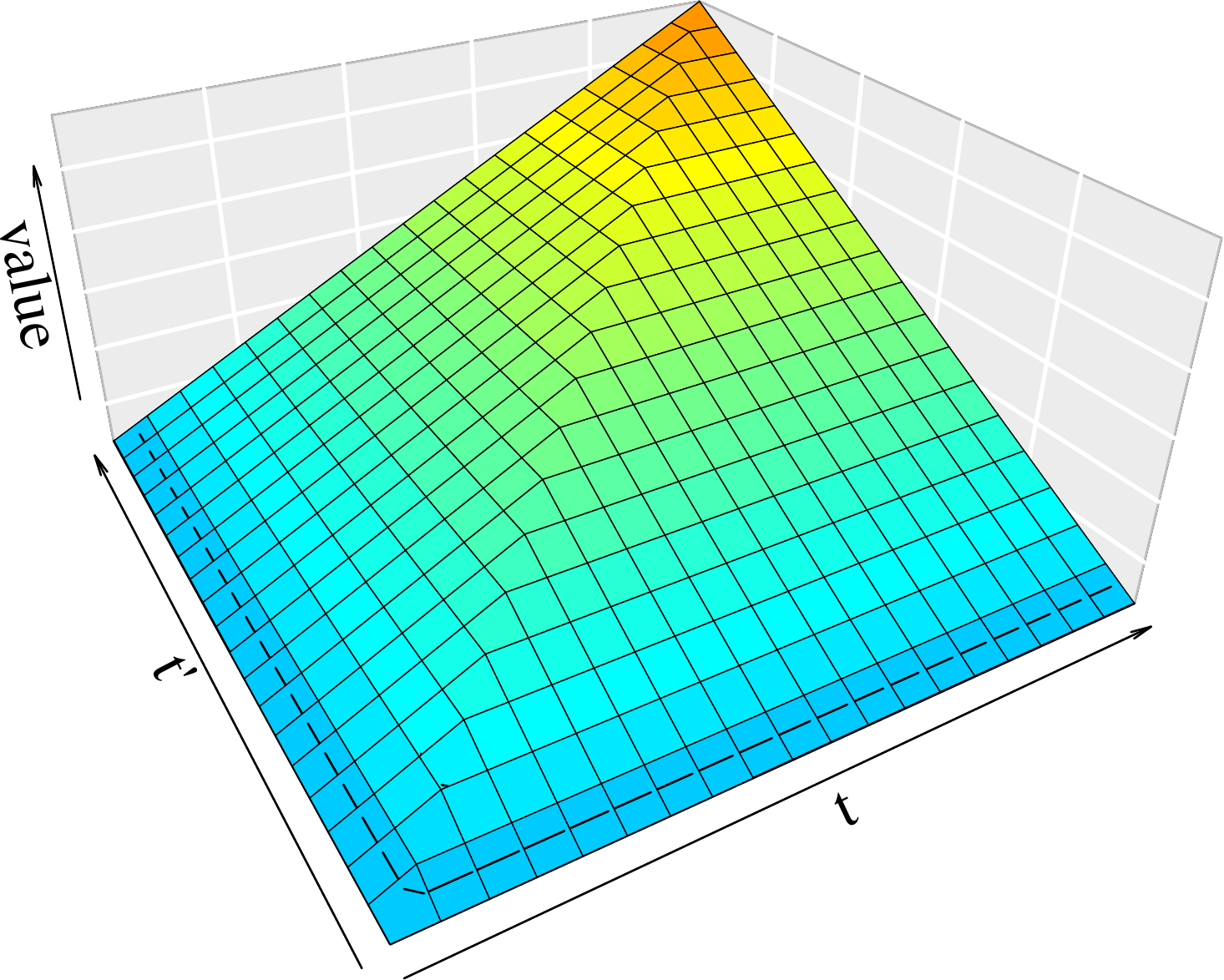} &
  \includegraphics[width=0.31\textwidth]{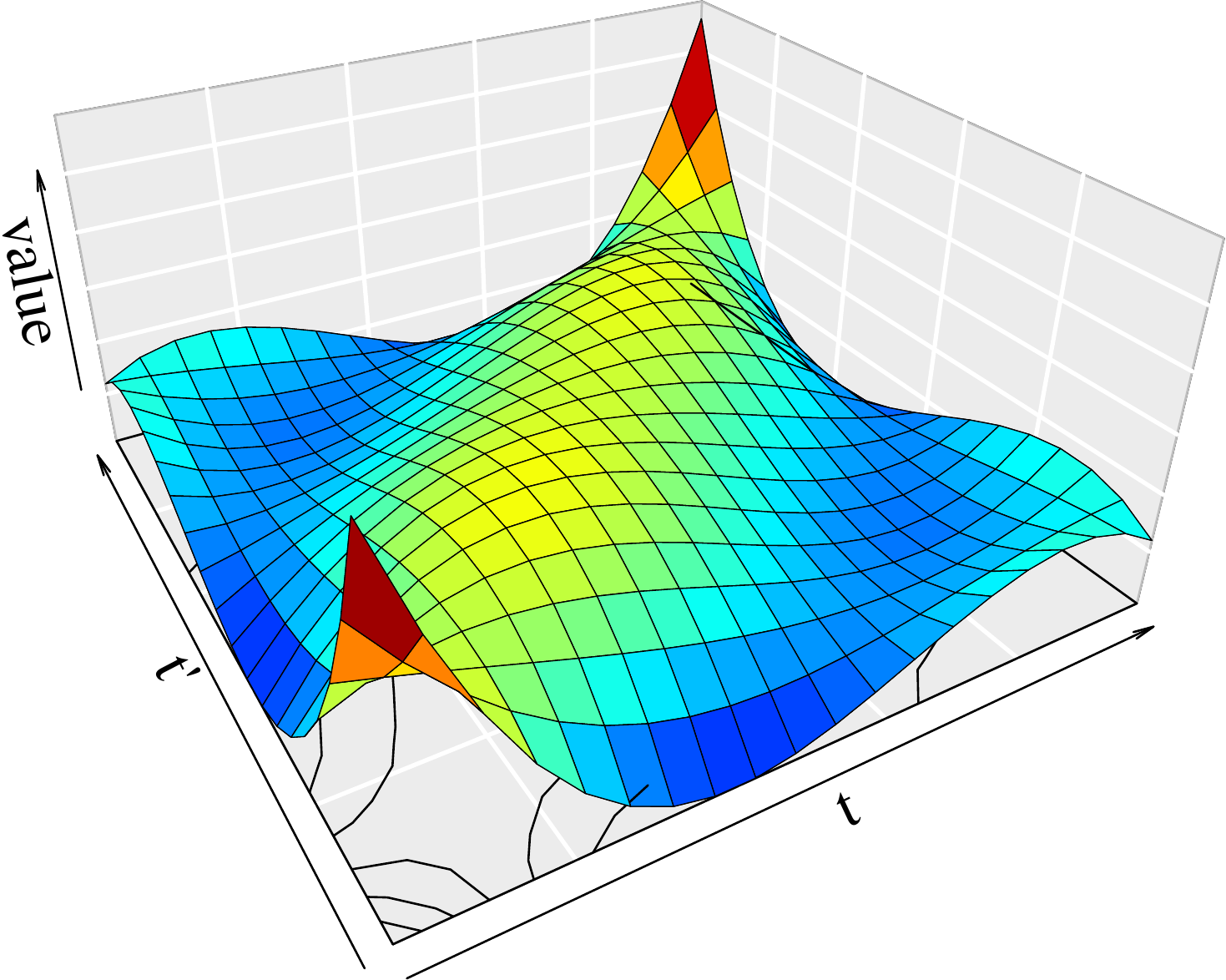} \\\\
  \multicolumn{3}{c}{
  \includegraphics[width=0.9\textwidth]{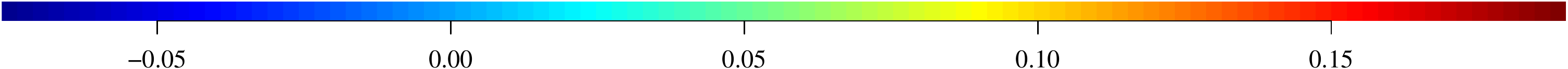}}
  \end{tabular}  
  \caption{Univariate covariances used as building blocks in our simulation study.}
    \label{fig:covariance_choices} 
\end{figure}


\subsection{Additional Simulation Results}\label{sec:additional_simulations}

Firstly, we show runtimes for the Fourier scenario from our simulation study in Figure \ref{sec:simulation_study}. The runtimes look similarly for any of the remaining scenarios (not reported). To demonstrate effectiveness of the marginalization procedure described in detail in Section \ref{sec:marginalization}, we also show runtimes for the non-pooled procedure, considering all raw covariances in sets \ref{eq:set_for_A} or \ref{eq:set_for_B} as separate points for the purposes of smoothing. It leads to the same results as the proposed approach but, as more and more points per surface are observed, the number of scatter points supplied to the smoothing procedure increases rapidly, which increases the runtimes. But more importantly, at the edge of computational feasibility for the 4D smoothing approach (i.e. with the percentage $p=10$) the proposed separable estimator is calculated about 200 times faster than the 4D smoothing estimator.

Secondly, it was observed in Section \ref{sec:simulation_study} that the proposed approach outperforms the one-step version of our estimator $\widehat{a}_0 \cdot \widehat{b}_0$, cf. Figure \ref{fig:estimation_procedure}. While the proposed approach can be seen as a two-step version, a natural question arises whether a multi-step version of the estimator could not be much better. Figure \ref{fig:runtimes_3rd_iter} compares the estimation errors achieved by the proposed approach and by a three-step approach in all four scenarios considered in Section \ref{sec:simulation_study}. The third step offers a significant improvement in only one of the scenarios, and even then the improvement is relatively small compared to the improvements achieved in Section \ref{sec:simulation_study}.

\begin{figure}[!b]
   \centering
   \begin{tabular}{cc}
   \includegraphics[width=0.45\textwidth]{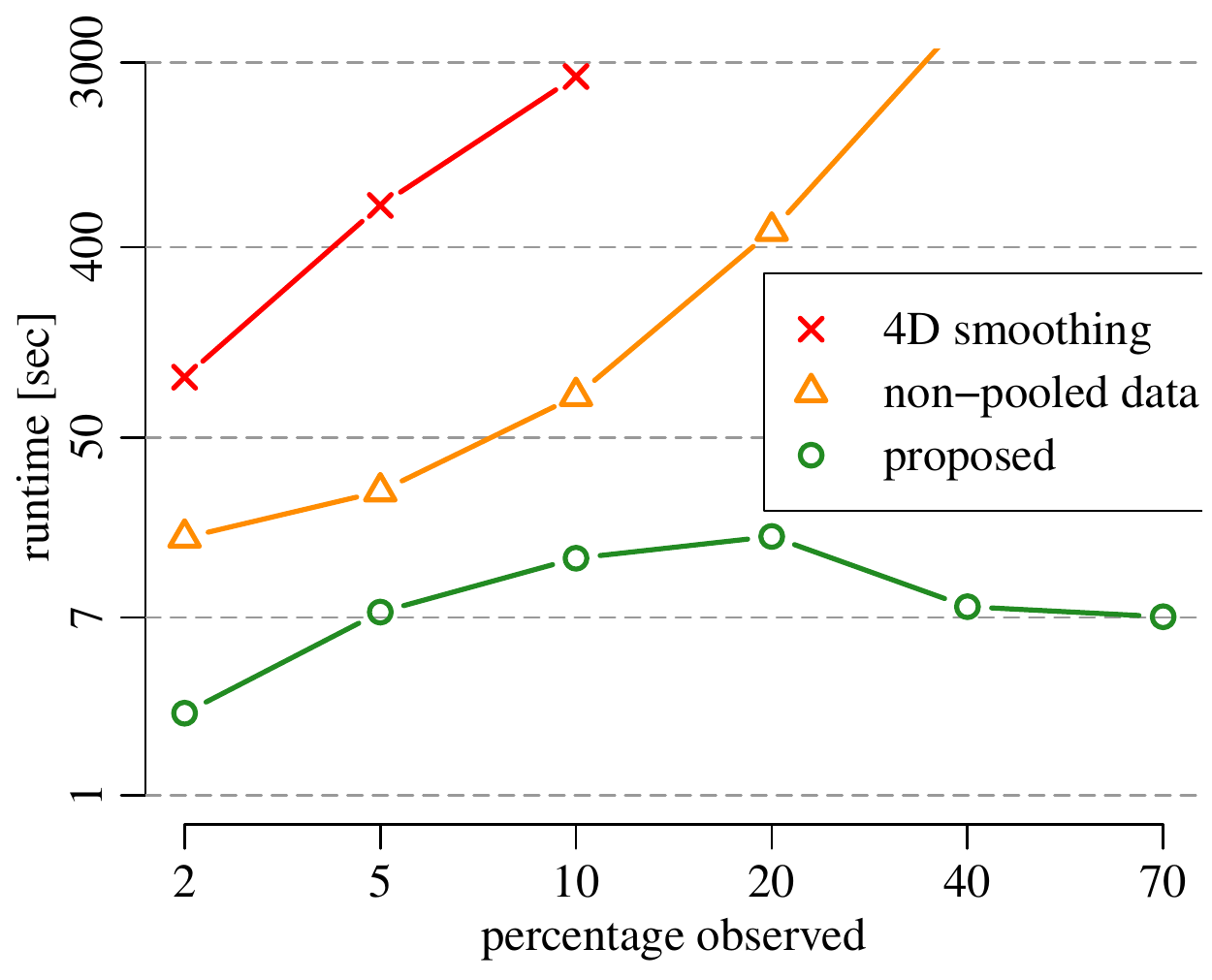} &
   \includegraphics[width=0.45\textwidth]{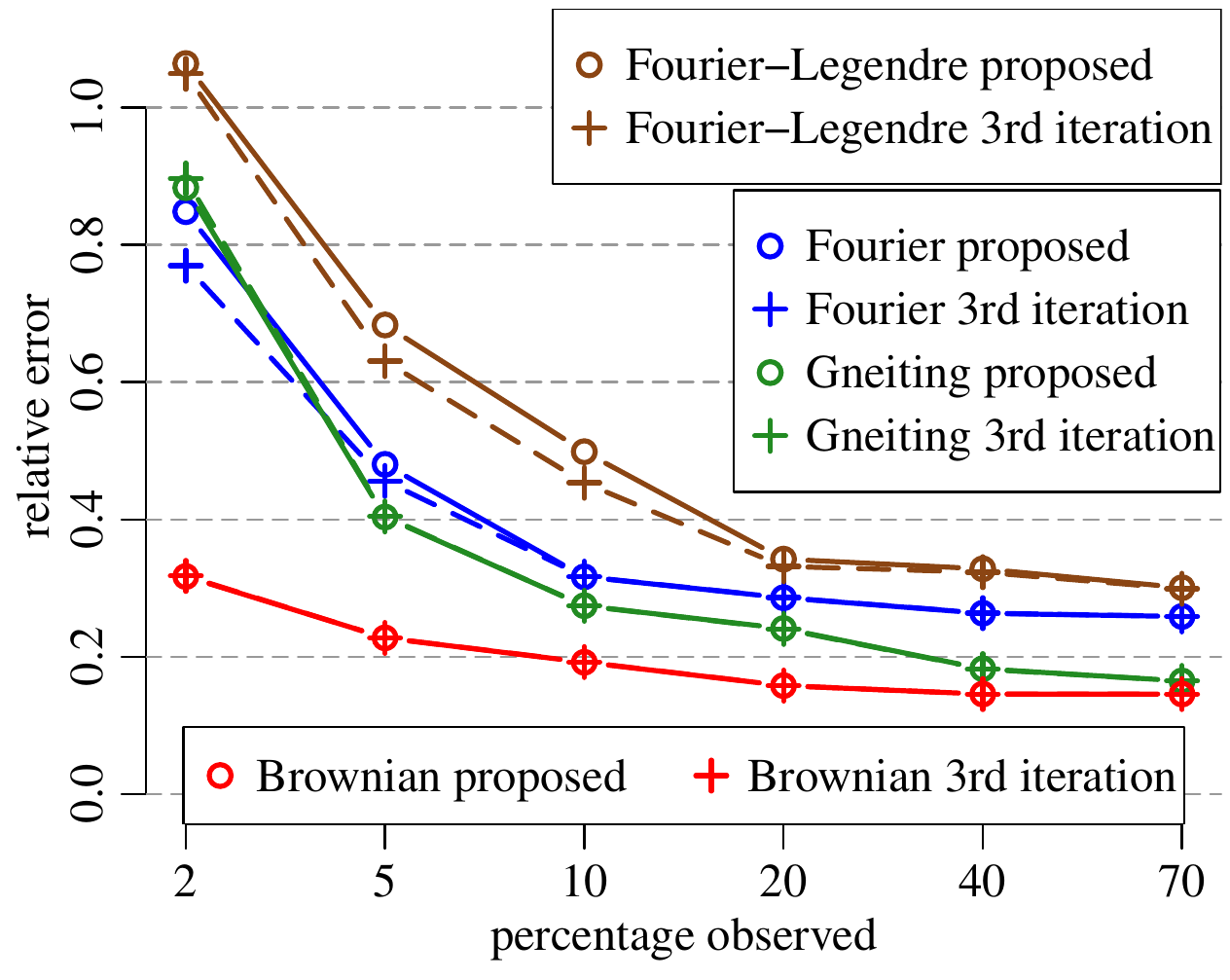}
   \end{tabular}  
   \caption{Left: Runtimes for Fourier simulations. Right: The proposed approach (two-step) compared the proposed approach with included third step for all the four scenarios considered in Section \ref{sec:simulation_study}.}
    \label{fig:runtimes_3rd_iter} 
\end{figure}

Finally, Table \ref{tab:CV} shows relative estimation errors of the 4D smoothing approach in all four scenarios used in Section \ref{sec:simulation_study}, but only with the smallest considered percentage $p=2$. The cross-validated choice of bandwidths is compared against the choice of bandwidth suggested by the separable model (which is used in Section \ref{sec:simulation_study}). It is clear that cross-validation fails here, because even in the simplest Brownian scenario, cross-validated relative error is larger that one. The reason for that is likely the following. Cross-validation for 4D smoothing, as implemented in the \textsf{np} package, does not evaluate its objective function on a grid. In order to reduce the computational burden, the cross-validation objective is optimized is a step-wise manner, until a stopping criterion is met. The stopping criterion is set as a \emph{tolerance} (defaults to $10^{-8}$), and the iterative optimization is stopped once both the change in objective value and change in the bandwidths is smaller than the tolerance for two consecutive iterations. While this saves computation time compared to creating a grid over potential bandwidth values and evaluating the objective function in all the grid points, the optimization can get stuck in a local minimum. This is the reason why the cross-validated errors in Table \ref{tab:CV} are so high. The sampling pattern is very sparse with $p=2$, and there likely are many local minima. However, the cross-validation still requires fitting the covariance for different values of the bandwidths. We tried to obtain results for larger values of $p$ as well, however we ran out of time (with a single task) at 70 hours with $p=5$ even with the tolerance decreased to $10^{-2}$. Hence, we have no other choice but to use the bandwidths suggested by the separable model also for 4D smoothing.

\begin{table}[]
    \centering
    \caption{Relative estimation errors of 4D smoothing with cross-validated bandwidths and bandwidths suggested by the separable model for the four scenarios considered in Section \ref{sec:simulation_study} with $p=2$ percentages of the surfaces observed.}
    \label{tab:CV}
    \begin{tabular}{lcccc}
\toprule
& (a) Fourier & (b) Brownian & (c) Gneiting & (d) Fourier-Legendre \\
    \midrule
cross-validation  & 1.07 & 1.06 & 1.04 & 1.39 \\
separable choice  & 0.04 & 0.26 & 1.34 & 1.1 \\
\bottomrule
    \end{tabular}
\end{table}

\section{Proofs of Formal Statements}
\label{sec:proofs}

\subsection{Explicit Formula for Local Polynomial Regression}
\label{subsec:explicit_formula_smoothers}

Our estimators introduced in Section~\ref{sec:methodology} are based on  local polynomial regression techniques and are defined as minimizers of (weighted) least squares problems  \eqref{eq:generic_smoother}. It turns out that the minimizers for these point-wise optimization problems admit a unique solution given by an explicit formula. In this section we recall this formula for a general local linear polynomial regression with possibly exogenous weights which will be later used in the proofs of the asymptotic behaviour of our estimators.

The local linear surface smoother of the generic set 
$\{ (x_k,y_k,z_k) \; | \; k = 1,\ldots,M\} \subset \R^3$ given weights $\{ w_k \; | \; k=1,\ldots,M \}$
is defined as the solution of the least squares problem \eqref{eq:generic_smoother}. It turns out that this minimizer to this optimization problem admits a unique solution:
\begin{equation}\label{eq:generic_smoother_solution}
\widehat{\gamma}_{0(x,y)} =
 \Psi_1(x,y) 
\left[ \Psi_2(x,y) \right]^{-1},
\qquad (x,y)\in[0,1]^2, 
\end{equation}
where for each $(x,y)\in[0,1]^2$ and $p,q\in\mathbb{N}_0$ we define
\begin{align}
\nonumber
\Phi_1(x,y) &= S_{20}(x,y) S_{02}(x,y) - \left[ S_{11}(x,y) \right]^2,\\
\nonumber
\Phi_2(x,y) &= S_{10}(x,y) S_{02}(x,y) - S_{01}(x,y)S_{11}(x,y),\\
\nonumber
\Phi_3(x,y) &= S_{01}(x,y) S_{20}(x,y) - S_{10}(x,y)S_{11}(x,y),\\
\nonumber
\Psi_1(x,y) &= \Phi_1(x,y) Q_{00}(x,y) - \Phi_2(x,y)Q_{10}(x,y) - \Phi_3(x,y)Q_{01}(x,y),\\
\nonumber
\Psi_2(x,y) &= \Phi_1(x,y) S_{00}(x,y) - \Phi_2(x,y)S_{10}(x,y) - \Phi_3(x,y)S_{01}(x,y),\\
\label{eq:generic_smoother_Spq}
S_{pq} (x,y) &= \frac{1}{M} \sum_{k=1}^M
\left( \frac{x-x_k}{h_1} \right)^p
\left( \frac{y-y_k}{h_2} \right)^q
\frac{1}{h_1 h_2}
K\left( \frac{x-x_k}{h_1} \right)
K\left( \frac{y-y_k}{h_2} \right)
w_m,
\qquad 0\leq p+q\leq 2,\\
\label{eq:generic_smoother_Qpq}
Q_{pq} (x,y) &= \frac{1}{M} \sum_{k=1}^M
\left( \frac{x-x_k}{h_1} \right)^p
\left( \frac{y-y_k}{h_2} \right)^q
\frac{1}{h_1 h_2}
K\left( \frac{x-x_k}{h_1} \right)
K\left( \frac{y-y_k}{h_2} \right)
w_k z_k,
\qquad 0\leq p+q\leq 1,
\end{align}
where $h_1>0$ and $h_2>0$ are smoothing bandwidths in the first and the second dimension respectively.

The formula \eqref{eq:generic_smoother_solution} is derived by differentiating the weighted least squares \eqref{eq:generic_smoother} and finding the solution to the normal equations. It is based on the standard steps used in the local regression literature, e.g. \cite{fan1996}[\S 3.1] or \cite{rubin2020}[\S B.2].


\subsection{Kernel Averages of $m$-dependent Data}

Thanks the explicit formula \eqref{eq:generic_smoother_solution} we may reduce the asymptotic behaviour assessment to the investigation of the terms \eqref{eq:generic_smoother_Spq} and \eqref{eq:generic_smoother_Qpq}. 
In this section we review the general asymptotic framework for the asymptotics of these kernel averages and hence the framework for the asymptotics of \eqref{eq:generic_smoother_solution}. We shall use the general theory developed by \cite{hansen2008uniform} who derived a toolbox for strong mixing time series data where the regressors attain values in possibly unbounded sets. Here we recall this result and write down a simplified version sufficient for our data.

\begin{theorem}[\cite{hansen2008uniform}]
\label{thm:hansen_special_case}
Let $\{(U_k, V_k, Z_k)\}_{k\in\mathbb{Z}} \in \mathbb{R}^3$ be a strictly stationary sequence of random vectors and consider the averages of the form
\begin{equation}\label{eq:hansen_kernel_average}
    \Xi_k(u,v) = \frac{1}{k h_1 h_2 } \sum_{i=1}^k \left( \frac{u-U_i}{h_1} \right)^p \left( \frac{v - V_i}{h_2} \right)^q  K\left( \frac{u-U_i}{h_1} \right) K\left( \frac{v - V_i}{h_2} \right) Z_i
\end{equation}
where $p,q\in\mathbb{N}_0$.
\begin{enumerate}[label=(C{\arabic*})]
    \item\label{assumption:C1}
    $K(\cdot)$ is the Epanechnikov kernel, i.e. $K(u) = (3/4)(1-u^2) \mathds{1}_{[|u|<1]}$.
    \item\label{assumption:C2}
    $(U_k,V_k)$ attain values in the set $[0,1]^2$.
    \item\label{assumption:C3}
    $\{(U_k,V_k,Z_k)\}$ is an $m$-dependent sequence, i.e. for each $k\in\mathbb{Z}$, the random vectors $(\dots,U_k,V_k,Z_k)$ and $(U_{k+m},V_{k+m},Z_{k+m}, U_{k+m+1}, V_{k+m+1}, Z_{k+m+1},\dots)$ are independent. 
    \item\label{assumption:C4}
    There exists $s>2$ such that
     $(u,v) \mapsto \E [ |Z_1|^s \vert U_1=u, V_1=v ]$ is bounded.
     \item\label{assumption:C5}
     The probability density function of $(U_1,V_1)$ is twice continuously differentiable.
     \item\label{assumption:C6}
     The smoothing bandwidth satisfies $(\log k)/( k h_1 h_2 ) = o(1)$ as $k\to\infty$.
\end{enumerate}
Then the kernel averages \eqref{eq:hansen_kernel_average} satisfy
\begin{equation}\label{eq:hansen_kernel_average_convergence_stochastic}
\sup_{(u,v)\in [0,1]^2} \left| \Xi_k(u,v) - \E \Xi_k(u,v) \right| = \bigop{ \sqrt{\frac{\log k}{k h_1 h_2}} },\qquad\text{as}\quad k\to\infty.
\end{equation}
\end{theorem}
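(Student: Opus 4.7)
The plan is to specialize the standard blueprint for uniform convergence of kernel averages, as laid out in \cite{hansen2008uniform}, to the simpler setting assumed here: compactly supported regressors in $[0,1]^2$, a bounded-support Epanechnikov kernel, and $m$-dependence instead of general strong mixing. This specialization lets one replace Hansen's strong-mixing covariance bounds by an elementary block decomposition and the classical i.i.d.\ Bernstein inequality, while preserving the same architecture of truncation, pointwise concentration, and a covering argument.

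First I would truncate the response, writing $Z_i = Z_i^{\le} + Z_i^{>}$ with $Z_i^{\le} = Z_i \mathds{1}_{[|Z_i|\le b_k]}$ for a threshold $b_k \to \infty$ slow enough to keep $b_k = o(\sqrt{k h_1 h_2/\log k})$, yet fast enough that the $Z_i^{>}$ contribution is negligible under the $s$-th moment bound \ref{assumption:C4}. For the truncated part, I would exploit $m$-dependence by splitting $\{1,\dots,k\}$ into $m$ interlaced subsequences of i.i.d.\ summands and applying Bernstein on each, using \ref{assumption:C5} to bound the variance of a single summand by $O(1/(h_1 h_2))$ after a change of variables. This yields, for every fixed $(u,v)$,
\[
\Prob\bigl(|\Xi_k(u,v) - \E \Xi_k(u,v)| > \varepsilon\bigr) \le 2m\exp\!\Bigl(-\tfrac{C\, k h_1 h_2\, \varepsilon^2}{1+b_k \varepsilon}\Bigr).
\]
To pass to the supremum over $[0,1]^2$, I would cover the square by a grid of polynomially small mesh $\delta_k$, apply the union bound over the $O(\delta_k^{-2})$ grid points (keeping $\log\#\mathrm{grid} = O(\log k)$), and then control the oscillation of $\Xi_k - \E\Xi_k$ between grid points by a deterministic Lipschitz argument: since $u\mapsto u^p K(u)$ is Lipschitz on $[-1,1]$, the summand defining $\Xi_k$ is Lipschitz in $(u,v)$ on the kernel's support with constant of order $b_k/(h_1 h_2 \min(h_1,h_2))$, and this oscillation is absorbed into $o(\sqrt{\log k/(k h_1 h_2)})$ for a suitable polynomial choice of $\delta_k$.

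The main obstacle is orchestrating three diverging quantities -- the truncation level $b_k$, the mesh $\delta_k$, and the bandwidths $h_1,h_2$ -- so that all constraints hold simultaneously: $b_k$ must be small enough for the Bernstein bound to deliver the target rate after the union bound, yet large enough to make the $Z_i^{>}$ tail negligible; while $\delta_k$ must be small enough that the Lipschitz oscillation is dominated by the target rate, yet not so small that the grid cardinality inflates the logarithmic factor beyond $O(\log k)$. Assumption \ref{assumption:C6}, which ensures $k h_1 h_2 / \log k \to \infty$, is precisely the margin needed to satisfy these three constraints simultaneously and deliver the claimed $\bigop{\sqrt{\log k/(k h_1 h_2)}}$ rate.
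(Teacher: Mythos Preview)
Your proposal is correct and, in spirit, retraces the machinery behind Hansen's result: truncation, a Bernstein-type exponential inequality (here simplified via an $m$-dependent block decomposition), and a covering argument with a Lipschitz oscillation bound. However, the paper does not reprove the theorem at all. It simply observes that the statement is a special case of \cite{hansen2008uniform}, Theorem~2: the $m$-dependence in \ref{assumption:C3} implies Hansen's strong-mixing condition trivially, the bounded domain \ref{assumption:C2} trivializes his tail conditions on the regressors, and the function $(u,v)\mapsto u^p v^q K(u)K(v)$ built from the Epanechnikov kernel satisfies Hansen's kernel assumptions. The only point where the present statement is not a literal specialization is the use of two distinct bandwidths $h_1,h_2$ rather than a single $h$; the paper handles this by remarking that Hansen's proof goes through verbatim once every occurrence of $h^d$ (with $d=2$) is replaced by $h_1 h_2$. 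So your approach buys a self-contained argument and exploits the simplifications afforded by $m$-dependence and compactness, whereas the paper's approach buys brevity by deferring entirely to the cited reference.
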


Theorem~\ref{thm:hansen_special_case} is essentially a special case of \cite{hansen2008uniform}[Thm 2] when the consider sequence is defined on a bounded domain and is $m$-dependent.
We also assume that the two dimensional smoothing kernel \eqref{eq:hansen_kernel_average} is the product of two one dimensional Epanechnikov kernels. Note that 
the function $(u,v)\mapsto u^p v^q K(u)K(v)$ satisfies Hansen's conditions on the smoothing kernel.

The only generalisation where Theorem~\ref{thm:hansen_special_case} deviates from \cite{hansen2008uniform}[Thm 2] is that we allow the smoothing bandwidth $(h_1,h_2)$ to attain different values in different directions. The proof of such generalisation, while having the smoothing kernel as a product of one-dimensional kernels, follows the lines of the proof \cite{hansen2008uniform}[Thm 2] where $h^d$, with $d=2$, is replaced by $h_1 h_2$.

The following corollary goes one step further and incorporates the convergence of $\E \Xi_k(u,v)$ into the statement \eqref{eq:hansen_kernel_average_convergence_stochastic}.

\begin{corollary}\label{corollary:hansen}
Under the assumption of Theorem~\ref{thm:hansen_special_case} consider the function
\begin{equation}\label{eq:corollary:hansen_def_M}
M(u,v) = c_p c_q \Ez{ Z_1 \middle| U_1 = u, V_1=v } f_{U_1,V_1}(u,v) , \qquad x,y\in[0,1],    
\end{equation}
where $f_{U_1,V_1}(\cdot,\cdot)$ denotes the probability density function of $(U_1,V_1)$ and
$c_r = \int x^r K(x) dx$ for $r\in\mathbb{N}_0$. Moreover:
\begin{enumerate}[label=(D{\arabic*})]
    \item\label{assumption:D}
    Assume that the function $M(\cdot,\cdot)$ is twice continuously differentiable on $[0,1]^2$.
\end{enumerate}
Then the kernel averages \eqref{eq:hansen_kernel_average} satisfy
\begin{equation}\label{eq:hansen_kernel_average_convergence_full}
\sup_{(u,v)\in [0,1]^2} \left| \Xi_k(u,v) - M(u,v) \right| = \bigop{ \sqrt{\frac{\log k}{k h_1 h_2}} + h_1^2 + h_2^2 },\qquad\text{as}\quad k\to\infty.
\end{equation}
\end{corollary}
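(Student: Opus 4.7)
The plan is to decompose
\begin{equation*}
\Xi_k(u,v) - M(u,v) = \bigl[\Xi_k(u,v) - \E\Xi_k(u,v)\bigr] + \bigl[\E\Xi_k(u,v) - M(u,v)\bigr]
\end{equation*}
and treat the two brackets separately. The stochastic fluctuation in the first bracket is controlled uniformly by $\bigop{\sqrt{(\log k)/(k h_1 h_2)}}$ via a direct application of Theorem~\ref{thm:hansen_special_case}. The remaining task is to prove that the deterministic bias $\E\Xi_k - M$ is of order $h_1^2 + h_2^2$ uniformly on $[0,1]^2$.

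For the bias, by strict stationarity every summand in the definition of $\Xi_k$ has the same distribution, and conditioning on $(U_1,V_1)$ while writing $g(u',v') := \E[Z_1 \mid U_1 = u', V_1 = v']\, f_{U_1,V_1}(u',v')$ (so that $M = c_p c_q\, g$) yields
\begin{equation*}
\E\Xi_k(u,v) = \int\!\!\int \frac{1}{h_1 h_2}\left(\frac{u-u'}{h_1}\right)^{p}\!\!\left(\frac{v-v'}{h_2}\right)^{q} K\!\left(\frac{u-u'}{h_1}\right)\! K\!\left(\frac{v-v'}{h_2}\right) g(u',v')\,du'\,dv'.
\end{equation*}
The substitution $x = (u-u')/h_1$, $y = (v-v')/h_2$ reduces this integral to $\int\!\int x^p y^q K(x) K(y)\, g(u - h_1 x,\, v - h_2 y)\,dx\,dy$. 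I would then Taylor-expand $g$ to second order at $(u,v)$, which is permitted by assumption~\ref{assumption:D} together with the smoothness of $f_{U_1,V_1}$. Integrating term by term, the zeroth-order term reproduces exactly $c_p c_q\, g(u,v) = M(u,v)$; the first-order terms are proportional to the moments $c_{p+1} c_q$ and $c_p c_{q+1}$ and vanish by symmetry of the Epanechnikov kernel in the cases of interest; and the second-order remainder is bounded by a constant multiple of $h_1^2 + h_2^2$ uniformly in $(u,v)$, since the second partial derivatives of $g$ are bounded on the compact set $[0,1]^2$.

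The main technical subtlety is ensuring that the Taylor expansion and its remainder estimate remain valid uniformly up to the boundary of $[0,1]^2$, because the translated argument $(u - h_1 x, v - h_2 y)$ can leave the unit square even though the compact support of $K$ confines $|x|,|y| \le 1$. I would handle this by first extending $g$ to a $C^2$ function on an open neighborhood of $[0,1]^2$ via a standard Whitney-type extension, which is available thanks to the regularity granted by assumption~\ref{assumption:D} and the smoothness of the density. With that extension in place the second-order Taylor estimate holds uniformly on $[0,1]^2$, and a triangle-inequality combination with the stochastic bound from Theorem~\ref{thm:hansen_special_case} gives the announced rate.
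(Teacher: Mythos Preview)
Your proof is correct and follows exactly the same approach as the paper's: decompose $\Xi_k - M$ into a stochastic part controlled by Theorem~\ref{thm:hansen_special_case} and a deterministic bias obtained by conditioning on $(U_1,V_1)$, changing variables, and Taylor-expanding to second order. Your treatment is in fact slightly more careful than the paper's, since you explicitly address the boundary behavior via a $C^2$ extension, which the paper's proof leaves implicit.
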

\begin{proof}
We start by decomposing the supremum \eqref{eq:hansen_kernel_average_convergence_full} into a stochastic and a deterministic part
\begin{equation}\label{eq:hansen_corollary_eq1}
 \sup_{(u,v)\in[0,1]^2} \left| \Xi_k(u,v) - M(u,v)  \right|
\leq
\sup_{(u,v)\in[0,1]^2} \left| \Xi_k(u,v) - \E \Xi_k(u,v)\right|
+
\sup_{(u,v)\in[0,1]^2} \left| \E \Xi_k(u,v) - M(u,v)  \right|.
\end{equation}
The first term on the right-hand side of \eqref{eq:hansen_corollary_eq1} is of order $O_\Prob( \sqrt{(\log n)/( n h_1 h_2 )} )$ by Theorem~\ref{thm:hansen_special_case}.
The expectation in the second term on the right-hand side of \eqref{eq:hansen_corollary_eq1} is developed as
\begin{align}
\nonumber
\E \Xi_k(u,v) &=
\Ez{
\left( \frac{u-U_{1}}{h_{1}} \right)^p
\left( \frac{v-V_{1}}{h_{2}} \right)^q
\frac{1}{h_{1} h_{2}}
K\left( \frac{u-U_{1}}{h_{1}} \right)
K\left( \frac{v-V_{1}}{h_{2}} \right)
\Ez{ Z_{11} \middle| U_{1}, V_{1} }
} \\
\nonumber
&=
\Ez{
\left( \frac{u-U_{1}}{h_{1}} \right)^p
\left( \frac{v-V_{1}}{h_{2}} \right)^q
\frac{1}{h_{1} h_{2}}
K\left( \frac{u-U_{1}}{h_{1}} \right)
K\left( \frac{v-V_{1}}{h_{2}} \right)
M( U_{1}, V_{1} )
} \\
\nonumber
&= \iint
\left( \frac{u-x}{h_{1}} \right)^p
\left( \frac{v-y}{h_{2}} \right)^q
\frac{1}{h_{1} h_{2}}
K\left( \frac{u-x}{h_{1}} \right)
K\left( \frac{v-y}{h_{2}} \right)
M( x, y )
f_{U_1,V_1}(x,y) dx dy \\
\label{eq:hansen_corollary_eq2}
&= \iint
\left( \tilde{x} \right)^p
\left( \tilde{y} \right)^q
K\left( \tilde{x} \right)
K\left( \tilde{y} \right)
M( u + h_{1} \tilde{x}, v + h_{2} \tilde{y} )
f_{U_1,V_1}( u + h_{1} \tilde{x}, v + h_{2} \tilde{y} ) d\tilde{x} d\tilde{y}.
\end{align}
Applying the Taylor expansion of order 2 in the right-hand side of \eqref{eq:hansen_corollary_eq2} and using the assumptions \ref{assumption:C5} and \ref{assumption:D},
the second term on the right-hand side of \eqref{eq:hansen_corollary_eq1} is of order $O( h_1^2+h_2^2 )$.
\end{proof}


\subsection{Proof of Proposition \ref{prop:estim_mu}}

Tailoring the generic smoother \eqref{eq:generic_smoother} to the mean surface estimator \eqref{eq:smoother_mu}, we arrive at the customized versions of \eqref{eq:generic_smoother_Spq} and \eqref{eq:generic_smoother_Qpq}:
\begin{align}
\label{eq:mu_smoother_Spq}
S_{pq}^\mu (t,s) &=
\frac{1}{\sum_{n=1}^N M_n}
\sum_{n=1}^N \sum_{m=1}^{M_n}
\left( \frac{t-t_{nm}}{h_{\mu,1}} \right)^p
\left( \frac{s-s_{nm}}{h_{\mu,2}} \right)^q \cdot\\
&\qquad\cdot
\frac{1}{h_{\mu,1} h_{\mu,2}}
K\left( \frac{t-t_{nm}}{h_{\mu,1}} \right)
K\left( \frac{s-s_{nm}}{h_{\mu,2}} \right),
\quad 0\leq p+q\leq 2.\\
\label{eq:mu_smoother_Qpq}
Q_{pq}^\mu (t,s) &=
\frac{1}{\sum_{n=1}^N M_n}
\sum_{n=1}^N \sum_{m=1}^{M_n}
\left( \frac{t-t_{nm}}{h_{\mu,1}} \right)^p
\left( \frac{s-s_{nm}}{h_{\mu,2}} \right)^q\cdot\\
&\qquad\cdot
\frac{1}{h_{\mu,1} h_{\mu,2}}
K\left( \frac{t-t_{nm}}{h_{\mu,1}} \right)
K\left( \frac{s-s_{nm}}{h_{\mu,2}} \right)
Y_{nm},
\quad 0\leq p+q\leq 1.
\end{align}

We assess the asymptotic behaviour of \eqref{eq:mu_smoother_Spq} and \eqref{eq:mu_smoother_Qpq} in the following lemmas.

\begin{lemma}\label{lemma:smoother_mu_Qpq_asymptotics}
Under the assumptions \ref{assumption:B1} -- \ref{assumption:B6}, 
\begin{equation}\label{eq:smoother_mu_Qpq_asymptotics}
\sup_{(t,s)\in[0,1]^2} \left| Q^{\mu}_{pq}(t,s) - M_{[ Q^{\mu}_{pq} ]}(t,s) \right| 
= \bigop{  \sqrt{\frac{\log N}{N h_1 h_2}} + h_{\mu,1}^2 + h_{\mu,2}^2   },
\qquad\text{as}\quad N\to\infty,
\end{equation}
for each $0\leq p+q \leq 1$ and where
\begin{equation}\label{eq:smoother_mu_Qpq_asymptotics_def_M}
M_{[ Q^{\mu}_{00} ]}(t,s) = \mu(t,s) f_{(t,s)}(t,s),
\qquad
M_{[ Q^{\mu}_{10} ]}(t,s) = M_{[ Q^{\mu}_{01} ]}(t,s) = 0,
\qquad t,s\in[0,1].
\end{equation}
\end{lemma}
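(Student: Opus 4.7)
The plan is to recognize $Q^\mu_{pq}(t,s)$ as a kernel average of the form \eqref{eq:hansen_kernel_average} and then invoke Corollary~\ref{corollary:hansen}. Concretely, let $K_N = \sum_{n=1}^N M_n$ and relabel the observations as the single-indexed sequence $(U_k, V_k, Z_k) = (t_{nm}, s_{nm}, Y_{nm})$ obtained by concatenating the clusters in order. With this relabeling, the normalization $1/(\sum_{n=1}^N M_n \cdot h_{\mu,1} h_{\mu,2})$ matches exactly $1/(K_N h_{\mu,1} h_{\mu,2})$, so $Q^\mu_{pq}(t,s) = \Xi_{K_N}(t,s)$ with $k = K_N$ in \eqref{eq:hansen_kernel_average}.

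Next I would verify each hypothesis of Corollary~\ref{corollary:hansen}: \ref{assumption:C1} holds by the Epanechnikov choice; \ref{assumption:C2} by \ref{assumption:B2}; \ref{assumption:C3} with $m = M^{max}$, because by \ref{assumption:B3} observations belonging to different surfaces are independent, and by \ref{assumption:B1} each cluster has at most $M^{max}$ elements; \ref{assumption:C4} follows from \ref{assumption:B5} by Minkowski applied to $Y_{11} = X_1(t_{11},s_{11}) + \varepsilon_{11}$; \ref{assumption:C5} is contained in \ref{assumption:B2}; and \ref{assumption:C6} follows from \ref{assumption:B6} since by the strong law $K_N/N \to \E \mathcal{M} \in (0,\infty)$ almost surely, whence $\log K_N/(K_N h_{\mu,1} h_{\mu,2}) \asymp \log N/(N h_{\mu,1} h_{\mu,2}) = o(1)$. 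Using $\E[Y_{11} \mid t_{11}=t, s_{11}=s] = \mu(t,s)$ together with the independence of $\varepsilon_{11}$ from the locations, the target object \eqref{eq:corollary:hansen_def_M} of Corollary~\ref{corollary:hansen} becomes
\[
M(t,s) = c_p c_q\, \mu(t,s)\, f_{(t,s)}(t,s),
\]
and the Epanechnikov identities $c_0 = 1$ and $c_1 = 0$ reproduce \eqref{eq:smoother_mu_Qpq_asymptotics_def_M} exactly. Twice continuous differentiability of $M$ (assumption~\ref{assumption:D}) follows from \ref{assumption:B2} and \ref{assumption:B4}. Corollary~\ref{corollary:hansen} then delivers the claimed bound with $k = K_N$, and $K_N \asymp N$ almost surely converts the rate into the form stated in \eqref{eq:smoother_mu_Qpq_asymptotics}.

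The main technical obstacle is that the strict stationarity assumption embedded in Corollary~\ref{corollary:hansen} is not automatic for the concatenated sequence: $(U_k, V_k, Z_k)$ is $M^{max}$-dependent by construction, but the random cluster sizes $M_n$ make its full joint law depend on the index $k$ near cluster boundaries. I would resolve this by padding every cluster to length $M^{max}$ with ``phantom'' observations carrying a binary marker that is folded into $Z_k$ so that phantom entries contribute zero to the kernel average, producing a genuinely strictly stationary $M^{max}$-dependent sequence to which Corollary~\ref{corollary:hansen} applies verbatim. An equivalent route is to work conditionally on $(M_n)_{n=1}^N$ and integrate out using the boundedness of $\mathcal M$ and the independence across surfaces. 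Once this padding/conditioning step is in place, the remainder of the argument is a direct invocation of Corollary~\ref{corollary:hansen}.
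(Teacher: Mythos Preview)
Your proposal is correct and follows essentially the same route as the paper: concatenate the observations into a single-indexed sequence $(U_k,V_k,Z_k)$, verify the hypotheses of Corollary~\ref{corollary:hansen}, and use $K_N \asymp N$ to pass from the $k$-indexed rate to the $N$-indexed one. The paper's proof is terser --- it simply asserts strict stationarity ``by assumptions \ref{assumption:B1}--\ref{assumption:B3}'' and $M^{max}$-dependence ``by assumption \ref{assumption:B1}'' without further comment --- whereas you explicitly flag the stationarity issue created by the random cluster sizes and propose a padding or conditioning fix. That extra care is justified: the paper does not spell out how the concatenated sequence inherits strict stationarity, and your padding device (or conditioning on $(M_n)$) is a clean way to make the application of Corollary~\ref{corollary:hansen} fully rigorous.
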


\begin{proof}
Define the sequence of random vectors $\{(U_k,V_k,Z_k)\}_{k=1}^\infty$ by putting $\{t_{nm}\}$, $\{s_{nm}\}$ and $\{Y_{nm}\}$ in order such that
\begin{align}
    \nonumber
    \{U_1,U_2,\dots \} &= \{ t_{11}, t_{12}, \dots, t_{1m_1}, t_{21}, \dots, t_{2m_2}, t_{31}, \dots \}, \\
    \nonumber
    \{V_1,V_2,\dots \} &= \{ s_{11}, s_{12}, \dots, s_{1m_1}, s_{21}, \dots, s_{2m_2}, s_{31}, \dots \}, \\
    \label{eq:smoother_mu_Qpq_asymptotics:def_Z}
    \{Z_1,Z_2,\dots \} &= \{ Y_{11}, Y_{12}, \dots, Y_{1m_1}, Y_{21}, \dots, Y_{2m_2}, Y_{31}, \dots \}.
\end{align}

The sequence$\{(U_k,V_k,Z_k)\}_{k=1}^\infty$ satisfies the assumption of Theorem~\ref{thm:hansen_special_case} and Corollary~\ref{corollary:hansen}, namely strict stationarity is by assumptions \ref{assumption:B1} -- \ref{assumption:B3}, is $M^{max}$-dependent by assumption \ref{assumption:B1}, and the conditions \ref{assumption:C4}, \ref{assumption:C5}, \ref{assumption:C6}, \ref{assumption:D} are satisfied by the assumptions \ref{assumption:B5}, \ref{assumption:B2}, \ref{assumption:B6}, \ref{assumption:B5} respectively.
Therefore the sequence of kernel averages
$$ \Xi^\mu_{pq,k}(t,s) = \frac{1}{k h_{\mu,1}h_{\mu,2}} \sum_{i=1}^n
\left( \frac{t-U_i}{h_1} \right)^p \left( \frac{s - V_i}{h_2} \right)^q  K\left( \frac{t-U_i}{h_1} \right) K\left( \frac{s - V_i}{h_2} \right) Z_i
$$
satisfies
$$
\sup_{(t,s)\in[0,1]^2} \left| \Xi^\mu_{pq,k}(t,s) - M_{[ Q^{\mu}_{pq} ]}(t,s) \right| = \bigop{  \sqrt{\frac{\log k}{k h_1 h_2}} + h_{\mu,1}^2 + h_{\mu,2}^2  },
\qquad\text{as}\quad k\to\infty,
$$
and the formulae \eqref{eq:smoother_mu_Qpq_asymptotics_def_M} follow from the definition of $\{(U_k,V_k,Z_k)\}_{k=1}^\infty$ and definition \eqref{eq:corollary:hansen_def_M}.
Since the sequence $\{ Q^{\mu}_{pq}(t,s) \}_{N=1}^\infty$ is a subsequence of $\{ \Xi^\mu_{pq,k}(t,s) \}_{k=1}^\infty$ and $k = k(N) \asymp N$ as $N\to\infty$,
the convergence rate \eqref{eq:smoother_mu_Qpq_asymptotics} holds as well.
\end{proof}

\begin{lemma}\label{lemma:smoother_mu_Spq_asymptotics}
Under the assumptions \ref{assumption:B1} -- \ref{assumption:B3}, 
\begin{equation}\label{eq:smoother_mu_Spq_asymptotics}
\sup_{(t,s)\in[0,1]^2} \left| S^{\mu}_{pq}(t,s) - M_{[ S^{\mu}_{pq} ]}(t,s) \right| 
= \bigop{  \sqrt{\frac{\log N}{N h_1 h_2}} + h_{\mu,1}^2 + h_{\mu,2}^2   },
\qquad\text{as}\quad N\to\infty,
\end{equation}
for each $0\leq p+q \leq 1$ and where
\begin{align}\label{eq:smoother_mu_Spq_asymptotics_def_M_row1}
M_{[ S^{\mu}_{00} ]}(t,s) &= \mu(t,s) f_{(t,s)}(t,s),&
M_{[ S^{\mu}_{11} ]}(t,s) &= M_{[ S^{\mu}_{10} ]}(t,s) = M_{[ S^{\mu}_{01} ]}(t,s) = 0,\\
\label{eq:smoother_mu_Spq_asymptotics_def_M_row2}
M_{[ S^{\mu}_{20} ]}(t,s) &= M_{[ S^{\mu}_{02} ]}(t,s) = c_2 f_{(t,s)}(t,s),&
c_2 &= \int x^2 K(x) dx.
\qquad t,s\in[0,1].
\end{align}
\end{lemma}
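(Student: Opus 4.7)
The proof will closely parallel that of Lemma~\ref{lemma:smoother_mu_Qpq_asymptotics}, since $S^{\mu}_{pq}(t,s)$ is structurally identical to $Q^{\mu}_{pq}(t,s)$ except that the observation $Y_{nm}$ is replaced by the constant $1$. Concretely, I would introduce the sequence of random vectors $\{(U_k,V_k,Z_k)\}_{k=1}^\infty$ by enumerating $(t_{nm},s_{nm})$ exactly as in \eqref{eq:smoother_mu_Qpq_asymptotics:def_Z}, and setting $Z_k \equiv 1$. Then $S^{\mu}_{pq}$ is, up to the normalization $1/\sum_n M_n$, a subsequence of the kernel averages $\Xi_{pq,k}(t,s)$ defined in \eqref{eq:hansen_kernel_average}.

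Next I would verify that the hypotheses of Theorem~\ref{thm:hansen_special_case} and Corollary~\ref{corollary:hansen} are met. Strict stationarity of $\{(U_k,V_k,Z_k)\}$ follows from \ref{assumption:B1}--\ref{assumption:B3}; the values lie in $[0,1]^2$ by \ref{assumption:B2}; the sequence is $M^{\max}$-dependent by \ref{assumption:B1}; condition \ref{assumption:C4} holds trivially because $Z_k \equiv 1$ (this is precisely why \ref{assumption:B5} is not needed here, explaining the weaker hypotheses of this lemma compared with Lemma~\ref{lemma:smoother_mu_Qpq_asymptotics}); condition \ref{assumption:C5} follows from \ref{assumption:B2}; and the bandwidth condition \ref{assumption:C6} follows from \ref{assumption:B6}. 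Finally, since $Z_1\equiv 1$, the limit function from \eqref{eq:corollary:hansen_def_M} simplifies to $M_{[S^{\mu}_{pq}]}(t,s) = c_p c_q f_{(t,s)}(t,s)$, which is twice continuously differentiable by \ref{assumption:B2}, verifying \ref{assumption:D}.

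Invoking Corollary~\ref{corollary:hansen} then yields
\[
\sup_{(t,s)\in[0,1]^2} \bigl| S^{\mu}_{pq}(t,s) - c_p c_q f_{(t,s)}(t,s) \bigr|
= \bigop{ \sqrt{\tfrac{\log N}{N h_{\mu,1} h_{\mu,2}}} + h_{\mu,1}^2 + h_{\mu,2}^2 }.
\]
It remains to identify $c_p c_q$ for each of the six relevant $(p,q)$ pairs with $0\leq p+q \leq 2$. Using $c_0 = \int K(x)\,dx = 1$ and $c_1 = \int x K(x)\,dx = 0$ (by symmetry of the Epanechnikov kernel), we obtain $M_{[S^{\mu}_{00}]} = f_{(t,s)}$, $M_{[S^{\mu}_{10}]} = M_{[S^{\mu}_{01}]} = M_{[S^{\mu}_{11}]} = 0$, and $M_{[S^{\mu}_{20}]} = M_{[S^{\mu}_{02}]} = c_2 f_{(t,s)}$, matching the stated formulas \eqref{eq:smoother_mu_Spq_asymptotics_def_M_row1}--\eqref{eq:smoother_mu_Spq_asymptotics_def_M_row2}. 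The only mild subtlety is handling the normalization factor $1/\sum_n M_n$ versus the $1/k$ in $\Xi_{pq,k}$: since $\sum_{n=1}^N M_n / N \to \E\mathcal{M}$ almost surely by the law of large numbers (using \ref{assumption:B1}), the indexing shift $k \asymp N$ produces only a bounded multiplicative adjustment that can be absorbed into the constants; I do not anticipate any genuine obstacle here, and the entire argument is a direct transcription of the Corollary~\ref{corollary:hansen} machinery to the degenerate case $Z \equiv 1$.
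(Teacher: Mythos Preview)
Your proposal is correct and follows exactly the same approach as the paper, which simply instructs to repeat the proof of Lemma~\ref{lemma:smoother_mu_Qpq_asymptotics} with $Z_k\equiv 1$ and to read off the limits from \eqref{eq:corollary:hansen_def_M}. Note in passing that your computed value $M_{[S^{\mu}_{00}]}=f_{(t,s)}$ is the right one (and is what the proof of Proposition~\ref{prop:estim_mu} actually uses for $\Psi^\mu_2$); the appearance of $\mu(t,s)$ in the displayed formula \eqref{eq:smoother_mu_Spq_asymptotics_def_M_row1} is evidently a copy--paste artifact from the $Q$-version.
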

\begin{proof}
The proof of this lemma follows essentially the same lines as the proof of Lemma~\ref{lemma:smoother_mu_Qpq_asymptotics}.
In the definition of the sequence $\{(U_k,V_k,Z_k)\}_{k=1}^\infty$ we put $Z_k = 1$, for all $k\in\mathbb{N}$, on the line \eqref{eq:smoother_mu_Qpq_asymptotics:def_Z}.
The formulae \eqref{eq:smoother_mu_Spq_asymptotics_def_M_row1} and \eqref{eq:smoother_mu_Spq_asymptotics_def_M_row2} are verified analogously by the definition \eqref{eq:corollary:hansen_def_M}.
\end{proof}

\begin{proof}[Proof of Proposition~\ref{prop:estim_mu}]
We are now ready to combine the above and proof Proposition~\ref{prop:estim_mu}.
Following the formulae presented in Section~\ref{subsec:explicit_formula_smoothers}, we have for
\begin{align*}
    \Phi^{\mu}_1(t,s) &= S^{\mu}_{20}(t,s) S^{\mu}_{02}(t,s) - \left[ S^{\mu}_{11}(t,s) \right]^2,\\
    \Phi^{\mu}_2(t,s) &= S^{\mu}_{10}(t,s) S^{\mu}_{02}(t,s) - S^{\mu}_{01}(t,s)S^{\mu}_{11}(t,s),\\
    \Phi^{\mu}_3(t,s) &= S^{\mu}_{01}(t,s) S^{\mu}_{20}(t,s) - S^{\mu}_{10}(t,s)S^{\mu}_{11}(t,s),\\
    \Psi^{\mu}_1(t,s) &= \Phi_1(t,s) Q^{\mu}_{00}(t,s) - \Phi_2(t,s)Q^{\mu}_{10}(t,s) - \Phi_3(t,s)Q^{\mu}_{01}(t,s),\\
    \Psi^{\mu}_2(t,s) &= \Phi_1(t,s) S^{\mu}_{00}(t,s) - \Phi_2(t,s)S^{\mu}_{10}(t,s) - \Phi_3(t,s)S^{\mu}_{01}(t,s),\\
\end{align*}
their asymptotic behaviour
\begin{align*}
    \Phi^{\mu}_1(t,s) &= \left( c_2 f_{(t,s)}(t,s) \right)^2 + \bigop{r^{\mu}_N},&
    \Phi^{\mu}_2(t,s) &= \bigop{r^{\mu}_N},\qquad\qquad
    \Phi^{\mu}_3(t,s) = \bigop{r^{\mu}_N},\\
    \Psi^{\mu}_1(t,s) &= \left(c_2 \right)^2 \left(f_{(t,s)}(t,s) \right)^3 \mu(t,s) + \bigop{r^{\mu}_N},&
    \Psi^{\mu}_2(t,s) &= \left(c_2 \right)^2 \left(f_{(t,s)}(t,s) \right)^3 + \bigop{r^{\mu}_N},
\end{align*}
uniformly in $(t,s)\in[0,1]^2$, as $N\to\infty$, where $r^\mu_N = \sqrt{ (\log N)/(N h_{\mu,1} h_{\mu,2})} + h_{\mu,1}^2 + h_{\mu,2}^2$.
Hence
$$ \widehat\mu(t,s) = \Psi^{\mu}_1(t,s) / \left[ \Psi^{\mu}_2(t,s) \right]^{-1} = \mu(t,s) + \bigop{r^{\mu}_N},\qquad\text{as}\quad N\to\infty,$$
by the uniform version of Slutsky's theorem and by the fact that $f_{(t,s)}(\cdot,\cdot) \neq 0$ on $[0,1]^2$ by \ref{assumption:B2}.
Thanks to the assumption \ref{assumption:B6}, the convergence rate simplifies to the common rate $h$ and
$$ \widehat\mu(t,s) = \mu(t,s) + \bigop{ \sqrt{\frac{\log N}{N h^2}} + h^2 },\qquad\text{as}\quad N\to\infty.$$
\end{proof}


\subsection{Proof of Theorem \ref{thm:estim_a_b} and Corollary~\ref{corollary:estimation_complete_covariance_structure}}

\begin{lemma}\label{lemma:proof_smoother_A_lemma_asymptotics}
Assume the conditions \ref{assumption:A1}, \ref{assumption:B1} -- \ref{assumption:B9} and
fix a deterministic twice continuously differentiable kernel $\beta(s,s'),\,s,s'\in[0,1]$ such that $\iint [\beta(s,s')]^2 dsds'>0$. Then the smoother $\widehat{\alpha}(t,t'),\,t,t'\in[0,1]$, obtained by smoothing the set
\begin{equation}\label{eq:proof_smoother_A}
\left\{ \left(t_{nm},t_{nm'},\frac{G_{nmm'}}{\beta(s_{nm},s_{nm'})}\right) \; \Bigg| \; m,m'=1,\ldots,M_n, \, m \neq m',\; n=1\ldots,N \right\}
\end{equation}
using weights $\{ \beta^2(s_{nm},s_{nm'}) \}$
admits the following asymptotics
\begin{equation}
\label{eq:proof_smoother_A_lemma_asymptotics}
\widehat{\alpha}(t,t') =  a(t,t')
\frac{ \iint \beta(s,s') b(s,s')  f_s(s)f_s(s') dsds' }{ \iint \left[\beta(s,s')\right]^2  f_s(s)f_s(s') dsds' } +  \bigop{ \sqrt{\frac{\log N}{N h^2}} + h^2 }
\end{equation}
uniformly in $(t,t')\in[0,1]^2$ as $N\to\infty$.
We recall that $a(t,t'),\,t,t'\in[0,1],$ and $b(s,s'),\,s,s'\in[0,1]$, on the right-hand side of \eqref{eq:proof_smoother_A_lemma_asymptotics} are the true separable covariance structure components \eqref{eq:separability}.
\end{lemma}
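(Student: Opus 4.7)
My approach mirrors the proof of Proposition~\ref{prop:estim_mu}, adapted from sums indexed by single observations to sums indexed by pairs $(n,m,m')$ with $m \neq m'$. First I express $\widehat\alpha(t,t')$ via the explicit local-linear formula \eqref{eq:generic_smoother_solution}, introducing kernel moments $S^\alpha_{pq}(t,t')$ and $Q^\alpha_{pq}(t,t')$ analogous to \eqref{eq:mu_smoother_Spq}--\eqref{eq:mu_smoother_Qpq} but summed over pairs $(m,m')$, with the regressors being $(t_{nm},t_{nm'})$, the weights being $\beta^{2}(s_{nm},s_{nm'})$, and the response in $Q^\alpha_{pq}$ collapsing (after multiplication by the weight) to $\beta(s_{nm},s_{nm'})\,G_{nmm'}$.

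Next I decompose $G_{nmm'} = G^0_{nmm'} + R_{nmm'}$, where $G^0_{nmm'} = (X_n(t_{nm},s_{nm})-\mu(t_{nm},s_{nm}))(X_n(t_{nm'},s_{nm'})-\mu(t_{nm'},s_{nm'}))$ is the ``oracle'' raw covariance and $R_{nmm'}$ collects every cross term carrying either a factor of $\widehat\mu-\mu$ or of the measurement noise $\varepsilon$. The contribution of $R_{nmm'}$ to the $Q^\alpha_{pq}$ sums will be shown to be $\bigop{\sqrt{(\log N)/(Nh^2)}+h^2}$: the $(\widehat\mu-\mu)$ terms are dominated by $\sup|\widehat\mu-\mu|=\bigop{\sqrt{(\log N)/(Nh^2)}+h^2}$ via Proposition~\ref{prop:estim_mu}, times a kernel average that is itself $\bigop{1}$ by Corollary~\ref{corollary:hansen}, while the $\varepsilon$ cross-terms have zero conditional expectation (by \ref{assumption:B3}) and the same corollary controls their fluctuations.

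The heart of the argument is the asymptotic analysis of the clean averages, obtained by replacing $G_{nmm'}$ with $G^0_{nmm'}$. I re-index the pairs into a single strictly stationary sequence $\{(U_k,V_k,Z_k)\}_{k\ge 1}$ with $(U_k,V_k)=(t_{nm},t_{nm'})$ and $Z_k$ being either $1$, $\beta^{2}(s_{nm},s_{nm'})$, or $\beta(s_{nm},s_{nm'}) G^0_{nmm'}$, according to which kernel moment is being analysed. Pairs from distinct surfaces are independent by \ref{assumption:B3}, and every surface contributes at most $M^{\max}(M^{\max}-1)$ pairs, so the sequence is $m$-dependent in the sense of Theorem~\ref{thm:hansen_special_case} with a constant depending only on $M^{\max}$; the remaining hypotheses of that theorem and of Corollary~\ref{corollary:hansen} follow directly from \ref{assumption:B2}, \ref{assumption:B8}, and \ref{assumption:B9}. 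Hence every $S^\alpha_{pq}$ and $Q^\alpha_{pq}$ converges uniformly on $[0,1]^2$ to its deterministic limit at the rate $\bigop{\sqrt{(\log N)/(Nh^2)}+h^2}$.

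To compute these limits, symmetry of the Epanechnikov kernel annihilates $M_{[\cdot_{10}]}$, $M_{[\cdot_{01}]}$, $M_{[\cdot_{11}]}$ at leading order, while separability \ref{assumption:A1} yields $\E[G^0_{nmm'}\mid\text{locations}]=a(t_{nm},t_{nm'})\,b(s_{nm},s_{nm'})$, so Fubini and the change of variables used in Corollary~\ref{corollary:hansen} give $M_{[S^\alpha_{00}]}(t,t')$ proportional to $\iint\beta^{2}(s,s')f_s(s)f_s(s')\,ds\,ds'$ and $M_{[Q^\alpha_{00}]}(t,t')$ proportional to $a(t,t')\iint\beta(s,s')b(s,s')f_s(s)f_s(s')\,ds\,ds'$, with the same marginal density pre-factor in both. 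Substituting these limits into \eqref{eq:generic_smoother_solution} and invoking the uniform Slutsky argument used at the end of the proof of Proposition~\ref{prop:estim_mu} --- whose denominator is bounded away from zero by $\iint[\beta(s,s')]^2\,ds\,ds'>0$ together with the positivity of $f_{(t,s)}$ in \ref{assumption:B2} --- delivers \eqref{eq:proof_smoother_A_lemma_asymptotics}. The main technical obstacle I anticipate is simultaneous control of the three remainder families inside $R_{nmm'}$, namely $(\widehat\mu-\mu)(X-\mu)$, $(\widehat\mu-\mu)\varepsilon$, and $\varepsilon\,\varepsilon'$, at the sharp rate while preserving the pair-indexed $m$-dependence structure required by Corollary~\ref{corollary:hansen}.
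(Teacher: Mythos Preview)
Your proposal is correct and follows essentially the same route as the paper: express $\widehat{\alpha}$ through the explicit local-linear formula, re-index the pairs $(n,m,m')$ into an $m$-dependent stationary sequence, invoke the Hansen-type Corollary~\ref{corollary:hansen} for each $S^{\alpha}_{pq}$ and $Q^{\alpha}_{pq}$, and finish with the uniform Slutsky argument.

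The one noteworthy difference is in the choice of ``oracle'' raw covariance. The paper defines $\tilde{G}_{nmm'} = (Y_{nm}-\mu)(Y_{nm'}-\mu)$, keeping the measurement errors inside and letting them vanish automatically when computing $\E[Z_k^Q \mid U_k,V_k]$ (since $m\neq m'$ and the $\varepsilon$'s are centred and independent of everything else). You instead set $G^0_{nmm'} = (X_n-\mu)(X_n-\mu)$ and push all noise cross-terms into the remainder $R_{nmm'}$, which then requires separate applications of Corollary~\ref{corollary:hansen} to each of the $\varepsilon(X-\mu)$ and $\varepsilon\varepsilon'$ families. Both routes lead to the same rate, but the paper's choice is more economical: a single conditional-expectation calculation handles the noise, whereas your decomposition multiplies the number of remainder terms to control. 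Your treatment of the $(\widehat\mu-\mu)$ contributions---bounding by $\sup|\widehat\mu-\mu|$ times an $O_\Prob(1)$ kernel average---is in fact spelled out more carefully than the paper's terse ``uniformly across all $n,m,m'$'' claim.
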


\begin{proof}
We start the proof by the analysis of a simplified case. Suppose that we know the mean surface $\mu(\cdot,\cdot)$ and define the raw covariances accordingly
\begin{equation}\label{eq:proofs_raw_covariance_tilde}
\tilde{G}_{nmm'} = (Y_{nm} - \mu(t_{nm},s_{nm}))( Y_{nm'} - \mu(t_{nm'},s_{nm'})).    
\end{equation}
Construct the smoother of the set \eqref{eq:proof_smoother_A} where we replace $G_{nmm'}$ by $\tilde{G}_{nmm'}$.
Such smoother, denoted as $\tilde{\alpha}(\cdot,\cdot)$, is given by the formula
\begin{align}
    \label{eq:proof_smoother_A_lemma_asymptotics_formula_smoother}
    \tilde{\alpha}(t,t') &= \Psi^{\tilde{\alpha}}_1(t,t') / \left[ \Psi^{\tilde{\alpha}}_2(t,t') \right]^{-1},\\
    \nonumber
    \Psi^{\tilde{\alpha}}_1(t,t') &= \Phi_1(t,t') Q^{\tilde{\alpha}}_{00}(t,t') - \Phi_2(t,t')Q^{\tilde{\alpha}}_{10}(t,t') - \Phi_3(t,t')Q^{\tilde{\alpha}}_{01}(t,t'),\\
    \nonumber
    \Psi^{\tilde{\alpha}}_2(t,t') &= \Phi_1(t,t') S^{\tilde{\alpha}}_{00}(t,t') - \Phi_2(t,t')S^{\tilde{\alpha}}_{10}(t,t') - \Phi_3(t,t')S^{\tilde{\alpha}}_{01}(t,t'),\\
    \nonumber
    \Phi^{\tilde{\alpha}}_1(t,t') &= S^{\tilde{\alpha}}_{20}(t,t') S^{\tilde{\alpha}}_{02}(t,t') - \left[ S^{\tilde{\alpha}}_{11}(t,t') \right]^2,\\
    \nonumber
    \Phi^{\tilde{\alpha}}_2(t,t') &= S^{\tilde{\alpha}}_{10}(t,t') S^{\tilde{\alpha}}_{02}(t,t') - S^{\tilde{\alpha}}_{01}(t,t')S^{\tilde{\alpha}}_{11}(t,t'),\\
    \nonumber
    \Phi^{\tilde{\alpha}}_3(t,t') &= S^{\tilde{\alpha}}_{01}(t,t') S^{\tilde{\alpha}}_{20}(t,t') - S^{\tilde{\alpha}}_{10}(t,t')S^{\tilde{\alpha}}_{11}(t,t'),\\
\nonumber
S_{pq}^{\tilde{\alpha}}(t,t') &=
\frac{1}{\sum_{n=1}^N M_n(M_n-1)}
\sum_{n=1}^N \sum_{ \substack{m,m'=1 \\ m\neq m'} }^{M_n}
\left( \frac{t-t_{nm}}{h_{a}} \right)^p
\left( \frac{t'-t_{nm'}}{h_{a}} \right)^q \cdot\\
\label{eq:proof_smoother_A_lemma_asymptotics_formula_Spq}
&\qquad\cdot
\frac{1}{h_a^2}
K\left( \frac{t-t_{nm}}{h_a} \right)
K\left( \frac{t'-t_{nm'}}{h_a} \right)
\left[\beta(s_{nm},s_{nm'}) \right]^2,
\qquad 0\leq p+q\leq 2,\\
\nonumber
Q_{pq}^{\tilde{\alpha}}(t,t') &=
\frac{1}{\sum_{n=1}^N M_n(M_n-1)}
\sum_{n=1}^N \sum_{ \substack{m,m'=1 \\ m\neq m'} }^{M_n}
\left( \frac{t-t_{nm}}{h_{a}} \right)^p
\left( \frac{t'-t_{nm'}}{h_{a}} \right)^q\cdot\\
\label{eq:proof_smoother_A_lemma_asymptotics_formula_Qpq}
&\qquad\cdot
\frac{1}{h_a^2}
K\left( \frac{t-t_{nm}}{h_a} \right)
K\left( \frac{t'-t_{nm'}}{h_a} \right)
\beta(s_{nm},s_{nm'})
\tilde{G}_{nmm'},
\qquad 0\leq p+q\leq 1.
\end{align}

The asymptotic behaviour of $Q_{pq}^{\tilde{\alpha}}$ and $S_{pq}^{\tilde{\alpha}}(t,t')$ is assessed similarly as the surface smoother in Lemmas~\ref{lemma:smoother_mu_Qpq_asymptotics} and \ref{lemma:smoother_mu_Spq_asymptotics}. We proceed again with defining the $[M^{max}]^2$-dependent sequences $\{(U_k,V_k,Z_k)\}_{k=1}^\infty$ by putting the pairs $(t_{nm},t_{nm'}),\, m,m'=1,\dots, M_n,\,m\neq m',\, n=1,\dots,N$ into the sequence $\{ (U_k, V_k) \}_{k=1}^\infty$ such that we set $U_k = t_{nm}$ and $V_k = t_{nm'}$ while starting from the data from the first surface ($n=1$), then proceeding with $n=2$ etc.

For the asymptotics of $Q_{pq}^{\tilde{\alpha}}$, define
\begin{equation}\label{eq:proof_smoother_A_lemma_asymptotics_Q_def_Z}
Z^Q_k = \beta(s_{nm},s_{nm'}) \tilde{G}_{nmm'}    
\end{equation}
where $s_{nm},s_{nm'},\tilde{G}_{nmm'}$ correspond to that sparse observation which was assigned to $(U_k,V_k)$. We use Theorem~\ref{thm:hansen_special_case} and Corollary~\ref{corollary:hansen} thanks to the assumptions \ref{assumption:B7} -- \ref{assumption:B9}. Moreover, we verify
\begin{align*}
\Ez{ Z^Q_k \middle| U_k = t, V_k = t' } &= \Ez{ \beta(s_{nm},s_{nm'}) \left( X_n(t,s_{nm}) + \varepsilon_{nm} \right)\left( X_n(t,s_{nm'}) + \varepsilon_{nm'} ) \right) \middle| U_k = t, V_k = t' } \\
&= \Ez{ \beta(s_{nm},s_{nm'}) a(t,t')b(s_{nm},s_{nm'}) }\\
&= a(t,t') \iint \beta(s,s') b(s,s') f_s(s)f_s(s') dsds',
\end{align*}
where $f_s(s) = \int f_{(t,s)}(t,s)dt$ is the marginal density of the random position $s_{11}$.
Likewise, denote $f_t(t) = \int f_{(t,s)}(t,s)ds$ is the marginal density of the random position $t_{11}$.
Then
$$
    Q_{00}^{\tilde{\alpha}}(t,t') = a(t,t') f_t(t)f_t(t') \iint \beta(s,s') b(s,s')  f_s(s)f_s(s') dsds' + \bigop{ \sqrt{\frac{\log N}{N h_a^2}} + h_a^2 },
$$
uniformly in $(t,t')\in[0,1]^2$ as $N\to\infty$ and where $c_r = \int x^r K(x) dx, \,r\in\mathbb{N}$. 

Similarly to the analysis above we assess the asymptotics of $S_{pq}^{\tilde{\alpha}}$. Instead of the definition in \eqref{eq:proof_smoother_A_lemma_asymptotics_Q_def_Z} we set here $Z^S_k = \left[ \beta(s_{nm},s_{nm'}) \right]^2$ and calculate
$$ \Ez{ Z^S_k \middle| U_k = t, V_k = t' } = \iint  [\beta(s,s')]^2 f_s(s)f_s(s') dsds'. $$
Hence
$$ S_{pq}^{\tilde{\alpha}}(t,t') = c_p c_q f_t(t)f_t(t') \iint  [\beta(s,s')]^2 f_s(s)f_s(s') dsds' + \bigop{ \sqrt{\frac{\log N}{N h_a^2}} + h_a^2 }$$
uniformly in $(t,t')\in[0,1]^2$ as $N\to\infty$.
By the assumptions on the kernel $\beta(\cdot,\cdot)$, the uniform Slutsky theorem, the formula \eqref{eq:proof_smoother_A_lemma_asymptotics_formula_smoother}, and the fact that $h_a\asymp h$ (assumption \ref{assumption:B9}):
$$ \tilde{\alpha}(t,t') =  a(t,t')
\frac{ \iint \beta(s,s') b(s,s')  f_s(s)f_s(s') dsds' }{ \iint \left[\beta(s,s')\right]^2  f_s(s)f_s(s') dsds' } +  \bigop{ \sqrt{\frac{\log N}{N h^2}} + h^2 }$$
uniformly in $(t,t')\in[0,1]^2$ as $N\to\infty$.

It remains to comment on the difference $\tilde{\alpha}(t,t')$ and $\widehat{\alpha}(t,t')$, i.e. when the empirical mean $\widehat{\mu}(\cdot,\cdot)$ is supplied into the raw covariances $G_{nmm'}$. Since
\begin{align*}
G_{nmm'} &= \tilde{G}_{nmm'}\\
&+\left( \mu(t_{nm},s_{nm}) - \widehat\mu(t_{nm},s_{nm}) \right)\left( Y_{nm'} - \widehat{\mu}(t_{nm'},s_{nm'}) \right)\\
&+\left( \mu(t_{nm'},s_{nm'}) - \widehat\mu(t_{nm'},s_{nm'}) \right)\left( Y_{nm} - \widehat{\mu}(t_{nm},s_{nm}) \right)\\
&+\left( \mu(t_{nm},s_{nm}) - \widehat\mu(t_{nm},s_{nm}) \right)\left( \mu(t_{nm'},s_{nm'}) - \widehat\mu(t_{nm'},s_{nm'}) \right)
\end{align*}
we conclude by Proposition~\ref{prop:estim_mu} that
\begin{equation}\label{eq:proofs_equivalence_G_G_tilde}
G_{nmm'} = \tilde{G}_{nmm'} + \bigop{ \sqrt{\frac{\log N}{N h^2}} + h^2 }    
\end{equation}
uniformly across all $n,m,m'$ as $N\to\infty$.
Therefore the claim \eqref{eq:proof_smoother_A_lemma_asymptotics} follows.
\end{proof}

\begin{corollary}\label{corollary:proof_smoother_A}
Assume the conditions \ref{assumption:A1}, \ref{assumption:B1} -- \ref{assumption:B9} and
consider a random kernel $\widehat{\beta}(\cdot,\cdot)$ such that
\begin{equation}\label{eq:corollary_smoother_A_beta_hat}
    \widehat{\beta}(s,s') = \beta(s,s') + \bigop{ \sqrt{\frac{\log N}{N h^2}} + h^2 }
\end{equation}
uniformly in $(s,s')\in[0,1]^2$ as $N\to\infty$, where $\beta(s,s'),\,s,s'\in[0,1]$ is a deterministic twice continuously differentiable kernel such that $\iint [\beta(s,s')]^2 dsds'>0$.
Then the smoother $\widehat{\alpha}(t,t'),\,t,t'\in[0,1]$, obtained by smoothing the set
\[
\left\{ \left(t_{nm},t_{nm'},\frac{G_{nmm'}}{\widehat{\beta}(s_{nm},s_{nm'})}\right) \; \Bigg| \; m,m'=1,\ldots,M_n, \, m \neq m',\; n=1\ldots,N \right\}
\]
using weights $\{ \widehat{\beta}^2(s_{nm},s_{nm'}) \}$
admits the same asymptotics as in the previous lemma:
\[
\widehat{\alpha}(t,t') =  a(t,t')
\frac{ \iint \beta(s,s') b(s,s')  f_s(s)f_s(s') dsds' }{ \iint \left[\beta(s,s')\right]^2  f_s(s)f_s(s') dsds' } +  \bigop{ \sqrt{\frac{\log N}{N h^2}} + h^2 }
\]
uniformly in $(t,t')\in[0,1]^2$ as $N\to\infty$.
\end{corollary}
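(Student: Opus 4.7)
The plan is to reduce the statement to Lemma~\ref{lemma:proof_smoother_A_lemma_asymptotics} by showing that replacing the deterministic kernel $\beta$ by the random $\widehat{\beta}$ in every ingredient of the explicit smoother formula perturbs each ingredient by at most $O_{\Prob}(r_N)$, where $r_N = \sqrt{(\log N)/(N h^2)} + h^2$. First, I would write down the quantities $S^{\widehat{\alpha}}_{pq}$ and $Q^{\widehat{\alpha}}_{pq}$ analogous to \eqref{eq:proof_smoother_A_lemma_asymptotics_formula_Spq}--\eqref{eq:proof_smoother_A_lemma_asymptotics_formula_Qpq}, observing that the weighted smoother with weights $\widehat{\beta}^2(s_{nm},s_{nm'})$ applied to the scatter set $\{(t_{nm},t_{nm'},G_{nmm'}/\widehat{\beta}(s_{nm},s_{nm'}))\}$ contributes $\widehat{\beta}^2(s_{nm},s_{nm'})$ to $S^{\widehat{\alpha}}_{pq}$ and $\widehat{\beta}(s_{nm},s_{nm'}) G_{nmm'}$ to $Q^{\widehat{\alpha}}_{pq}$, so formally no division is ever carried out and the convention about removing zero denominators is automatically consistent.

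Next, I would exploit \eqref{eq:corollary_smoother_A_beta_hat} together with the boundedness of the continuous $\beta$ on $[0,1]^2$ to write, uniformly over all observed pairs $(s_{nm},s_{nm'})$,
\begin{align*}
\widehat{\beta}(s_{nm},s_{nm'}) &= \beta(s_{nm},s_{nm'}) + \bigop{r_N},\\
\widehat{\beta}^2(s_{nm},s_{nm'}) &= \beta^2(s_{nm},s_{nm'}) + \bigop{r_N},
\end{align*}
where the second identity uses $\widehat{\beta}^2-\beta^2=(\widehat{\beta}-\beta)(\widehat{\beta}+\beta)$ and the fact that $\widehat{\beta}$ is eventually uniformly bounded. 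Substituting these into the kernel averages, the contribution of the remainder $O_{\Prob}(r_N)$ is bounded uniformly in $(t,t')$ by $r_N$ times empirical averages of the form appearing in $S^{\tilde{\alpha}}_{pq}$ and $Q^{\tilde{\alpha}}_{pq}$ (up to replacing $\beta$ by $1$ or $|G_{nmm'}|$), each of which is $O_{\Prob}(1)$ uniformly by Theorem~\ref{thm:hansen_special_case} and Corollary~\ref{corollary:hansen} applied exactly as in the proof of Lemma~\ref{lemma:proof_smoother_A_lemma_asymptotics} (the moment bounds in \ref{assumption:B8} and the control \eqref{eq:proofs_equivalence_G_G_tilde} on $G_{nmm'}$ cover the factor $|G_{nmm'}|$). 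Hence
\[
S^{\widehat{\alpha}}_{pq}(t,t') = S^{\tilde{\alpha}}_{pq}(t,t') + \bigop{r_N},\qquad Q^{\widehat{\alpha}}_{pq}(t,t') = Q^{\tilde{\alpha}}_{pq}(t,t') + \bigop{r_N},
\]
uniformly in $(t,t')\in[0,1]^2$.

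Finally, I would plug these into the explicit smoother formula \eqref{eq:proof_smoother_A_lemma_asymptotics_formula_smoother}. The denominator $\Psi^{\tilde{\alpha}}_2$ converges, by the proof of Lemma~\ref{lemma:proof_smoother_A_lemma_asymptotics}, to a nonzero deterministic limit proportional to $[f_t(t)f_t(t')]^3 \iint \beta^2 f_s(s)f_s(s') dsds'$, which is strictly positive by the positivity of $f_{(t,s)}$ in \ref{assumption:B2} and by the hypothesis on $\beta$. A uniform application of Slutsky's theorem then yields $\widehat{\alpha}(t,t') = \tilde{\alpha}(t,t') + O_{\Prob}(r_N)$ uniformly, and invoking \eqref{eq:proof_smoother_A_lemma_asymptotics} gives the claimed expansion. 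The only genuine subtlety is to verify that the remainder contributions involving $|G_{nmm'}|$ in $Q^{\widehat{\alpha}}_{pq}$ are indeed $O_{\Prob}(r_N)$; this is the main technical step, but it is resolved by the same $m$-dependent kernel-average argument already used for $Q^{\tilde{\alpha}}_{pq}$, with $Z_k=|G_{nmm'}|$ replacing $\beta\,\tilde{G}_{nmm'}$ and the moment condition \ref{assumption:B8} ensuring the required integrability.
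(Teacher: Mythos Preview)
Your proposal is correct and follows essentially the same route as the paper: define the $S^{\widehat{\alpha}}_{pq}$ and $Q^{\widehat{\alpha}}_{pq}$ analogues, use the uniform closeness \eqref{eq:corollary_smoother_A_beta_hat} to show they differ from $S^{\tilde{\alpha}}_{pq}$ and $Q^{\tilde{\alpha}}_{pq}$ by $O_{\Prob}(r_N)$, and then invoke the explicit smoother formula and Lemma~\ref{lemma:proof_smoother_A_lemma_asymptotics}. In fact you supply more justification than the paper does for the key step---the paper simply asserts the $O_{\Prob}(r_N)$ bound on the differences, whereas you correctly identify and resolve the only nontrivial point, namely that the remainder in $Q^{\widehat{\alpha}}_{pq}$ involves a kernel average of $|G_{nmm'}|$ which must be shown $O_{\Prob}(1)$ via the same $m$-dependent argument under \ref{assumption:B8}.
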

\begin{proof}
The proof of this corollary follows the same lines as the proof of Lemma~\ref{lemma:proof_smoother_A_lemma_asymptotics}. We define 
\begin{align*}
S_{pq}^{\widehat{\alpha}}(t,t') &=
\frac{1}{\sum_{n=1}^N M_n(M_n-1)}
\sum_{n=1}^N \sum_{ \substack{m,m'=1 \\ m\neq m'} }^{M_n}
\left( \frac{t-t_{nm}}{h_{a}} \right)^p
\left( \frac{t'-t_{nm'}}{h_{a}} \right)^q\cdot\\
&\qquad\cdot
\frac{1}{h_a^2}
K\left( \frac{t-t_{nm}}{h_a} \right)
K\left( \frac{t'-t_{nm'}}{h_a} \right)
\left[\widehat{\beta}(s_{nm},s_{nm'}) \right]^2,
\qquad 0\leq p+q\leq 2,\\
Q_{pq}^{\widehat{\alpha}}(t,t') &=
\frac{1}{\sum_{n=1}^N M_n(M_n-1)}
\sum_{n=1}^N \sum_{ \substack{m,m'=1 \\ m\neq m'} }^{M_n}
\left( \frac{t-t_{nm}}{h_{a}} \right)^p
\left( \frac{t'-t_{nm'}}{h_{a}} \right)^q\cdot\\
&\qquad\cdot
\frac{1}{h_a^2}
K\left( \frac{t-t_{nm}}{h_a} \right)
K\left( \frac{t'-t_{nm'}}{h_a} \right)
\widehat{\beta}(s_{nm},s_{nm'})
\tilde{G}_{nmm'},
\qquad 0\leq p+q\leq 1,
\end{align*}
as analogues of \eqref{eq:proof_smoother_A_lemma_asymptotics_formula_Spq} and \eqref{eq:proof_smoother_A_lemma_asymptotics_formula_Qpq}.
Thanks to the assumption \eqref{eq:corollary_smoother_A_beta_hat}, the difference in asymptotically negligible
\begin{align*}
    S_{pq}^{\widehat{\alpha}}(t,t') &= S_{pq}^{\tilde{\alpha}}(t,t') + \bigop{ \sqrt{\frac{\log N}{N h^2}} + h^2 },\\
    Q_{pq}^{\widehat{\alpha}}(t,t') &= Q_{pq}^{\tilde{\alpha}}(t,t') + \bigop{ \sqrt{\frac{\log N}{N h^2}} + h^2 },
\end{align*}
uniformly in $(t,t')\in[0,1]^2$ as $N\to\infty$.
The rest of the proof follows from the proof of Lemma~\ref{lemma:proof_smoother_A_lemma_asymptotics}.
\end{proof}

We are now ready to prove our main result.
\begin{proof}[Proof of Theorem~\ref{thm:estim_a_b}]
The proof is now quite a simple application of Lemma~\ref{lemma:proof_smoother_A_lemma_asymptotics} and Corollary~\ref{corollary:proof_smoother_A}. First note that even though these results are formulated for the estimation of the covariance kernel $a(\cdot,\cdot)$, they can be likewise applied for the estimation of $b(\cdot,\cdot)$ due to their symmetry in the separable model \eqref{eq:separability}.

The estimator $\widehat{a}_0(\cdot,\cdot)$ is realised by smoothing the raw covariances $G_{nmm'}$ without any weights, thus corresponding to the initial guess $\beta(s,s')\equiv 1,\,s,s'\in[0,1]$. Therefore its asymptotic behaviour is by Lemma~\ref{lemma:proof_smoother_A_lemma_asymptotics}:
$$ \widehat{a}_0(t,t') = \Theta a(t,t')  + \bigop{  \sqrt{\frac{\log N}{N h^2}} + h^2 }$$
uniformly in $(t,t')\in[0,1]^2$ as $N\to\infty$ where $\Theta$ is defined in \eqref{eq:assumption:B10_equation}.

Now, applying  Corollary~\ref{corollary:proof_smoother_A} three times and by the assumption~\ref{assumption:B10} we obtain
\begin{align*}
    \widehat{b}_0(s,s') &=\frac{1}{\Theta}  b(s,s') + \bigop{  \sqrt{\frac{\log N}{N h^2}} + h^2 },\\
    \widehat{a}(t,t')   &= \Theta  a(t,t')+ \bigop{  \sqrt{\frac{\log N}{N h^2}} + h^2 },\\
    \widehat{a}(s,s')   &=\frac{1}{\Theta} b(s,s')  + \bigop{  \sqrt{\frac{\log N}{N h^2}} + h^2 },
\end{align*}
uniformly in $(t,t')\in[0,1]^2$ or $(s,s')\in[0,1]^2$, as $N\to\infty$.
\end{proof}

\begin{proof}[Proof of Corollary~\ref{corollary:estimation_complete_covariance_structure}]
This corollary follows directly by applying Theorem~\ref{thm:estim_a_b} onto the right hand side of:
$$ \left| \widehat{a}(t,t')\widehat{b}(s,s') - a(t,t')b(s,s') \right|
\leq
\left| \widehat{a}(t,t') - \Theta \widehat{a}(t,t') \right|
\left| \widehat{b}(s,s') \right|
+ \left| a(t,t') \right| \Theta
\left| \widehat{b}(s,s') - \frac{1}{\Theta} b(s,s') \right|.
$$
\end{proof}


\subsection{Proof of Proposition~\ref{prop:estim_sigma}}

The noise level estimator asymptotic behaviour is treated analogously to previous estimators of the mean surface $\mu(\cdot,\cdot)$ and the covariance kernels $a(\cdot,\cdot)$ and $b(\cdot,\cdot)$.

The estimator $\widehat{V}(t,s)$ is formed by smoothing the raw covariances $G_{nmm}$ agains  $(t_{nm},s_{nm})$. for $m=1,\dots,M_n,\,n=1,\dots,N$. Therefore we form the sequence of vectors $\{(U_k,V_k,Z_k)\}_{k=1}^\infty$ by putting $\{t_{nm}\}$, $\{s_{nm}\}$ and $\{\tilde{G}_{nmm}\}$ (defined in \eqref{eq:proofs_raw_covariance_tilde}) in order such that
\begin{align*}
    \{U_1,U_2,\dots \} &= \{ t_{11}, t_{12}, \dots, t_{1m_1}, t_{21}, \dots, t_{2m_2}, t_{31}, \dots \}, \\
    \{V_1,V_2,\dots \} &= \{ s_{11}, s_{12}, \dots, s_{1m_1}, s_{21}, \dots, s_{2m_2}, s_{31}, \dots \}, \\
    \{Z_1,Z_2,\dots \} &= \{ \tilde{G}_{111}, \tilde{G}_{122}, \dots, \tilde{G}_{1m_1m_1}, \tilde{G}_{211}, \dots, \tilde{G}_{2m_2m_2}, \tilde{G}_{311}, \dots \}.
\end{align*}
By following the steps of the proof of Lemma~\ref{lemma:smoother_mu_Qpq_asymptotics} or Lemma~\ref{lemma:proof_smoother_A_lemma_asymptotics}. 
Verifying $ \Ez{ Z_1 \middle| U_1=t,V_1=s } = a(t,t)b(s,s) + \sigma^2 $ for $t,s\in[0,1]$, the asymptotic equivalence \eqref{eq:proofs_equivalence_G_G_tilde}, and the assumption \ref{assumption:B11} implies
$$
\widehat{V}(t,s) = a(t,t)b(s,s) + \sigma^2 + \bigop{ \sqrt{\frac{\log N}{N h^2}} + h^2  }
$$
uniformly in $(t,s)\in[0,1]^2$ as $N\to\infty$.

This fact, together with Corollary~\ref{corollary:estimation_complete_covariance_structure} reduced to $(t,s,t',s')=(t,s,t,s)$ implies the statement of Proposition~\ref{prop:estim_sigma}.


\subsection{Proof of Theorem \ref{thm:prediction_bands}}

By Proposition~\ref{prop:estim_mu} and Theorem~\ref{thm:estim_a_b}, the model components $\mu,a,b,\sigma^2$ are estimated consistently.
Moreover, consider all the following statements conditionally on $ \mathbb{Y}^{new}$.
Consequently,
$$ \widehat{\var}( \mathbb{Y}^{new}) \stackrel{def}{=} \left( \widehat{a}(t^{new}_m,t^{new}_{m'}) \widehat{b}(s^{new}_m,s^{new}_{m'}) + \widehat{\sigma}^2 \mathds{1}_{[m=m']}  \right)_{m,m'=1}^{M^{new}} \stackrel{\Prob}{\to} \var( \mathbb{Y}^{new}),\qquad\text{as}\quad N\to\infty, $$
in the matrix space $\mathbb{R}^{M^{new}\times M^{new}}$. Due to continuity of the matrix inversion and the fact that $\var( \mathbb{Y}^{new})$ is positive definite,
$$ \left[\widehat{\var}( \mathbb{Y}^{new})\right]^{-1} \stackrel{\Prob}{\to}
\left[ \var( \mathbb{Y}^{new})\right]^{-1}
\qquad\text{as}\quad N\to\infty.$$
Moreover
$$
\widehat{\cov}( X^{new}(t,s), \mathbb{Y}^{new} ) \stackrel{def}{=} \left( \widehat{a}(t,t^{new}_m) \widehat{b}(s,s^{new}_m) \right)_{m=1}^{M^{new}}
= \cov( X^{new}(t,s), \mathbb{Y}^{new} ) + \smallop{1},\qquad\text{as}\quad N\to\infty,
$$
in the supremum norm over $(t,s)\in[0,1]^2$.
Therefore, together with the consistency of $\widehat{\mu}$ in the supremum norm, we conclude the statement \eqref{eq:thm:prediction_bands_statement_1}.

Assuming \ref{assumption:A2}, we conclude by the similar steps as above that
\begin{equation}\label{eq:proof_prediction_convergence_of_cov}
\sup_{(t,s,t',s')\in[0,1]^4} \left|
\widehat{\cov}\left( X^{new}(t,s),X^{new}(t',s') \vert \mathbb{Y}^{new} \right)
-
\cov \left( X^{new}(t,s),X^{new}(t',s') \vert \mathbb{Y}^{new} \right)
\right| = \smallop{1},    
\end{equation}
as $N\to\infty$.

Fixing $(t,s)\in[0,1]^2$ we have the conditional distribution given $\mathbb{Y}^{new}$
$$ \frac{ \Pi( X^{new}(t,s) \vert \mathbb{Y}^{new} ) - X^{new}(t,s)}{ \var \left( X^{new}(t,s)\vert \mathbb{Y}^{new} \right)} \sim
N\left( 0, 1 \right)
$$
where the denominator is positive for all $t,s\in[0,1]$.
Therefore
$$ \Prob\left( 
\left| \Pi(X^{new}(t,s) \vert \mathbb{Y}^{new} ) - X^{new}(t,s) \right|  
\leq u_{1-\alpha}
\sqrt{ \var\left( X^{new}(t,s) \vert \mathbb{Y}^{new} \right) }
\,\middle|\, \mathbb{Y}^{new} \right)
 = 1-\alpha.$$
Now, since
$$
\frac{ \widehat{\Pi}( X^{new}(t,s) \vert \mathbb{Y}^{new} ) - X^{new}(t,s) }{
\sqrt{ \widehat{\var} \left( X^{new}(t,s)\vert \mathbb{Y}^{new} \right) }
} \stackrel{d}{\to} N(0,1), \qquad\text{as}\quad N\to\infty.
$$
where $d$ denotes the convergence in distribution and therefore
$$ \lim_{N\to\infty} \Prob\left( 
\left| \widehat\Pi(X^{new}(t,s) \vert \mathbb{Y}^{new} ) - X^{new}(t,s) \right|  
\leq u_{1-\alpha}
\sqrt{ \widehat\var\left( X^{new}(t,s) \vert \mathbb{Y}^{new} \right) }
\,\middle|\, \mathbb{Y}^{new} \right)
 = 1-\alpha.$$

It remains to justify the asymptotic coverage of the simultaneous confidence band.
By the constriction of the simultaneous confidence bands \`a la \cite{degras2011simultaneous}, reviewed in Section~\ref{sec:prediction}, we have
$$ \Prob\left(
\sup_{(t,s)\in[0,1]^2}
\left| \Pi(X^{new}(t,s) \vert \mathbb{Y}^{new} ) - X^{new}(t,s) \right|  
\leq z_{1-\alpha}
\sqrt{ \var\left( X^{new}(t,s)\vert \mathbb{Y}^{new} \right) }
\,\middle|\, \mathbb{Y}^{new} \right)
= 1-\alpha$$
where the quantile $z_{1-\alpha}$ is calculated from the law of $W = \sup_{(t,s)\in[0,1]^2} |Z(t,s)|$ where the true (non-estimated) correlations are used: $\cov( Z(t,s), Z(t',s') ) = \corr\left( X^{new}(t,s),X^{new}(t',s') \vert \mathbb{Y}^{new} \right)$ with $t,t',s,s'\in[0,1]$. Recall that we denote the empirical analogue of this law as $\widehat{W}$ already defined in \eqref{eq:prediction_law_hat_W}.

In other words
$$
\sup_{(t,s)\in[0,1]^2} \left|
\frac{ \Pi( X^{new}(t,s) \vert \mathbb{Y}^{new} ) - X^{new}(t,s) }{
\sqrt{ \var \left( X^{new}(t,s)\vert \mathbb{Y}^{new} \right) }
} \right|
\sim W, \qquad\text{conditionally on}\,\mathbb{Y}^{new},
$$
and therefore
$$
\sup_{(t,s)\in[0,1]^2} \left|
\frac{ \widehat{\Pi}( X^{new}(t,s) \vert \mathbb{Y}^{new} ) - X^{new}(t,s) }{
\sqrt{ \widehat{\var} \left( X^{new}(t,s)\vert \mathbb{Y}^{new} \right) }
} \right|
\stackrel{d}{\to} W, \qquad\text{as}\quad N\to\infty,\quad\text{conditionally on}\,\mathbb{Y}^{new}.    
$$
Now, if $c_n(\cdot,\cdot,\cdot,\cdot) \to c(\cdot,\cdot,\cdot,\cdot)$ uniformly (cf. \eqref{eq:proof_prediction_convergence_of_cov}), then $N(0,c_n)\stackrel{d}{\to} N(0,c) $. Therefore $\widehat{W}\stackrel{d}{\to} W$ and thus $\widehat{z}_{1-\alpha}\to z_{1-\alpha}$ where $\widehat{z}_{1-\alpha}$ and $z_{1-\alpha}$ are the quantiles calculated from the law of $\widehat{W}$ and $W$ respectively. 
We conclude the proof by observing
\begin{multline*}
\Prob\left(
\sup_{(t,s)\in[0,1]^2}
\frac{\left| \widehat{\Pi}(X^{new}(t,s) \vert \mathbb{Y}^{new} ) - X^{new}(t,s) \right| }{\sqrt{ \widehat{\var}\left( X^{new}(t,s)\vert \mathbb{Y}^{new} \right) }}
\leq \widehat{z}_{1-\alpha}
\,\middle|\, \mathbb{Y}^{new} \right)
\\=
\Prob\left(
\sup_{(t,s)\in[0,1]^2}
\frac{\left| \widehat{\Pi}(X^{new}(t,s) \vert \mathbb{Y}^{new} ) - X^{new}(t,s) \right| }{\sqrt{ \widehat{\var}\left( X^{new}(t,s)\vert \mathbb{Y}^{new} \right) }}
\frac{z_{1-\alpha}}{\widehat{z}_{1-\alpha}}
\leq z_{1-\alpha}
\,\middle|\, \mathbb{Y}^{new} \right)
\to 1-\alpha,\qquad\text{as}\quad N\to\infty.
\end{multline*}

\bibliographystyle{imsart-nameyear}
\bibliography{biblio}

\begin{thebibliography}{41}

\bibitem[\protect\citeauthoryear{Aston, Pigoli and Tavakoli}{2017}]{aston2017}
\begin{barticle}[author]
\bauthor{\bsnm{Aston},~\bfnm{John~AD}\binits{J.~A.}},
  \bauthor{\bsnm{Pigoli},~\bfnm{Davide}\binits{D.}} \AND
  \bauthor{\bsnm{Tavakoli},~\bfnm{Shahin}\binits{S.}}
(\byear{2017}).
\btitle{Tests for separability in nonparametric covariance operators of random
  surfaces}.
\bjournal{The Annals of Statistics}
\bvolume{45}
\bpages{1431--1461}.
\end{barticle}
\endbibitem

\bibitem[\protect\citeauthoryear{Bagchi and Dette}{2017}]{bagchi2017}
\begin{barticle}[author]
\bauthor{\bsnm{Bagchi},~\bfnm{Pramita}\binits{P.}} \AND
  \bauthor{\bsnm{Dette},~\bfnm{Holger}\binits{H.}}
(\byear{2017}).
\btitle{A test for separability in covariance operators of random surfaces}.
\bjournal{Annals of Statistics (to appear), available at
  \texttt{arXiv:1710.08388}}.
\end{barticle}
\endbibitem

\bibitem[\protect\citeauthoryear{Black and Scholes}{1973}]{black1973pricing}
\begin{barticle}[author]
\bauthor{\bsnm{Black},~\bfnm{Fischer}\binits{F.}} \AND
  \bauthor{\bsnm{Scholes},~\bfnm{Myron}\binits{M.}}
(\byear{1973}).
\btitle{The pricing of options and corporate liabilities}.
\bjournal{Journal of political economy}
\bvolume{81}
\bpages{637--654}.
\end{barticle}
\endbibitem

\bibitem[\protect\citeauthoryear{Chen et~al.}{2020}]{fdapace2020}
\begin{bmisc}[author]
\bauthor{\bsnm{Chen},~\bfnm{Y}\binits{Y.}},
  \bauthor{\bsnm{Carroll},~\bfnm{C}\binits{C.}},
  \bauthor{\bsnm{Dai},~\bfnm{X}\binits{X.}},
  \bauthor{\bsnm{Fan},~\bfnm{J}\binits{J.}},
  \bauthor{\bsnm{Hadjipantelis},~\bfnm{PZ}\binits{P.}},
  \bauthor{\bsnm{Han},~\bfnm{K}\binits{K.}},
  \bauthor{\bsnm{Ji},~\bfnm{H}\binits{H.}},
  \bauthor{\bsnm{M{\"u}ller},~\bfnm{HG}\binits{H.}} \AND
  \bauthor{\bsnm{Wang},~\bfnm{JL}\binits{J.}}
(\byear{2020}).
\btitle{fdapace: Functional Data Analysis and Empirical Dynamics}.
\end{bmisc}
\endbibitem

\bibitem[\protect\citeauthoryear{Constantinou, Kokoszka and
  Reimherr}{2017}]{constantinou2017}
\begin{barticle}[author]
\bauthor{\bsnm{Constantinou},~\bfnm{Panayiotis}\binits{P.}},
  \bauthor{\bsnm{Kokoszka},~\bfnm{Piotr}\binits{P.}} \AND
  \bauthor{\bsnm{Reimherr},~\bfnm{Matthew}\binits{M.}}
(\byear{2017}).
\btitle{Testing separability of space-time functional processes}.
\bjournal{Biometrika}
\bvolume{104}
\bpages{425--437}.
\end{barticle}
\endbibitem

\bibitem[\protect\citeauthoryear{Cont and Da~Fonseca}{2002}]{cont2002}
\begin{barticle}[author]
\bauthor{\bsnm{Cont},~\bfnm{Rama}\binits{R.}} \AND
  \bauthor{\bsnm{Da~Fonseca},~\bfnm{Jos{\'e}}\binits{J.}}
(\byear{2002}).
\btitle{Dynamics of implied volatility surfaces}.
\bjournal{Quantitative finance}
\bvolume{2}
\bpages{45--60}.
\end{barticle}
\endbibitem

\bibitem[\protect\citeauthoryear{Cox, Ross and
  Rubinstein}{1979}]{cox1979option}
\begin{barticle}[author]
\bauthor{\bsnm{Cox},~\bfnm{John~C}\binits{J.~C.}},
  \bauthor{\bsnm{Ross},~\bfnm{Stephen~A}\binits{S.~A.}} \AND
  \bauthor{\bsnm{Rubinstein},~\bfnm{Mark}\binits{M.}}
(\byear{1979}).
\btitle{Option pricing: A simplified approach}.
\bjournal{Journal of financial Economics}
\bvolume{7}
\bpages{229--263}.
\end{barticle}
\endbibitem

\bibitem[\protect\citeauthoryear{De~Iaco, Myers and Posa}{2002}]{deIaco2002}
\begin{barticle}[author]
\bauthor{\bsnm{De~Iaco},~\bfnm{S}\binits{S.}},
  \bauthor{\bsnm{Myers},~\bfnm{Donald~E}\binits{D.~E.}} \AND
  \bauthor{\bsnm{Posa},~\bfnm{D}\binits{D.}}
(\byear{2002}).
\btitle{Nonseparable space-time covariance models: some parametric families}.
\bjournal{Mathematical Geology}
\bvolume{34}
\bpages{23--42}.
\end{barticle}
\endbibitem

\bibitem[\protect\citeauthoryear{Degras}{2011}]{degras2011simultaneous}
\begin{barticle}[author]
\bauthor{\bsnm{Degras},~\bfnm{David~A}\binits{D.~A.}}
(\byear{2011}).
\btitle{Simultaneous confidence bands for nonparametric regression with
  functional data}.
\bjournal{Statistica Sinica}
\bpages{1735--1765}.
\end{barticle}
\endbibitem

\bibitem[\protect\citeauthoryear{Dette, Dierickx and Kutta}{2020}]{dette2020}
\begin{barticle}[author]
\bauthor{\bsnm{Dette},~\bfnm{Holger}\binits{H.}},
  \bauthor{\bsnm{Dierickx},~\bfnm{Gauthier}\binits{G.}} \AND
  \bauthor{\bsnm{Kutta},~\bfnm{Tim}\binits{T.}}
(\byear{2020}).
\btitle{Quantifying deviations from separability in space-time functional
  processes}.
\bjournal{arXiv preprint arXiv:2003.12126}.
\end{barticle}
\endbibitem

\bibitem[\protect\citeauthoryear{Fan and Gijbels}{1996}]{fan1996}
\begin{bbook}[author]
\bauthor{\bsnm{Fan},~\bfnm{Jianqing}\binits{J.}} \AND
  \bauthor{\bsnm{Gijbels},~\bfnm{Irene}\binits{I.}}
(\byear{1996}).
\btitle{Local polynomial modelling and its applications: monographs on
  statistics and applied probability 66}
\bvolume{66}.
\bpublisher{CRC Press}.
\end{bbook}
\endbibitem

\bibitem[\protect\citeauthoryear{Fengler}{2009}]{fengler2009arbitrage}
\begin{barticle}[author]
\bauthor{\bsnm{Fengler},~\bfnm{Matthias~R}\binits{M.~R.}}
(\byear{2009}).
\btitle{Arbitrage-free smoothing of the implied volatility surface}.
\bjournal{Quantitative Finance}
\bvolume{9}
\bpages{417--428}.
\end{barticle}
\endbibitem

\bibitem[\protect\citeauthoryear{Genton}{2007}]{genton2007}
\begin{barticle}[author]
\bauthor{\bsnm{Genton},~\bfnm{Marc~G}\binits{M.~G.}}
(\byear{2007}).
\btitle{Separable approximations of space-time covariance matrices}.
\bjournal{Environmetrics: The official journal of the International
  Environmetrics Society}
\bvolume{18}
\bpages{681--695}.
\end{barticle}
\endbibitem

\bibitem[\protect\citeauthoryear{Gneiting}{2002}]{gneiting2002}
\begin{barticle}[author]
\bauthor{\bsnm{Gneiting},~\bfnm{Tilmann}\binits{T.}}
(\byear{2002}).
\btitle{Nonseparable, stationary covariance functions for space--time data}.
\bjournal{Journal of the American Statistical Association}
\bvolume{97}
\bpages{590--600}.
\end{barticle}
\endbibitem

\bibitem[\protect\citeauthoryear{Gneiting, Genton and
  Guttorp}{2006}]{gneiting2006}
\begin{binbook}[author]
\bauthor{\bsnm{Gneiting},~\bfnm{Tilmann}\binits{T.}},
  \bauthor{\bsnm{Genton},~\bfnm{Marc~G}\binits{M.~G.}} \AND
  \bauthor{\bsnm{Guttorp},~\bfnm{Peter}\binits{P.}}
(\byear{2006}).
\btitle{Geostatistical Space-Time Models, Stationarity, Separability, and Full
  Symmetry}
In \bbooktitle{Statistical Methods for Spatio-Temporal Systems}
\bpages{151-175}.
\bpublisher{Chapman and Hall/CRC}.
\bdoi{doi:10.1201/9781420011050.ch4}
\end{binbook}
\endbibitem

\bibitem[\protect\citeauthoryear{Hall, Müller and
  Wang}{2006}]{hall2006properties}
\begin{barticle}[author]
\bauthor{\bsnm{Hall},~\bfnm{Peter}\binits{P.}},
  \bauthor{\bsnm{Müller},~\bfnm{Hans-Georg}\binits{H.-G.}} \AND
  \bauthor{\bsnm{Wang},~\bfnm{Jane-Ling}\binits{J.-L.}}
(\byear{2006}).
\btitle{Properties of Principal Component Methods for Functional and
  Longitudinal Data Analysis}.
\bjournal{The Annals of Statistics}
\bvolume{34}
\bpages{1493--1517}.
\end{barticle}
\endbibitem

\bibitem[\protect\citeauthoryear{Hansen}{2008}]{hansen2008uniform}
\begin{barticle}[author]
\bauthor{\bsnm{Hansen},~\bfnm{Bruce~E}\binits{B.~E.}}
(\byear{2008}).
\btitle{Uniform convergence rates for kernel estimation with dependent data}.
\bjournal{Econometric Theory}
\bvolume{24}
\bpages{726--748}.
\end{barticle}
\endbibitem

\bibitem[\protect\citeauthoryear{Hayfield and Racine}{2008}]{racine2008}
\begin{barticle}[author]
\bauthor{\bsnm{Hayfield},~\bfnm{Tristen}\binits{T.}} \AND
  \bauthor{\bsnm{Racine},~\bfnm{Jeffrey~S}\binits{J.~S.}}
(\byear{2008}).
\btitle{Nonparametric econometrics: The np package}.
\bjournal{Journal of statistical software}
\bvolume{27}
\bpages{1--32}.
\end{barticle}
\endbibitem

\bibitem[\protect\citeauthoryear{Henderson}{1975}]{henderson1975best}
\begin{barticle}[author]
\bauthor{\bsnm{Henderson},~\bfnm{Charles~R}\binits{C.~R.}}
(\byear{1975}).
\btitle{Best linear unbiased estimation and prediction under a selection
  model}.
\bjournal{Biometrics}
\bpages{423--447}.
\end{barticle}
\endbibitem

\bibitem[\protect\citeauthoryear{Hull}{2006}]{hull2006options}
\begin{bbook}[author]
\bauthor{\bsnm{Hull},~\bfnm{J.}\binits{J.}}
(\byear{2006}).
\btitle{Options, Futures, and Other Derivatives}.
\bseries{Pearson International edition}.
\bpublisher{Pearson/Prentice Hall}.
\end{bbook}
\endbibitem

\bibitem[\protect\citeauthoryear{Kearney, Cummins and
  Murphy}{2018}]{kearney2018}
\begin{barticle}[author]
\bauthor{\bsnm{Kearney},~\bfnm{Fearghal}\binits{F.}},
  \bauthor{\bsnm{Cummins},~\bfnm{Mark}\binits{M.}} \AND
  \bauthor{\bsnm{Murphy},~\bfnm{Finbarr}\binits{F.}}
(\byear{2018}).
\btitle{Forecasting implied volatility in foreign exchange markets: A
  functional time series approach}.
\bjournal{The European Journal of Finance}
\bvolume{24}
\bpages{1--18}.
\end{barticle}
\endbibitem

\bibitem[\protect\citeauthoryear{Langren{\'e} and Warin}{2019}]{langrene2019}
\begin{barticle}[author]
\bauthor{\bsnm{Langren{\'e}},~\bfnm{Nicolas}\binits{N.}} \AND
  \bauthor{\bsnm{Warin},~\bfnm{Xavier}\binits{X.}}
(\byear{2019}).
\btitle{Fast and stable multivariate kernel density estimation by fast sum
  updating}.
\bjournal{Journal of Computational and Graphical Statistics}
\bvolume{28}
\bpages{596--608}.
\end{barticle}
\endbibitem

\bibitem[\protect\citeauthoryear{Li et~al.}{2010}]{li2010}
\begin{barticle}[author]
\bauthor{\bsnm{Li},~\bfnm{Yehua}\binits{Y.}},
  \bauthor{\bsnm{Hsing},~\bfnm{Tailen}\binits{T.}} \betal{et~al.}
(\byear{2010}).
\btitle{Uniform convergence rates for nonparametric regression and principal
  component analysis in functional/longitudinal data}.
\bjournal{The Annals of Statistics}
\bvolume{38}
\bpages{3321--3351}.
\end{barticle}
\endbibitem

\bibitem[\protect\citeauthoryear{Lopez et~al.}{2020}]{lopez2020}
\begin{barticle}[author]
\bauthor{\bsnm{Lopez},~\bfnm{Grisel}\binits{G.}},
  \bauthor{\bsnm{Eisenberg},~\bfnm{Daniel~P}\binits{D.~P.}},
  \bauthor{\bsnm{Gregory},~\bfnm{Michael~D}\binits{M.~D.}},
  \bauthor{\bsnm{Ianni},~\bfnm{Angela~M}\binits{A.~M.}},
  \bauthor{\bsnm{Grogans},~\bfnm{Shannon~E}\binits{S.~E.}},
  \bauthor{\bsnm{Masdeu},~\bfnm{Joseph~C}\binits{J.~C.}},
  \bauthor{\bsnm{Kim},~\bfnm{Jenny}\binits{J.}},
  \bauthor{\bsnm{Groden},~\bfnm{Catherine}\binits{C.}},
  \bauthor{\bsnm{Sidransky},~\bfnm{Ellen}\binits{E.}} \AND
  \bauthor{\bsnm{Berman},~\bfnm{Karen~F}\binits{K.~F.}}
(\byear{2020}).
\btitle{Longitudinal positron emission tomography of dopamine synthesis in
  subjects with GBA1 mutations}.
\bjournal{Annals of neurology}
\bvolume{87}
\bpages{652--657}.
\end{barticle}
\endbibitem

\bibitem[\protect\citeauthoryear{Masak and Panaretos}{2019}]{masak2019}
\begin{barticle}[author]
\bauthor{\bsnm{Masak},~\bfnm{Tomas}\binits{T.}} \AND
  \bauthor{\bsnm{Panaretos},~\bfnm{Victor~M}\binits{V.~M.}}
(\byear{2019}).
\btitle{Spatiotemporal Covariance Estimation by Shifted Partial Tracing}.
\bjournal{arXiv preprint arXiv:1912.12870}.
\end{barticle}
\endbibitem

\bibitem[\protect\citeauthoryear{Masak, Sarkar and Panaretos}{2020}]{masak2020}
\begin{barticle}[author]
\bauthor{\bsnm{Masak},~\bfnm{Tomas}\binits{T.}},
  \bauthor{\bsnm{Sarkar},~\bfnm{Soham}\binits{S.}} \AND
  \bauthor{\bsnm{Panaretos},~\bfnm{Victor~M}\binits{V.~M.}}
(\byear{2020}).
\btitle{Principal Separable Component Analysis via the Partial Inner Product}.
\bjournal{arXiv preprint arXiv:2007.12175}.
\end{barticle}
\endbibitem

\bibitem[\protect\citeauthoryear{Merton}{1973}]{merton1973theory}
\begin{barticle}[author]
\bauthor{\bsnm{Merton},~\bfnm{Robert~C}\binits{R.~C.}}
(\byear{1973}).
\btitle{Theory of rational option pricing}.
\bjournal{The Bell Journal of economics and management science}
\bpages{141--183}.
\end{barticle}
\endbibitem

\bibitem[\protect\citeauthoryear{Pigoli et~al.}{2018}]{pigoli2018}
\begin{barticle}[author]
\bauthor{\bsnm{Pigoli},~\bfnm{Davide}\binits{D.}},
  \bauthor{\bsnm{Hadjipantelis},~\bfnm{Pantelis~Z}\binits{P.~Z.}},
  \bauthor{\bsnm{Coleman},~\bfnm{John~S}\binits{J.~S.}} \AND
  \bauthor{\bsnm{Aston},~\bfnm{John~AD}\binits{J.~A.}}
(\byear{2018}).
\btitle{The statistical analysis of acoustic phonetic data: exploring
  differences between spoken Romance languages}.
\bjournal{Journal of the Royal Statistical Society: Series C (Applied
  Statistics)}
\bvolume{67}
\bpages{1103--1145}.
\end{barticle}
\endbibitem

\bibitem[\protect\citeauthoryear{Ramsay and Silverman}{2005}]{ramsay2005}
\begin{bbook}[author]
\bauthor{\bsnm{Ramsay},~\bfnm{J.~O.}\binits{J.~O.}} \AND
  \bauthor{\bsnm{Silverman},~\bfnm{B.~W.}\binits{B.~W.}}
(\byear{2005}).
\btitle{Functional data analysis}.
\bpublisher{Springer, New York}.
\end{bbook}
\endbibitem

\bibitem[\protect\citeauthoryear{Ramsay and
  Silverman}{2007}]{ramsay2007applied}
\begin{bbook}[author]
\bauthor{\bsnm{Ramsay},~\bfnm{J.~O.}\binits{J.~O.}} \AND
  \bauthor{\bsnm{Silverman},~\bfnm{B.~W.}\binits{B.~W.}}
(\byear{2007}).
\btitle{Applied Functional Data Analysis: Methods and Case Studies}.
\bseries{Springer Series in Statistics}.
\bpublisher{Springer New York}.
\end{bbook}
\endbibitem

\bibitem[\protect\citeauthoryear{Raykar, Duraiswami and
  Zhao}{2010}]{raykar2010}
\begin{barticle}[author]
\bauthor{\bsnm{Raykar},~\bfnm{Vikas~C}\binits{V.~C.}},
  \bauthor{\bsnm{Duraiswami},~\bfnm{Ramani}\binits{R.}} \AND
  \bauthor{\bsnm{Zhao},~\bfnm{Linda~H}\binits{L.~H.}}
(\byear{2010}).
\btitle{Fast computation of kernel estimators}.
\bjournal{Journal of Computational and Graphical Statistics}
\bvolume{19}
\bpages{205--220}.
\end{barticle}
\endbibitem

\bibitem[\protect\citeauthoryear{Rougier}{2017}]{rougier2017}
\begin{barticle}[author]
\bauthor{\bsnm{Rougier},~\bfnm{Jonathan}\binits{J.}}
(\byear{2017}).
\btitle{A representation theorem for stochastic processes with separable
  covariance functions, and its implications for emulation}.
\bjournal{arXiv preprint arXiv:1702.05599}.
\end{barticle}
\endbibitem

\bibitem[\protect\citeauthoryear{Rub{\'\i}n and Panaretos}{2020}]{rubin2020}
\begin{barticle}[author]
\bauthor{\bsnm{Rub{\'\i}n},~\bfnm{Tom{\'a}{\v{s}}}\binits{T.}} \AND
  \bauthor{\bsnm{Panaretos},~\bfnm{Victor~M}\binits{V.~M.}}
(\byear{2020}).
\btitle{Sparsely observed functional time series: Estimation and prediction}.
\bjournal{Electronic Journal of Statistics}
\bvolume{14}
\bpages{1137--1210}.
\end{barticle}
\endbibitem

\bibitem[\protect\citeauthoryear{Silverman}{1982}]{silverman1982}
\begin{barticle}[author]
\bauthor{\bsnm{Silverman},~\bfnm{Bernhard~W}\binits{B.~W.}}
(\byear{1982}).
\btitle{Algorithm AS 176: Kernel density estimation using the fast Fourier
  transform}.
\bjournal{Journal of the Royal Statistical Society. Series C (Applied
  Statistics)}
\bvolume{31}
\bpages{93--99}.
\end{barticle}
\endbibitem

\bibitem[\protect\citeauthoryear{{R Core Team}}{2020}]{R}
\begin{bmanual}[author]
\bauthor{\bsnm{{R Core Team}}}
(\byear{2020}).
\btitle{R: A Language and Environment for Statistical Computing}
\bpublisher{R Foundation for Statistical Computing},
\baddress{Vienna, Austria}.
\end{bmanual}
\endbibitem

\bibitem[\protect\citeauthoryear{Wang, Wong and Zhang}{2020}]{wang2020}
\begin{barticle}[author]
\bauthor{\bsnm{Wang},~\bfnm{Jiayi}\binits{J.}},
  \bauthor{\bsnm{Wong},~\bfnm{Raymond~KW}\binits{R.~K.}} \AND
  \bauthor{\bsnm{Zhang},~\bfnm{Xiaoke}\binits{X.}}
(\byear{2020}).
\btitle{Low-Rank Covariance Function Estimation for Multidimensional Functional
  Data}.
\bjournal{Journal of the American Statistical Association}
\bpages{1--14}.
\end{barticle}
\endbibitem

\bibitem[\protect\citeauthoryear{Yao, M{\"u}ller and Wang}{2005}]{yao2005}
\begin{barticle}[author]
\bauthor{\bsnm{Yao},~\bfnm{Fang}\binits{F.}},
  \bauthor{\bsnm{M{\"u}ller},~\bfnm{Hans-Georg}\binits{H.-G.}} \AND
  \bauthor{\bsnm{Wang},~\bfnm{Jane-Ling}\binits{J.-L.}}
(\byear{2005}).
\btitle{Functional data analysis for sparse longitudinal data}.
\bjournal{Journal of the American Statistical Association}
\bvolume{100}
\bpages{577--590}.
\end{barticle}
\endbibitem

\bibitem[\protect\citeauthoryear{Yarger, Stoev and Hsing}{2020}]{yarger2020}
\begin{barticle}[author]
\bauthor{\bsnm{Yarger},~\bfnm{Drew}\binits{D.}},
  \bauthor{\bsnm{Stoev},~\bfnm{Stilian}\binits{S.}} \AND
  \bauthor{\bsnm{Hsing},~\bfnm{Tailen}\binits{T.}}
(\byear{2020}).
\btitle{A functional-data approach to the Argo data}.
\bjournal{arXiv preprint arXiv:2006.05020}.
\end{barticle}
\endbibitem

\bibitem[\protect\citeauthoryear{}{}]{DeltaNeutral}
\begin{bmisc}[author]
\btitle{{DeltaNeutral} Historical Options Prices free sample}.
\bhowpublished{\url{https://www.historicaloptiondata.com/content/free-data}}.
\end{bmisc}
\endbibitem

\bibitem[\protect\citeauthoryear{Zhang and Li}{2020}]{zhang2020}
\begin{barticle}[author]
\bauthor{\bsnm{Zhang},~\bfnm{Haozhe}\binits{H.}} \AND
  \bauthor{\bsnm{Li},~\bfnm{Yehua}\binits{Y.}}
(\byear{2020}).
\btitle{Unified Principal Component Analysis for Sparse and Dense Functional
  Data under Spatial Dependency}.
\bjournal{arXiv preprint arXiv:2006.13489}.
\end{barticle}
\endbibitem

\bibitem[\protect\citeauthoryear{Zhang and Wang}{2016}]{zhang2016}
\begin{barticle}[author]
\bauthor{\bsnm{Zhang},~\bfnm{Xiaoke}\binits{X.}} \AND
  \bauthor{\bsnm{Wang},~\bfnm{Jane-Ling}\binits{J.-L.}}
(\byear{2016}).
\btitle{From sparse to dense functional data and beyond}.
\bjournal{The Annals of Statistics}
\bvolume{44}
\bpages{2281--2321}.
\end{barticle}
\endbibitem

\end{thebibliography}

\end{document}